\DeclarePairedDelimiter\ceil{\lceil}{\rceil}
\newcommand{\m}[1]{\mathcal{#1}}
\newcommand{\cl}{\mbox{\footnotesize \sf CL}}
\newcommand{\D}{{\Delta}}
\newcommand{\ID}{\mbox{\sc{id}}}
\newcommand{\id}{\mbox{\footnotesize $\ID$}}
\newcommand{\collect}{\textsf{collect}}
\newcommand{\proj}{\mathsf{proj}}
\newcommand{\TS}{{\normalfont\textsf{test\&set}}}
\newtheorem{definition}{Definition}
\newtheorem{lemma}{Lemma}
\newtheorem{theorem}{Theorem}
\newtheorem{corollary}{Corollary}
\newtheorem{claim}{Claim}
\begin{document}

\title[A Speedup Theorem for Asynchronous Computation]{A Speedup Theorem for Asynchronous Computation with Applications to Consensus and Approximate Agreement}

\author{Pierre Fraigniaud}
\thanks{Pierre Fraigniaud received support from the ANR project DUCAT.}
\email{pierre.fraigniaud@irif.fr}
\affiliation{%
	\institution{IRIF,  CNRS and Universit\'e Paris Cit\'e}
	\city{Paris}
	\country{France}
}

\author{Ami Paz}
\email{ami.paz@lisn.fr}
\affiliation{%
	\institution{LISN,  CNRS and Universit\'e Paris Saclay}
	\city{Paris}
	\country{France}
}

\author{Sergio Rajsbaum}
\email{rajsbaum@im.unam.mx}
\thanks{Part of this work was done while Sergio Rajsbaum was visiting LIX at \'Ecole Polytechnique, and IRIF at Universit\'e Paris Cit\'e, France. He received additional support from UNAM-PAPIIT IN106520}
\affiliation{%
	\institution{Instituto de Matem\'aticas, UNAM}
	\city{Mexico City}         
	\country{Mexico}   
}

\begin{abstract}

We study two fundamental problems of distributed computing,  \emph{consensus} and  \emph{approximate agreement},  through a novel approach for proving lower bounds and impossibility results, that we call the \emph{asynchronous speedup theorem}. 
For a given $n$-process task 
$\Pi$ and a given computational model $M$, we define a new task, called the \emph{closure} of~$\Pi$ with respect to~$M$. 
The asynchronous speedup theorem states that
if  a task $\Pi$ is solvable in $t\geq 1$ rounds in~$M$, then its closure w.r.t.~$M$ is solvable in $t-1$ rounds in~$M$.
We prove this theorem for iterated models, as long as the model allows solo executions. 
We illustrate the power of our asynchronous speedup theorem by providing a new proof of the wait-free impossibility of consensus using read/write registers, 
and a new proof of the wait-free impossibility of solving consensus using registers and test\&set objects for $n>2$. 
The proof is merely by showing that, in each case, the closure of consensus (w.r.t. the corresponding model) is consensus itself.
Our main application is the study of the power of  additional objects, namely test\&set and binary consensus, for wait-free solving approximate agreement \emph{faster}.
By analyzing the closure of approximate agreement w.r.t. each of the two models, we show that while these objects are more powerful than read/write registers from the computability perspective,
they are not more powerful as far as helping solving approximate agreement faster is concerned. 
\end{abstract}

\maketitle

\section{Introduction}

One of the  goals of the theory of distributed computing is to understand what can be achieved
by a given distributed system. A great deal of research has been devoted to develop techniques for showing that certain problems cannot be solved, or to prove lower bounds on the resources needed to solve some given problem~\cite{AttiyaEllenBook,FichR03hundred,Lynch89hundred}. 
Much of this research is motivated by the need to understand the solvability of \emph{consensus} and \emph{approximate agreement}, 
two of the central tasks of distributed computing. 
In particular, this line of research lead to the introduction of bivalency~\cite{FLP85} and indistinguishability~\cite{AR20indist} arguments, and later on to the discovery of the close connection between distributed computing and algebraic topology, based on the study of topological invariants of the protocol complex representing a given model of computation~\cite{bookHerlihyKR2013}.
In this paper we propose a new  approach to prove unsolvability results and time lower bounds.

\subsection{Asynchronous Speedup Theorem}

We introduce the \emph{asynchronous speedup theorem} for shared-memory computing, 
inspired by the recent speedup technique~\cite{Balliu0HORS19,Brandt19} for the \textsf{LOCAL} model of synchronous failure-free distributed computing in networks.
Specifically, we consider the iterated shared-memory model, where the memory is organized in arrays $M_r$, $r\geq 1$, of $n$ single-writer/multiple-readers (SWMR) registers (one per process), and each round $r$ of the protocol is executed on the array~$M_r$.
Recall that a task $\Pi=(\m{I},\m{O},\Delta)$ is defined by a collection~$\m{I}$ of possible input configurations, a collection~$\m{O}$ of possible output configurations, and a map $\Delta:\m{I}\to 2^{\m{O}}$ specifying an  input-output relation. 
A subset of processes starting with inputs as specified by a configuration $\sigma\in \m{I}$ must output values that define a configuration $\tau\in\Delta (\sigma)$. 

For our speedup theorem, we define the \emph{closure} of a task~$\Pi$ with respect to a computational model~$M$,  denoted $\cl_M(\Pi)=(\m{I},\m{O}',\Delta')$, which
is supposed to be a slightly easier version of~$\Pi$ under~$M$. 
Recall that an algorithm solves a task $\Pi$  in $t$~rounds if, whenever a subset of processes 
start with inputs as specified by a configuration $\sigma\in \m{I}$, after executing $t$ rounds, they must output values that define a configuration $\tau\in\Delta (\sigma)$. 
Our asynchronous speedup theorem states that, for every $t\geq 1$, if a task $\Pi$ is solvable in $t$ rounds in~$M$, then $\cl_M(\Pi)$ is solvable in $t-1$ rounds in~$M$. 
Therefore, to establish a lower bound on the number~$t$ of rounds for solving~$\Pi$, it suffice to show that
iterating the closure operation for $t$ times starting with~$\Pi$, results in a task that is not solvable in zero rounds. 
Alternately, impossibility results follow when the closure operator can be performed infinitely many times without ever reaching a task solvable in zero rounds, which is typically the case when the closure of a task is the task itself.

Notice that the notion of ``$\cl_M(\Pi)$ being easier to solve than~$\Pi$'' is with respect to the model~$M$ of computation.
We consider models obtained by extending the standard shared-memory model with objects more powerful than SWMR registers---the task $\cl_M(\Pi)$ then depends on the nature of these objects.
Intuitively, $\cl_M(\Pi)$ is supposed to be easier than $\Pi$ because it allows, on each given input configuration $\sigma\in\m{I}$ to output some  values that are illegal according to~$\Delta$.
To prevent $\cl_M(\Pi)$ from being \emph{too} easy in~$M$, it is however required that such an output configuration $\tau$ is ``close to a legal solution'', in the following sense: there must exists a $1$-round algorithm which, starting with the configuration $\tau$ as input, computes legal outputs for $\sigma$, that is, outputs a configuration in~$\Delta(\sigma)$.
If there is such a $1$-round algorithm for~$M$, then the configuration $\tau$ is added to the collection $\Delta'(\sigma)$ of legal output configurations
for $\sigma$ in the closure task~$\cl_M(\sigma)$.

\subsection{Our Results}

We  present our asynchronous speedup theorem for any iterated asynchronous model with SWMR registers, as long as this model allows \emph{solo} executions (Theorem~\ref{thm:speedup}). Such models include the usual 
{wait-free} iterated immediate snapshot (IIS), iterated snapshot, and iterated collect models. 
They also include  \emph{affine models}~\cite{KRH18} that allow solo executions, such
as the  $k$-concurrency model~\cite{genUniver2011}. Roughly, an affine model is obtained from the IIS model by
removing some executions. Iterated models with solo executions that \emph{add} some executions have
also been studied, such as the $d$-solo models~\cite{soloModels17}, where $d$ processes may run solo in the same execution. 
Note that lower bounds for iterated models, as the ones we prove here, immediately apply to the non-iterated variants of the same models---in a non-iterated model, the adversary can still choose to have only executions where all processes finish their $r$-th step before taking their $(r+1)$-th step.

We illustrate the use of our asynchronous speedup theorem by revisiting the renowned wait-free impossibility of solving consensus.
We show how to derive this impossibility result  in a simple
way from our speedup theorem  (Corollary~\ref{cor:impossibility-consensus}):
to prove the impossibility of consensus, it is enough to establish that the closure of consensus is consensus itself.
Next, we present an extension of our speedup theorem for models as above augmented with objects more powerful than SWMR registers (Theorem~\ref{thm:speedup-extended}).
We illustrate its  usefulness  with a new proof of the impossibility for consensus among $n\geq 3$ processes when  the processes have access to a  \TS\/ object at each round (Corollary~\ref{cor: consensus impossibility with ts}). 

After considering solvability, we move 
our attention to the question of how much objects in
Herlihy's consensus hierarchy can help solving a problem \emph{faster}.
We present two examples of objects that are more powerful than read/write registers,
namely \TS\/ and binary consensus, 
and yet not more powerful as far as helping solving approximate agreement faster is concerned. 
Our proofs, based on our asynchronous speedup theorem, 
provide novel information about the topology of such extended models.
More specifically, in the second part of the paper, we illustrate the power of the asynchronous speedup theorem by establishing new lower bounds for $\epsilon$-approximate agreement, in the wait-free IIS model with access to \TS\/ or binary consensus objects. 
We first show that the number of rounds required for solving $\epsilon$-approximate agreement in the wait-free IIS model among $n\geq 3$ processes is at least $\ceil{\log_2(1/\epsilon)}$ (Corollary~\ref{lem:approx-agree-without-BB}),
which is tight, 
and then we show that augmenting the model with \TS{} objects does not reduce the time complexity at all (Theorem~\ref{theo:approx-agree-with-ts}),
assuming that each of the processes calls all the \TS{} objects.
These results are obtained by simply showing that the closure of $\epsilon$-approximate agreement in the considered model is the $(2\epsilon)$-approximate agreement task. 
For $n=2$, $\epsilon$-approximate agreement can be solved in a single round in the wait-free IIS model with \TS,
while in the wait-free IIS model without additional objects
we show that $\epsilon$-approximate agreement among $n=2$ processes requires $\lceil\log_3 1/\epsilon\rceil$ rounds
(Corollary~\ref{lem:approx-agree-without-BB}).
This bound is proved by showing that the closure of $\epsilon$-approximate agreement in this case is the $(3\epsilon)$-approximate agreement task,
and is also known to be tight.

Finally, we show that the number of rounds required by $n\geq 3$ processes
for solving the $\epsilon$-approximate agreement task in the wait-free IIS model augmented with a binary consensus object
is at least $\min\{\ceil{\log_2 1/\epsilon},\ceil{\log_2 n}-1\}$
(Theorem~\ref{thm:approx-agree-with-bc}),
when we assume that each input to the binary consensus object depends solely on the process~ID and of the round number. 
In this context, the closure of  $\epsilon$-approximate agreement  is not necessarily $(2\epsilon)$-approximate agreement. However, we can show that, for a set of at least half of the processes, the closure of  $\epsilon$-approximate agreement when only the processes in this set participate is $(2\epsilon)$-approximate agreement. By iterating this argument, we get that if there is a $t$-round algorithm for solving $\epsilon$-approximate agreement  among $n$ processes, then there is a $t-r$ round algorithm for solving $(2^r\epsilon)$-approximate agreement among $n/2^r$ processes. We lose an additive factor $-1$ for technical reasons related to 
the fact that $\epsilon$-approximate agreement is solvable when there are only $n=2$ processes.
Note that our lower bound is essentially tight as there exists an algorithm for solving $\epsilon$-approximate agreement in $\ceil{\log_2 1/\epsilon}$ 
rounds in the wait-free IIS model~\cite{AspnesH90}, and there exists an algorithm for solving multi-value consensus in~$\ceil{\log_2 n}$ 
rounds~\cite{MRTronel2000,Raynal18bookFT} in the wait-free IIS model  with  binary consensus objects.

\subsection{Related Work}

The computability and complexity of fault-prone, asynchronous consensus and approximate agreement is a well studied topic. 
Fischer, Lynch, and Paterson (FLP)~\cite{FLP85} showed that consensus 
cannot be solved in a message passing system even if only one process may fail by halting.
Later, Herlihy~\cite{Herlihy91}
studied  wait-free computation  in the shared-memory model, and proved the consensus impossibility in this model.
Motivated by the need to analyze problems other than consensus,
Herlihy and Shavit~\cite{HerlihyS99} introduced the use of 
combinatorial topology
for reasoning about computations in asynchronous, wait-free distributed systems,
in which any number of processes may crash. 
They proved the asynchronous
computability theorem, which states a  topological condition that is necessary and sufficient for a
task to be wait-free solvable using read/write registers.
Approaches similar to the one of Herlihy and Shavit were presented in the same time by Saks and Zaharoglou~\cite{SaksZ93} and by Borowsky and Gafni~\cite{BorowskyG93},
but the formalization of Herlihy and Shavit is the one commonly used today.
To conclude, the two techniques that have been extensively used 
for proving impossibility results for consensus and approximate agreement are based on either  valency~\cite{FLP85} or connectivity  analysis~\cite{HerlihyS99}, in addition to indirect, reduction~\cite{Herlihy91}
or simulation techniques e.g.~\cite{CHJT05}.
Our technique of computing the closure of consensus provides a novel way for proving such impossibility results.
Regarding complexity issues, Hoest and Shavit~\cite{HoestS06} showed how the topological approach
can also be used for deriving time lower bounds
for wait-free algorithms in the IIS model, as we do.
Attiya, Castañeda, Herlihy and Paz~\cite{AttiyaCHP19} used similar techniques for proving time upper bounds, and 
Ellen, Gelashvili, and Zhu~\cite{EllenGZ18} used them for proving space lower bounds.
One aim of our paper is to investigate further the use of combinatorial topology in the context of asynchronous computing.

Soon after the celebrated FLP consensus impossibility result,  Dolev, Lynch, Pinter, Stark, and Weihl \cite{approx86} introduced the approximate agreement task, 
in which each process has a real valued input, and processes must agree on
output values within the range of inputs, that are at most $\epsilon$ apart. 
Aspnes and Herlihy~\cite{AspnesH90}
proved a lower bound of  $\ceil{\log_3 1/\epsilon}$ rounds and an upper bound of $\ceil{\log_2 1/\epsilon}$ rounds
in the wait-free SWMR model, a gap that was later closed by 
Hoest and Shavit~\cite{HoestS06}, who proved a tight bound on the number of rounds needed to solve approximate
agreement: $\ceil{\log_3 1/\epsilon}$ for $n=2$, and $\ceil{\log_2 1/\epsilon}$ for $n\geq 3$.
A similar result was recently proved in the context of dynamic networks
by F\"{u}gger, Nowak and Schwarz~\cite{FNS21},
where the reader can find a thorough discussion about the literature studying approximate agreement, and about the importance of this problem w.r.t.~applications.
Hoest and Shavit~\cite{HoestS06} prove their
lower bounds using a  global analysis
of the protocol complex, while F\"{u}gger,  Nowak and Schwarz~\cite{FNS21}
extend the notion of valency used for consensus~\cite{FLP85,MR02} for proving their results.
We establish these lower bounds in a novel way,
by computing the closure of the approximate agreement task.

In addition to proving the wait-free consensus impossibility, Herlihy~\cite{Herlihy91}
also derived a hierarchy of objects such that no object 
at one level  has a wait-free implementation in terms of objects at lower levels,
introducing a remarkable technique of reduction to a consensus protocol. 
There is plenty of work on extending the wait-free IIS model with more powerful
objects, but only from the computability perspective, e.g.~\cite{GafniR10,ImbsRV15}  using simulations, 
or using topology e.g.~\cite{HerlihyR95primer}.
We are not aware of any use of this technique for  complexity results, 
to study as we do, using strong objects to solve approximate agreement faster.

Last, but not least, a goal of this research is to understand the potential extension of  the recent speedup technique~\cite{Balliu0HORS19,Brandt19} from the (synchronous failure-free) \textsf{LOCAL} model to asynchronous models with crash-prone processes. In the \textsf{LOCAL} model, the uncertainly about how a hypothetical protocol
would solve a problem in one round, comes from unknown inputs of far away processes. In contrast,
in the asynchronous setting, the uncertainty is of a very different nature, about
what other processes have read. There have been few recent attempts to approach distributed network computing through the lens of algebraic topology~\cite{CastanedaFPRRT21,FraigniaudP20}. 
An extension of the speedup technique in~\cite{Balliu0HORS19,Brandt19} to arbitrary round-based full-information models has been proposed in~\cite{BastideF21}. This extension applies to wait-free shared-memory computing in particular, but only for $2$~processes. Our approach applies to an arbitrarily large number of processes.

\section{Model}
\label{sec:model}
We consider the standard \emph{read-write shared memory} distributed computing model~\cite{AWbook}, and we adopt a standard framework for modeling distributed computation, by using concepts borrowed from algebraic topology~\cite{bookHerlihyKR2013}. The reader unfamiliar with these notions is referred to Appendix~\ref{app:model} for more details. 

\subsection{Iterated Models}

We consider distributed systems with  $n\geq 2$ asynchronous processes, labeled by distinct integers from~1 to~$n$. Every process~$i$ initially knows its identity~$i$ as well as the total number~$n$ of processes in the system. We focus on generic round-based algorithms of the following form.
All read and write operations are atomic. Specifically, Algorithm~\ref{alg:generic-model} is for the \emph{iterated} model, in which the shared-memory is organized in arrays $M_r$, $r\geq 1$, of $n$ single-writer/multiple-readers registers (one per process), and each round $r$ of the protocol is executed on the array~$M_r$. 
At each round, each process~$i$ updates its so-called \emph{view}~$V_i$ by collecting the views of the other processes. To this end, it uses a \emph{collect} operation, which consists of reading all registers~$M_r[1..n]$ sequentially, in arbitrary order. The initial view of a process is its input, and, after $t$~rounds or write-collect, each process outputs a value which depends of its view after $t$~rounds, i.e., of all the information accumulated by the process during the execution of the algorithm. 

\begin{wrapfigure}[9]{R}{0.32\textwidth}
	\vspace*{-4ex}
	\center
	\resizebox{1\totalheight}{!}{
		\begin{minipage}{0.38\textwidth}
			\begin{algorithm}[H]
				\SetAlgoLined
				\SetKwFor{loop}{\!}{for $r=1$ to $t$}{end}
				$V_i \leftarrow x_i$\\
				\loop{}{
					$\mathbf{write}\;(i,V_i)$ in register $M_r[i]$\\	
					$V_i\leftarrow \mathbf{collect} \; M_r[1..n]$
				}
				$y_i \leftarrow f(i,V_i)$\\
				\textbf{output} $y_i$.
				\caption{\sl Code for process $i\in [n]$ with input $x_i$}
				\label{alg:generic-model}
			\end{algorithm}
		\end{minipage}
	}
\end{wrapfigure}

Two classical stronger variants of the collect instruction are considered in this paper. \emph{Snapshot} corresponds to the scenario in which the collect operation itself is atomic, that is, all registers $M_r[1..n]$ are globally read simultaneously at once, in an atomic manner~\cite{AWbook}. 
\emph{Immediate snapshot}~\cite{BorowskyG93,SaksZ93} corresponds to the scenario in which each write-snapshot sequence of operations is itself atomic, 
that is, not only all registers are read simultaneously at once, but the snapshot occurs ``immediately'' after the write operation:
a set of processes first all perform a write operation, and immediately after they all perform an atomic snapshot
(the order of reads inside the set of processes is not important, and so is the order of snapshot operations).
Our lower bound technique applies to the three wait-free models with write-collect, write-snapshot, or immediate snapshot, and we shall establish our lower bound for approximate agreement using the stronger model, i.e., iterated immediate snapshot, or IIS for short. 

\subsection{Task Solvability}

A \emph{task} for $n$ processes is a triple $\Pi=(\m{I},\m{O},\Delta)$ where $\m{I}$ and $\m{O}$ are $(n-1)$-dimensional  complexes, respectively called \emph{input} and \emph{output} complexes, and $\Delta:\m{I}\to 2^{\m{O}}$ is an input-output specification. Every simplex $\sigma=\{(i,x_i):i\in I\}$ of $\m{I}$, where $I=\ID(\sigma)$ is a non-empty subset of~$[n]$, defines a legal  input state corresponding to the scenario in which, for every $i\in I$, process~$i$ starts with input value~$x_i$. Similarly, every simplex $\tau={\{(i,y_i):i\in I\}}$ of $\m{O}$ defines a legal output state corresponding to the scenario in which, for every $i\in I$, process~$i$ outputs the value~$y_i$. The map~$\Delta$ is an input-output relation specifying, for every input state $\sigma\in\m{I}$, the set of output states~$\tau\in\m{O}$ with $\ID(\tau)=\ID(\sigma)$ that are legal with respect to~$\sigma$. That is, assuming that only the processes in~$\ID(\sigma)$ participate to the computation (the set of participating processes is not known a priori to the processes in~$\sigma$), these processes are allowed to output any simplex $\tau\in\Delta(\sigma)$. 
$\Delta(\sigma)$ can be viewed as a collection of simplexes, each with the same set of IDs as~$\sigma$, or, alternatively, as the complex induced by this set of simplexes.  
It is often assumed that~$\Delta$ is a \emph{carrier} map (that is, for every $\sigma,\sigma'\in\m{I}$, if $\sigma'\subseteq\sigma$ then $\Delta(\sigma')\subseteq\Delta(\sigma)$ as subcomplexes), but in this paper we do not enforce this requirement into the definition of tasks. 

Given a simplex~$\sigma=\{(i,x_i):i\in I\}\in \m{I}$ for some $I=\ID(\sigma)\subseteq [n]$,  one round of communication performed by the processes in~$\ID(\sigma)$ as in Algorithm~\ref{alg:generic-model} results in various possible simplices, depending on the interleaving of the different write and read operations. Such a simplex is of the form $\tau=\{(i,V_i):i\in I\}$, where $V_i=\{(j,x_j):j\in J_i\}$ is the \emph{view} of process~$i$ after one round. 
This view, or equivalently, the set~$J_i\subseteq I$, depends on the communication model~$M$, i.e., write-collect, write-snapshot, or immediate snapshot. 
These simplices induces a complex, denoted by~$\m{P}^{(1)}(\sigma)$, whose structure differs according to the three models considered on this paper. Figure~\ref{fig:3topologies} in Appendix~\ref{app:3topologies} displays an example of these three different complexes, for a 2-dimensional simplex~$\sigma$ (i.e., a system with three processes). In the case of  immediate snapshot, $\m{P}^{(1)}(\sigma)$ is the complex obtained by performing a chromatic subdivision of~$\sigma$~\cite{HerlihyS99}. That is, $\tau=\{(i,V_i):i\in I\}$ is in $\m{P}^{(1)}(\sigma)$ if and only if, for every $i,j\in I$, $j\in V_i$ or $i\in V_j$, and if $j\in V_i$ then $V_j\subseteq V_i$. We denote by $\Xi$ the  map transforming every $\sigma\in \m{I}$ into $\m{P}^{(1)}(\sigma)$, and by $\m{P}^{(1)}=\cup_{\sigma\in\m{I}}\m{P}^{(1)}(\sigma)$. The map~$\Xi$ depends on the communication model~$M$. 

The topological transformation from $\sigma$ to $\m{P}^{(1)}(\sigma)$ can be iterated, yielding the sequence $(\m{P}^{(t)})_{t\geq 0}$ where, for every simplex~${\sigma\in\m{I}}$, $\m{P}^{(0)}(\sigma)=\sigma$, and, for every $t\geq 1$, $\m{P}^{(t)}(\sigma)=\Xi(\m{P}^{(t-1)}(\sigma))$.  For every input complex~$\m{I}$, and every $t\geq 0$, the complex $\m{P}^{(t)}$ is called the \emph{protocol complex} after $t$ rounds. This complex is the union, for all simplices $\sigma\in\m{I}$ of the complex $\m{P}^{(t)}(\sigma)$ resulting from $t$~rounds starting from input~$\sigma$. Note that for any two input simplices~$\sigma=\{(i,x_i):i\in I\}$ and~$\sigma'=\{(i,x'_i):i\in I\}$, the complexes $\m{P}^{(1)}(\sigma)$ and $\m{P}^{(1)}(\sigma')$ are isomorphic. We denote by 
\begin{equation}\label{eq:canonicaliso}
	\chi:\m{P}^{(1)}(\sigma)\to\m{P}^{(1)}(\sigma')
\end{equation}
the canonical isomorphism that maps every vertex $v=(i,\{(j,x_j):j\in J_i\})$ of  $\m{P}^{(1)}(\sigma)$ to the vertex $\chi(v)={(i,\{(j,x'_j):j\in J_i\})}$ of $\m{P}^{(1)}(\sigma')$. Finally, recall that a task $\Pi=(\m{I},\m{O},\Delta)$ is solvable in $t$~rounds if and only if there exists a simplicial map 
$
f:\m{P}^{(t)}\to \m{O}
$
from the $t$-round protocol complex to the output complex that agrees with $\Delta$, i.e., for every $\sigma\in\m{I}$, 
$
f(\m{P}^{(t)}(\sigma))\subseteq \Delta(\sigma).
$
The mapping $f$ is defined by the vertices, i.e., 
for $\sigma=\{(i,V_i):i\in I\}$
we have $f(\sigma)=\{f(i,V_i):i\in I\}$.
The simplicial map $f$ is merely the function $f$ used in Algorithm~\ref{alg:generic-model} for computing the output values of the processes.

\section{Closure and Speedup Theorem}
\label{sec:speedup}

This section describes our main tool for deriving lower bounds and impossibility results, namely an \emph{asynchronous speedup theorem}. This theorem is based on a notion of \emph{closure} defined hereafter. 

\subsection{A Closure of a Task}

In this section, we define the notion of a \emph{closure task} (Figure~\ref{fig:closure}). As it will be shown later, given a computational model~$M$, and given a task $\Pi$, the closure of $\Pi$ w.r.t.~$M$, denoted by $\cl_M(\Pi)$, is a simpler task in the sense that if $\Pi$ is solvable in $t$ rounds in model~$M$, then $\cl_M(\Pi)$ is solvable in $t-1$ rounds in model~$M$.
In order to define the closure of a task, we first define another task, called \emph{local task} (Figure~\ref{fig:TaskDelta1}), which does not depend on~$M$, but only on~$\Pi$.
Intuitively, this task closes the one-round gap between the round-complexities of $\cl_M(\Pi)$ and $\Pi$ by taking the outputs of $\cl_M(\Pi)$ as input, and producing legal outputs for $\Pi$.

Let $\Pi=(\m{I},\m{O},\Delta)$ be a task. We say that a set $\tau\subseteq V(\m{O})$ of output vertices is \emph{chromatic} if, for any two vertices $v$ and $v'$ of~$\tau$, $\ID(v)\neq \ID(v')$. 
Intuitively, to every simplex $\sigma\in\m{I}$, and to every chromatic set of output vertices $\tau\subseteq V(\m{O})$ satisfying $\ID(\tau) =\ID(\sigma)$,
we associate a specific task with input~$\tau$, viewed as a complex, and whose objective is to construct a legal solution in~$\Delta(\sigma)$. 
Note that $\tau$ is a chromatic set of vertices in $V(\m{O})$,
but is not necessarily a simplex in~$\m{O}$. Given a chromatic complex $\m{K}$ with vertex IDs in $[n]$, and given a non-empty set $I\subseteq [n]$, we denote by $\proj_I(\m{K})$ the subcomplex of~$\m{K}$ induced by the vertices with IDs in~$I$. 

\begin{figure}[tb]
	\centering
	\includegraphics[width=9cm]{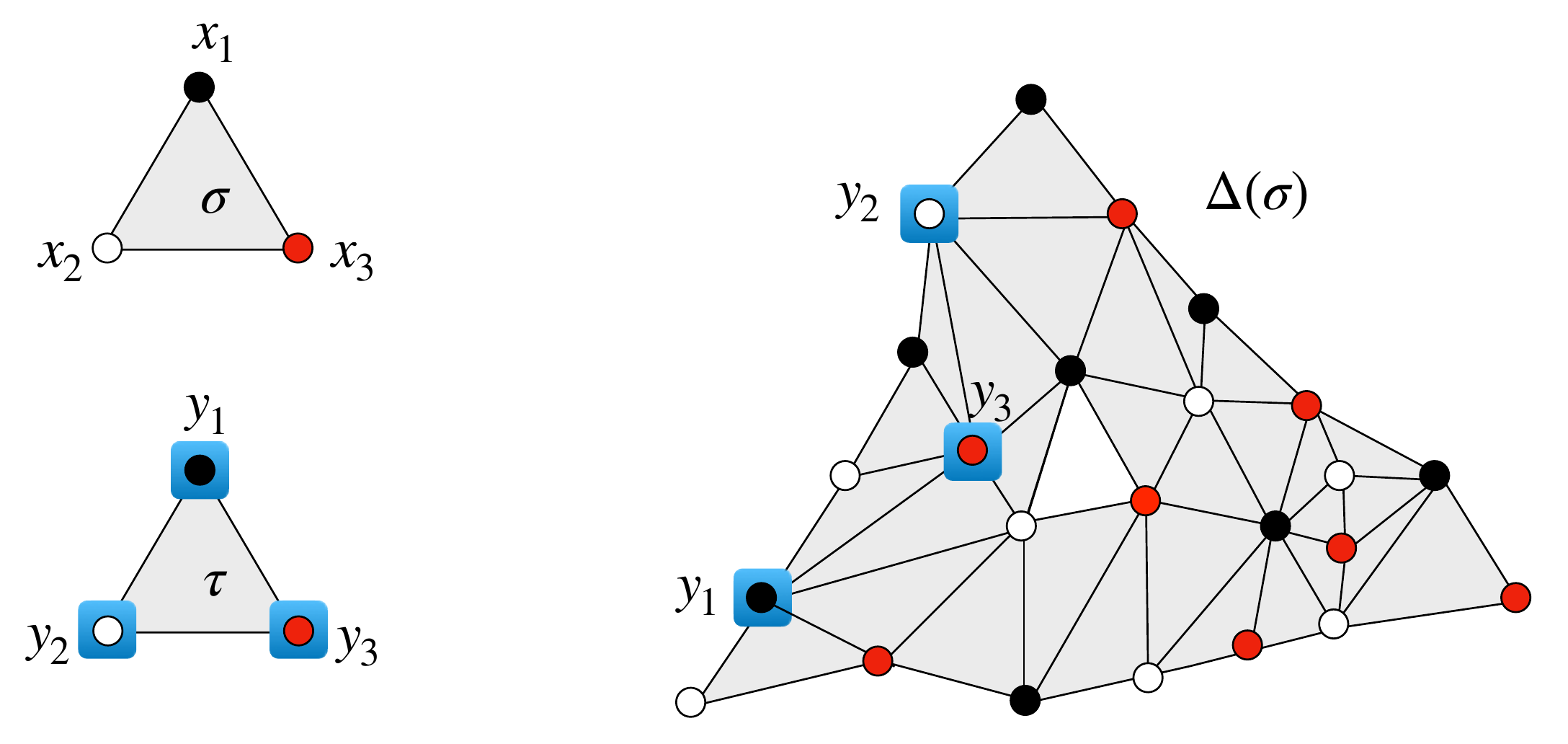}
	\caption{\sl Local task}
	\label{fig:TaskDelta1}
\end{figure}

\begin{definition}\label{def:localtask}
	Given a task $\Pi=(\m{I},\m{O},\Delta)$, a simplex $\sigma\in\m{I}$, and a chromatic set $\tau\subseteq V(\Delta(\sigma))$ satisfying 
	$\ID(\tau)= \ID(\sigma)$, the \emph{local task for $\sigma$ and $\tau$} is the task $\Pi_{\tau,\sigma}=(\tau,\Delta(\sigma),\D_{\tau,\sigma})$ where $\D_{\tau,\sigma}:\tau\to 2^{\Delta(\sigma)}$ is the  map 
	defined by:
	\begin{enumerate}
		\item for every vertex $v\in \tau$:  $\D_{\tau,\sigma}(v)=\{v\}$; 
		\item for every $\tau'\subseteq \tau$ with $|\tau'|>1$: 
		$
		\D_{\tau,\sigma}(\tau') = \proj_{\id(\tau')} \big ( \Delta(\sigma)\big).
		$
	\end{enumerate}
\end{definition}

Figure~\ref{fig:TaskDelta1} provides an illustration of a local task, and in particular of the map~$\D_{\tau,\sigma}$. 
This map seems similar to~$\Delta$ restricted to~$\sigma$, but there are several differences that should be noted.
First, while~$\Delta$ takes as inputs simplices of the \emph{input complex}~$\m{I}$, the map~$\D_{\tau,\sigma}$ takes as inputs subsets of~$\tau$, which are sets of vertices in the \emph{output complex}~$\m{O}$;
both maps return sets of simplices of the output complex $\m{O}$.
Second, $\D_{\tau,\sigma}$ is more restrictive than~$\Delta$ for solo executions, as the vertices of~$\tau$ are fixed by~$\D_{\tau,\sigma}$ --- a process~$i$ of~$\tau$ running solo must output its value~$y_i$ in~$\tau$. 
And third, $\D_{\tau,\sigma}$ is more liberal than $\Delta$ in the other executions, as for a face~$\tau'$ of~$\tau$ with dimension greater than~$0$, the image of~$\tau'$ by~$\D_{\tau,\sigma}$ is any simplex of~$\Delta(\sigma)$ colored with the IDs of the processes in~$\tau'$. 


\paragraph{Remark.} 

Given the local task $\Pi_{\tau,\sigma}=(\tau,\Delta(\sigma),\D_{\tau,\sigma})$ and $\tau'\subseteq \tau$, the set $\D_{\tau,\sigma}(\tau')$ induces a subcomplex of $\Delta(\sigma)$, potentially reduced to a single vertex whenever $|\tau'|=1$. In particular, if $\tau$ is not a single vertex then $\D_{\tau,\sigma}(\tau)=\Delta(\sigma)$. 

\medskip 

We are now ready to define the closure task $\cl_M(\Pi)$,
which will be a task that is expected to be easier than~$\Pi$ for model~$M$. 
Given an input simplex~$\sigma$ for~$\Pi$, the closure of~$\Pi$ allows to output chromatic sets of vertices $\tau\subseteq V(\Delta(\sigma))$ which are not simplices of~$\Delta(\sigma)$. 
However, this set~$\tau$ must be ``close to each other'' in~$\Delta(\sigma)$, in the sense that the local task $\Pi_{\tau,\sigma}$ can be solved in one round w.r.t. a given model $M$. 


\begin{definition}\label{def:closure}
	The \emph{closure} of a task $\Pi=(\m{I},\m{O},\Delta)$  with respect to a communication model~$M$ is the task $\cl_M(\Pi)=(\m{I},\m{O}',\Delta')$ where $V(\m{O}')=V(\m{O})$, and, for any $\sigma\in \m{I}$ and any set $\tau\subseteq V(\m{O})$,
	we let 
	$\tau\in \Delta'(\sigma)$	
	if 
	$\tau$ is chromatic, 
	$\ID(\tau)= \ID(\sigma)$,  
	$\tau\subseteq V(\Delta(\sigma))$, and the local task $\Pi_{\tau,\sigma}$ is solvable in at most one round in $M$.
	The simplices of $\m{O}'$ are the images of $\Delta'$, and all their faces.
\end{definition}

Figure~\ref{fig:closure} illustrates Definition~\ref{def:closure},
and in particular of how the map~$\Delta'$ is built. In this figure, the closure is with respect to the immediate snapshot model, as witnessed by the fact that the complex~$\m{P}^{(1)}(\tau)$ is the standard chromatic subdivision.
We have $\tau\in \Delta'(\sigma)$ because the local task $\Pi_{\tau,\sigma}=(\tau,\Delta(\sigma),\D_{\tau,\sigma})$ is solvable in one round with immediate snapshot, as there is a simplicial map $f:\m{P}^{(1)}(\tau)\to\Delta(\sigma)$ that respects~$\Delta_{\tau,\sigma}$: the chromatic subdivision of~$\tau$ is mapped to the dark subcomplex of $\Delta(\sigma)$ (the small letters specify~$f$). Note that the closure of a task depends on the communication model, as a local task $\Pi_{\tau,\sigma}$ may be solvable in one round of some model $M$, but not of some other model~$M'$.

\begin{figure}[tb]
	\centering
	\includegraphics[width=11cm]{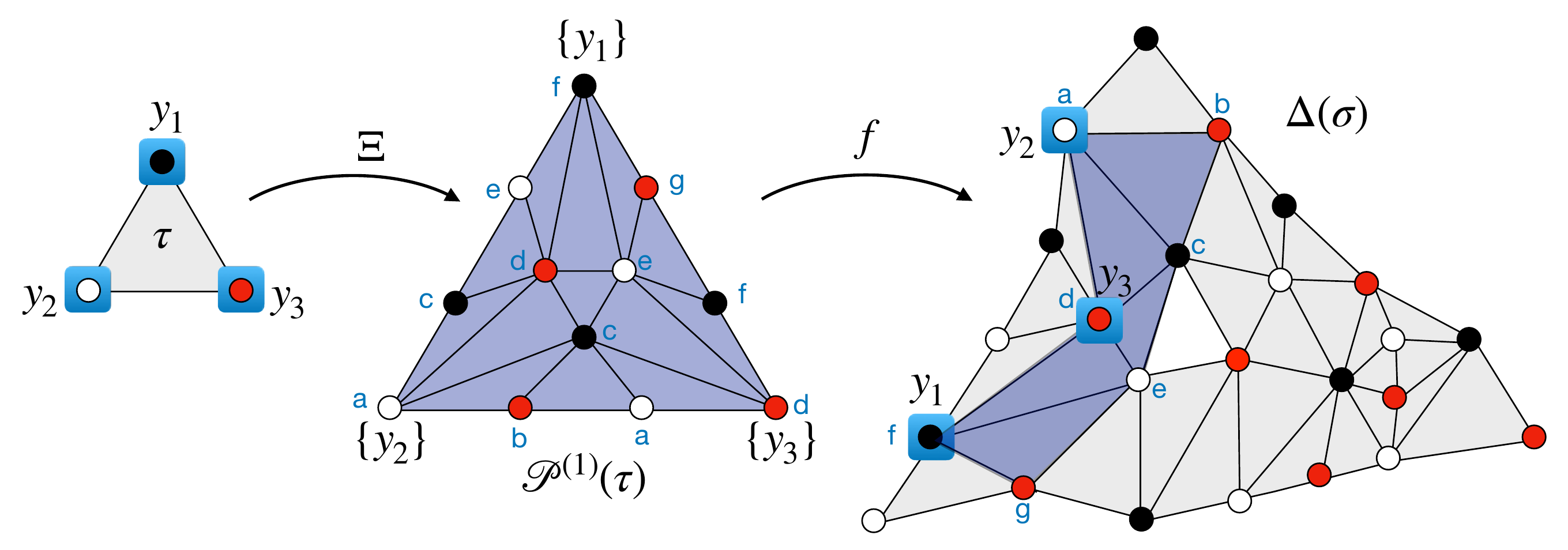}
	\caption{\sl  Closure of a task}
	\label{fig:closure}
\end{figure}

\paragraph{Remark.} 
Given a task $\Pi=(\m{I},\m{O},\Delta)$ and its closure $\cl_M(\Pi)=(\m{I},\m{O}',\Delta')$ w.r.t.~some model~$M$,
we have ${\Delta(\sigma)\subseteq\Delta'(\sigma)}$ for every $\sigma\in \m{I}$.
To see this, 
consider some $\sigma\in \m{I}$ and let $\tau\subseteq V(\m{O})$ be a chromatic set satisfying $\ID(\tau)= \ID(\sigma)$. 
If $\tau\in\Delta(\sigma)$ (that is, $\tau$ is a simplex of $\Delta(\sigma)$) then the local task $\Pi_{\tau,\sigma}=(\tau,\Delta(\sigma),\D_{\tau,\sigma})$ is solvable in zero rounds, by having each process output its input.
\medbreak

Since ${\Delta(\sigma)\subseteq\Delta'(\sigma)}$ for every  $\sigma\in \m{I}$,  the closure $\cl_M(\Pi)=(\m{I},\m{O}',\Delta')$ of a task $\Pi=(\m{I},\m{O},\Delta)$  is  not more difficult that the task itself.  In the next subsection, we will show that the closure of a task is in fact \emph{simpler} than the original task. 

\subsection{Asynchronous Speedup Theorem}
We say that an iterated model $M$ \emph{allows solo executions},  if in every round $t$ of Algorithm~\ref{alg:generic-model}, for each process~$i$,
there is an execution where the operations of process $i$ in round $t$ take place before the
operations of all the other processes in this round. Thus, 
there is an execution where process $i$ reads only its own write in its collect operation.

\begin{theorem}\label{thm:speedup}
	Let $M$ be an iterated model allowing solo executions.  For every $t\geq 1$, if  a task $\Pi$ is solvable in $t$ rounds in~$M$, then the closure of~$\Pi$ with respect to~$M$ is solvable in $t-1$ rounds in~$M$. 
\end{theorem}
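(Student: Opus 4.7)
The plan is to convert a $t$-round algorithm for $\Pi$ into a $(t-1)$-round algorithm for $\cl_M(\Pi)$ by having each process simulate the missing last round as a solo execution. Assume $A$ is a $t$-round protocol for $\Pi$ in $M$, with associated simplicial map $f:\m{P}^{(t)}\to\m{O}$ satisfying $f(\m{P}^{(t)}(\sigma))\subseteq\Delta(\sigma)$ for every $\sigma\in\m{I}$. For each vertex $v=(i,V)$ of $\m{P}^{(t-1)}$, write $v^+=(i,\{(i,V)\})$ for the vertex of $\m{P}^{(t)}$ produced when process~$i$ runs round~$t$ alone on top of view~$V$; since $M$ allows solo executions, $v^+\in V(\m{P}^{(t)}(\sigma))$ whenever $v\in V(\m{P}^{(t-1)}(\sigma))$. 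Define the candidate $(t-1)$-round algorithm by $g(v):=f(v^+)$.

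Fix $\sigma\in\m{I}$ and a simplex $\tau=\{(i,V_i):i\in I\}$ of $\m{P}^{(t-1)}(\sigma)$ with $I=\ID(\sigma)$, and set $\tau':=g(\tau)=\{(i,y_i):i\in I\}$ where $y_i=f(i,\{(i,V_i)\})$. Clearly $\tau'$ is chromatic, $\ID(\tau')=\ID(\sigma)$, and $y_i\in V(\Delta(\sigma))$ since $v_i^+\in\m{P}^{(t)}(\sigma)$. What remains is to exhibit a one-round algorithm in $M$ for the local task $\Pi_{\tau',\sigma}$; this then gives $\tau'\in\Delta'(\sigma)$. I will use the inclusion $\m{P}^{(1)}(\tau)\subseteq\m{P}^{(t)}(\sigma)$ --- one extra round from the state $\tau$ yields a subcomplex of the $t$-round protocol complex --- together with the canonical isomorphism $\chi:\m{P}^{(1)}(\tau)\to\m{P}^{(1)}(\tau')$ from~(\ref{eq:canonicaliso}) that relabels the $V_j$'s to the $y_j$'s. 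The one-round algorithm on input $\tau'$ performs a single round of $M$ to obtain a view $\omega\in\m{P}^{(1)}(\tau')$ and outputs $f(\chi^{-1}(\omega))\in\Delta(\sigma)$. When process~$i$ runs solo, $\omega=(i,\{(i,y_i)\})$ and $\chi^{-1}(\omega)=v_i^+$, so the output is $y_i$, meeting the singleton clause of $\D_{\tau',\sigma}$; for any $\tau''\subseteq\tau'$ with $|\tau''|>1$, the image of the corresponding subcomplex of $\m{P}^{(1)}(\tau')$ under $f\circ\chi^{-1}$ lies in $\proj_{\ID(\tau'')}(\Delta(\sigma))$, meeting the second clause.

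A simplex $\tau\in\m{P}^{(t-1)}(\sigma)$ with $\ID(\tau)\subsetneq\ID(\sigma)$ is handled by replaying the same argument on the restricted input simplex $\sigma$ restricted to $\ID(\tau)$; together with the previous case, this shows that $g$ is a well-defined simplicial map from $\m{P}^{(t-1)}$ into $\m{O}'$ that solves $\cl_M(\Pi)$ in $t-1$ rounds. I expect the main obstacle to be the one-round solvability of $\Pi_{\tau',\sigma}$ just described, as it must thread together the solo-execution hypothesis (used to define~$g$), the inclusion $\m{P}^{(1)}(\tau)\subseteq\m{P}^{(t)}(\sigma)$ (used to reuse $f$ as a one-round map on an unrelated input simplex), and the canonical isomorphism (used to translate inputs from the $V_j$'s to the $y_j$'s); the remaining verifications --- chromaticity, ID matching, simpliciality of $g$ --- are immediate from the construction.
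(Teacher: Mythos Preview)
Your proof is correct and follows essentially the same route as the paper: define the $(t-1)$-round decision by composing $f$ with the solo-extension of each view, then certify membership in $\Delta'(\sigma)$ by solving the local task via $f\circ\chi^{-1}$ (the paper writes this as $f\circ\chi$ with $\chi$ oriented the other way, but it is the same map). One small correction: your handling of non-facet simplices $\tau\in\m{P}^{(t-1)}(\sigma)$ by ``restricting $\sigma$ to $\ID(\tau)$'' does not work as stated, since the views in $\tau$ may contain data from processes in $\ID(\sigma)\smallsetminus\ID(\tau)$ and hence $\tau$ need not lie in $\m{P}^{(t-1)}(\proj_{\ID(\tau)}\sigma)$; however, no separate argument is needed here, because once $g$ sends every facet of $\m{P}^{(t-1)}(\sigma)$ into $\Delta'(\sigma)$, its faces are automatically sent to faces in the induced complex $\Delta'(\sigma)\subseteq\m{O}'$ (this is exactly how the paper treats it, implicitly).
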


\begin{proof}
	Let $\Pi=(\m{I},\m{O},\Delta)$ be a task on $n$ processes, let $t\geq 1$, and assume that  $\Pi$ is solvable in $t$ rounds in model~$M$. Let $\cl_M(\Pi)=(\m{I},\m{O}',\Delta')$ be the closure of~$\Pi$ w.r.t.~$M$. Our goal is to show that $\cl_M(\Pi)$ is solvable in $t-1$ rounds in~$M$. 
	
	Since  $\Pi$ is solvable in $t$ rounds, there exists a simplicial map 
	$
	f: \m{P}^{(t)} \to \m{O}
	$
	that agrees with~$\Delta$. We aim at defining a simplicial map 
	$
	f': \m{P}^{(t-1)} \to \m{O}',
	$
	that agrees with~$\Delta'$. To this end, let 
	$
	(i,V_i)\in \m{P}^{(t-1)}
	$
	be a vertex of $\m{P}^{(t-1)}$, i.e., $V_i$ is a possible view of process~$i\in [n]$ after $t-1$ rounds (see Figure~\ref{fig:MapDelta1}). We merely define 
	$
	f'(i,V_i)=f(i,\{(i,V_i)\}). 
	$
	That is, $f'(i,V_i)$ is equal to the image by $f$ of the vertex $(i,\{(i,V_i)\})\in \m{P}^{(t)}$ resulting from $(i,V_i)\in \m{P}^{(t-1)}$ whenever process~$i$ runs solo at round~$t$. Note that $f(i,\{(i,V_i)\})$ is a vertex of $\m{O}$, and thus $f'(i,V_i)$ is a vertex of $\m{O}'$, as $V(\m{O}')=V(\m{O})$.

	\begin{figure}[tb]
		\centering
		\includegraphics[width=10cm]{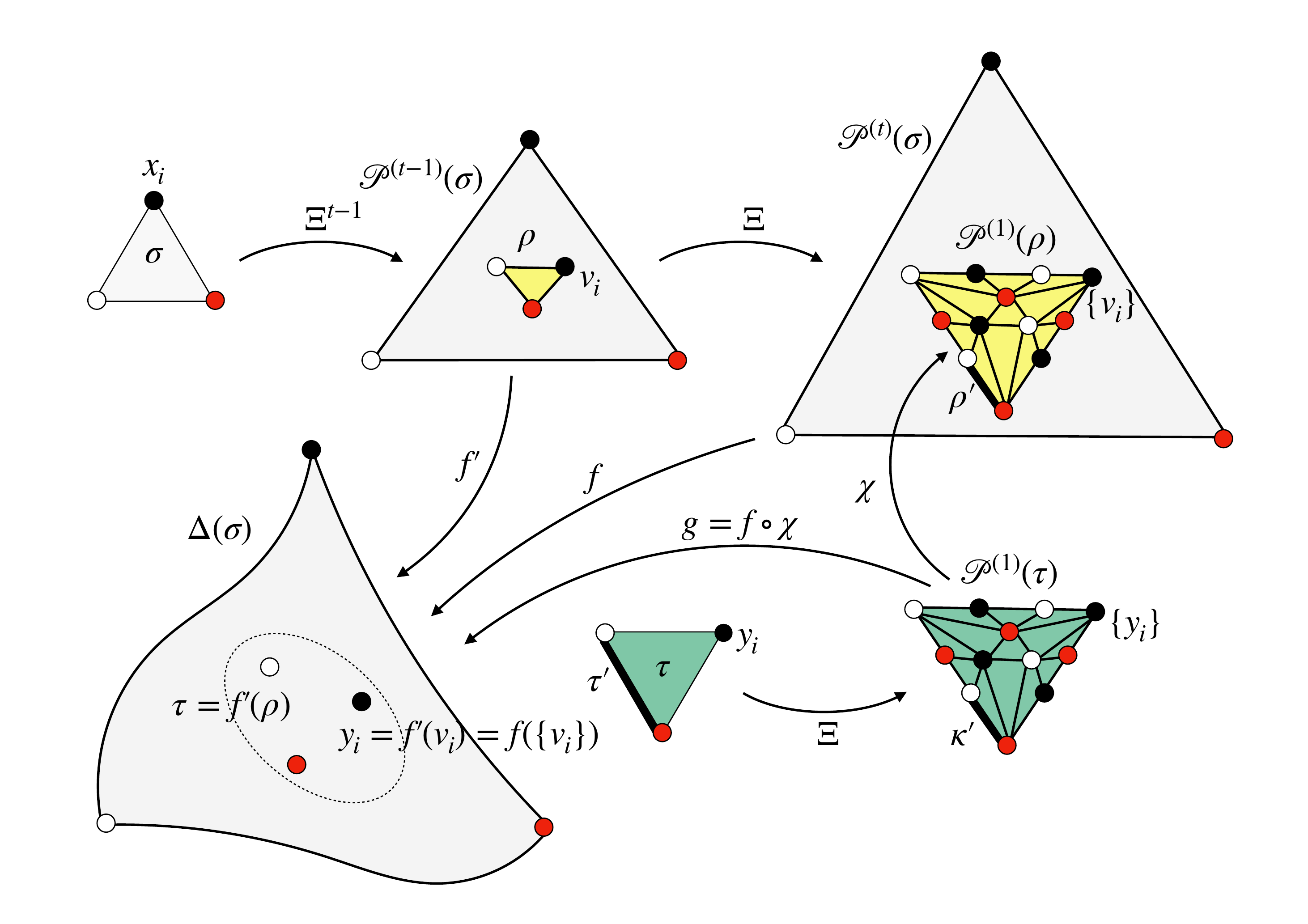}
		\caption{\sl Construction in the proof of Theorem~\ref{thm:speedup}}
		\label{fig:MapDelta1}
	\end{figure}

	Let us show that $f'$ is simplicial and agrees with~$\Delta'$. To this end, let $\sigma\in\m{I}$, and let 
	$$\rho\in  \m{P}^{(t-1)}(\sigma),$$ 
	i.e., $\rho$~results from input~$\sigma$ after $t-1$ rounds of communication (see Figure~\ref{fig:MapDelta1}). We aim at establishing that $f'(\rho)$ is a simplex of $\Delta'(\sigma)$.  Let us assume that 
	$
	\sigma=\{(i,x_i):i\in I\}
	$
	and
	$$
	\rho=\{(i,V_i):i\in I\},
	$$
	for some set $I\subseteq[n]$ of indices.
	Let us then consider $\tau=f'(\rho)$. For every $i\in I$, let $(i,y_i)=f'(i,V_i)$ be the output vertex returned by~$f'$ for process~$i$ with view~$V_i$ in~$\m{P}^{(t-1)}(\sigma)$. We  have 
	$
	\tau=f'(\rho)=\{f'(i,V_i):i\in I\}=\{(i,y_i):i\in I\}.
	$
	To establish that $f'$ is simplicial and agrees with~$\Delta'$, it is sufficient to show that $\tau$ is a simplex of~$\Delta'(\sigma)$. 
	
	To show that $\tau\in \Delta'(\sigma)$, it is sufficient to show that the local task $\Pi_{\tau,\sigma}=(\tau,\Delta(\sigma),\D_{\tau,\sigma})$ is solvable in one round (cf.~Definition~\ref{def:closure}),
	which we do next by presenting an algorithm.
	The task $\Pi_{\tau,\sigma}$ is specific for~$\tau$ and~$\sigma$, and that~$\tau$ was chosen by first choosing $\rho\in\m{P}^{(t-1)}(\sigma)$,
	so 
	when designing our algorithm we can use $\tau, \sigma$, and $\rho$.
	Moreover, we use the existence of a $t$-round algorithm for~$\Pi$, and the corresponding decision map~$f$.
	Intuitively, one can think of $\tau, \sigma, \rho$ and $f$ as hard-coded in the algorithm.

	Operationally, the following algorithm is solving  the local task $\Pi_{\tau,\sigma}$ in one round. 
	For each $i\in\ID(\tau)$, process~$i$ starts with input~$y_i$.
	It performs one communication round and collects a set $z_i=\{(j,y_j):j\in J_i\}$ with $i\in J_i\subseteq I$.
	To compute its output value, process~$i$ uses solely the set~$J_i$ of indices appearing in~$z_i$ (and not the values collected), and, using $\rho=\{(i,V_i):i\in I\}$, it computes the view $W_i=\{(j,V_j):j\in J_i\}$. 
	Then, process~$i$ outputs~$f(i,W_i)$. 
	
	Topologically, we define a decision map 
	$
	{g:\m{P}^{(1)}(\tau)\to \Delta(\sigma)}
	$
	as follows (see Figure~\ref{fig:MapDelta1}). 
	The vertices of $\m{P}^{(1)}(\tau)$ are in one-to-one correspondence with the vertices of $\m{P}^{(1)}(\rho)$, due to the canonical isomorphism~$\chi$ between $\m{P}^{(1)}(\tau)$ and~$\m{P}^{(1)}(\rho)$ (cf. Eq.~\eqref{eq:canonicaliso}), and so we define $g=f\circ\chi$. The map~$g$ is simplicial since $f$ and $\chi$ are simplicial. It remains to show that $g$ agrees  with~$\D_{\tau,\sigma}$,
	for which the two conditions of Definition~\ref{def:localtask} must be fulfilled. 
	First, let $(i,y_i)$ be a vertex of~$\tau$ for some $i\in \ID(\tau)$. We have 
	$$
	g(i,\{(i,y_i)\})=f(i,\{(i,V_i)\})=f'(i,V_i)=(i,y_i). 
	$$
	It follows that $g(i,\{(i,y_i)\})\in\D_{\tau,\sigma}(i,y_i)$ since, by definition, 
	$
	\D_{\tau,\sigma}(i,y_i)=\{(i,y_i)\}
	$, 
	and Condition~$1$ of Definition~\ref{def:localtask}
	is fulfilled. 
	To check Condition~$2$, let $\tau'\in\tau $ be a simplex of the input complex~$\tau$ of the local task $\Pi_{\tau,\sigma}$.
	In order to show that 
	$
	g(\m{P}^{(1)}(\tau'))\subseteq\D_{\tau,\sigma}(\tau'), 
	$
	consider a simplex $\kappa'\in \m{P}^{(1)}(\tau')$ with $\ID(\kappa')=\ID(\tau')$, and let $\rho'=\chi(\kappa')$ be the corresponding simplex in $\m{P}^{(1)}(\rho)$. 
	By the definition of the map $g$, we have $g(\kappa')=f\circ\,\chi(\kappa')=f(\rho')$. 
	Since $f$ solves the task~$\Pi$ with input~$\sigma\in \m{I}$, we have 
	$
	f(\rho')\in  \Delta(\sigma)
	$.
	Since $\ID(f(\rho'))=\ID(\rho')$, we get that $f(\rho')\in \proj_{\id(\tau')}(\Delta(\sigma))$ as desired, 
	and Condition~$2$ of Definition~\ref{def:localtask} holds as well.
	It follows that the map~$g$ does solve the local task $\Pi_{\tau,\sigma}$ in a single round, and therefore $\tau\in\Delta'(\sigma)$. As a consequence, $f'$ is simplicial and agrees with~$\Delta'$, which completes the proof. 
\end{proof}

\subsection{Impossibility of Consensus in the wait-free  IIS Model}
\label{app:application-to-consensus}

The speedup theorem enables to provide short proofs for classical impossibility results or lower bounds, as illustrated below for Consensus. A task $\Pi=(\m{I},\m{O},\Delta)$ is said to be a \emph{fixed-point} for model~$M$ whenever $\cl_M(\Pi)=\Pi$. 

\begin{lemma}\label{lem:fixed-point}
	A task $\Pi$ that is a fixed-point for model~$M$ is either solvable in zero rounds in~$M$, or unsolvable in~$M$. 
\end{lemma}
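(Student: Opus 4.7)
The plan is to derive the lemma by a one-line application of the speedup theorem together with a well-ordering argument on the round complexity. Because the only link we have between $\Pi$ and $\cl_M(\Pi)$ is Theorem~\ref{thm:speedup}, and the fixed-point hypothesis $\cl_M(\Pi)=\Pi$ turns that theorem into a self-improvement mechanism, the proof is essentially a minimal-counterexample argument.

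Concretely, I would proceed as follows. Assume that $\Pi$ is solvable in $M$, and let $t\geq 0$ be the smallest integer such that $\Pi$ admits a $t$-round algorithm in~$M$ (such a minimum exists by well-ordering of $\mathbb{N}$). Our goal is to show $t=0$. Suppose for contradiction that $t\geq 1$. Then Theorem~\ref{thm:speedup} applies and yields that $\cl_M(\Pi)$ is solvable in $t-1$ rounds in~$M$. Using the fixed-point hypothesis $\cl_M(\Pi)=\Pi$, we conclude that $\Pi$ itself is solvable in $t-1$ rounds, contradicting the minimality of~$t$. Hence $t=0$, i.e.\ $\Pi$ is solvable in zero rounds. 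The contrapositive dichotomy is then immediate: if $\Pi$ is not solvable in zero rounds, then it cannot be solvable at all in~$M$.

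The only subtlety I anticipate is making sure the hypothesis of Theorem~\ref{thm:speedup} is legitimately invoked: the theorem requires an iterated model allowing solo executions, which is implicit from the setup of the paper (every model considered is assumed to allow solo executions, see the statement of Theorem~\ref{thm:speedup}). Other than that, no topological or combinatorial machinery is needed; the lemma is a purely formal consequence of the speedup theorem combined with the fixed-point equation, and there is no real obstacle to overcome.
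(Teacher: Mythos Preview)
Your proof is correct and follows essentially the same approach as the paper: both argue that if $\Pi$ is solvable in some $t\geq 1$ rounds, then repeated application of Theorem~\ref{thm:speedup} together with $\cl_M(\Pi)=\Pi$ drives the round complexity down to zero. The only cosmetic difference is that you phrase this as a minimal-counterexample argument while the paper iterates the speedup explicitly from $t$ down to $0$.
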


\begin{proof}
	Let us assume that there exists $t\geq 1$ such that $\Pi$ is solvable in $t$~rounds in~$M$. By the speedup theorem,  $\cl_M(\Pi)=\Pi$ is solvable in $t-1$ rounds in~$M$. By repeating the application of the speedup theorem to~$t-1,t-2,\dots,1$, we eventually get that $\Pi$ is solvable in zero rounds in~$M$. 
\end{proof}

Recall that binary consensus task for $n$ processes is the task $(\m{I},\m{O},\Delta)$ defined a follows. The input complex $\m{I}$ is composed of all simplices of the form $\sigma=\{(i,x_i):i\in I\}$ for some nonempty set $I\subseteq [n]$ of processes, where $x_i\in\{0,1\}$ for every $i\in I$.
The output complex $\m{O}$ has two facets $\tau_0$ and $\tau_1$, where, for $y\in \{0,1\}$, $\tau_y=\{(i,y):i\in[n]\}$. The set of legal output simplices for the input simplex $\sigma=\{(i,x_i):i\in I\}\in \m{I}$ is 
\[
\Delta(\sigma)=\left \{\begin{array}{ll}
	\{\proj_I(\tau_0),\proj_I(\tau_1)\} & \mbox{if there exists $i,j\in I$ such that $x_i\neq x_j$}\\
	\{\sigma\} & \mbox{otherwise.}
\end{array}\right.
\]

\begin{corollary}\label{cor:impossibility-consensus}
	Binary consensus is impossible to solve in the wait-free IIS model. 
\end{corollary}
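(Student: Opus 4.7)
My plan is to invoke Lemma~\ref{lem:fixed-point}: I will show that binary consensus $\Pi = (\m{I}, \m{O}, \Delta)$ is a fixed-point for the wait-free IIS model, i.e., $\cl_M(\Pi) = \Pi$, and that $\Pi$ is not solvable in zero rounds; the impossibility then follows immediately from the lemma (note that IIS admits solo executions, so the underlying speedup theorem applies). The zero-round obstruction is the easy observation that any zero-round protocol is a simplicial map $\m{I} \to \m{O}$ agreeing with $\Delta$, and validity on singleton inputs forces a process running solo with input $b \in \{0,1\}$ to decide $b$; hence on an input simplex like $\sigma = \{(i,0), (j,1)\}$ the map would have to output the pair $\{(i,0),(j,1)\}$, which is not a simplex of $\m{O}$ because the facets $\tau_0$ and $\tau_1$ share no vertex.

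For the fixed-point claim, the inclusion $\Delta(\sigma) \subseteq \Delta'(\sigma)$ comes for free from the remark following Definition~\ref{def:closure}, so only the reverse inclusion needs work. I will fix $\sigma \in \m{I}$ with $\ID(\sigma) = I$ and some $\tau \in \Delta'(\sigma)$ --- by definition a chromatic subset of $V(\Delta(\sigma))$ with $\ID(\tau) = I$ --- and split into cases on the inputs of $\sigma$. If $\sigma$ is unanimous with common input $x$, then $V(\Delta(\sigma)) = \{(i, x) : i \in I\}$ and chromaticity forces $\tau = \sigma \in \Delta(\sigma)$. If $\sigma$ carries both input values, then $\Delta(\sigma) = \{\proj_I(\tau_0), \proj_I(\tau_1)\}$ and, since these two simplices are exactly the unanimous output configurations, it remains to rule out a ``mixed'' $\tau$ that contains both a vertex $(i,0)$ and a vertex $(j,1)$.

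The main obstacle, and the crux of the proof, will be to show that such a mixed $\tau$ cannot lie in $\Delta'(\sigma)$, i.e., that the local task $\Pi_{\tau,\sigma}$ is \emph{not} solvable in one IIS round. I plan to argue by a simple connectivity observation: pick any $(i,0), (j,1) \in \tau$ and focus on the one-dimensional face $e = \{(i,0),(j,1)\} \subseteq \tau$. A hypothetical one-round solution restricts to a simplicial map $g$ from the chromatic subdivision $\m{P}^{(1)}(e)$ (a connected path in IIS) into $\Delta(\sigma)$. Condition~1 of Definition~\ref{def:localtask} pins the two solo endpoints of this path to $(i,0)$ and $(j,1)$ respectively, while Condition~2 forces every edge of $\m{P}^{(1)}(e)$ to land in $\proj_{\{i,j\}}(\Delta(\sigma)) = \{\{(i,0),(j,0)\}, \{(i,1),(j,1)\}\}$, a disconnected target graph whose two components contain the prescribed endpoints separately. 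A simplicial image of a connected path cannot lie in a disconnected subgraph while realising endpoint values in distinct components, giving the required contradiction.

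A minor bookkeeping point that I will double-check while writing the full proof is the interpretation of the chromatic set $\tau$ as the input complex of the local task: since $\D_{\tau,\sigma}$ is defined on \emph{every} subset $\tau' \subseteq \tau$, the input complex is effectively the full simplex on the vertices of $\tau$, which is precisely what is needed for $\m{P}^{(1)}(e)$ to be a genuine chromatic subdivision of an edge and for the connectivity argument above to go through.
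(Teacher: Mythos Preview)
Your proposal is correct and follows essentially the same approach as the paper: show that consensus is a fixed point for the IIS model (using the $\Delta(\sigma)\subseteq\Delta'(\sigma)$ remark for one inclusion, the unanimous case trivially, and a connectivity argument on $\m{P}^{(1)}(\{(i,y_i),(j,y_j)\})$ for the other), then invoke Lemma~\ref{lem:fixed-point} together with the zero-round obstruction. The paper makes the same argument, only spelling out the three-edge path in $\m{P}^{(1)}$ explicitly rather than phrasing it as ``connected source, disconnected target''; your bookkeeping remark about $\tau$ being treated as a full simplex is also in line with how the paper uses the local task.
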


\begin{proof}
	Let $\Pi$ denote the binary consensus task, and let $M$ denote the wait-free IIS model. For establishing the impossibility result, it is sufficient to show that ${\cl_M(\Pi)=\Pi}$. 
	After proving this, the impossibility of consensus is immediate from Lemma~\ref{lem:fixed-point}, and the fact that binary consensus is not solvable in zero rounds, simply because, in consensus, every process running solo must output its input value, and, in a zero-round algorithm, every process must output the same as if it was running solo.
	Hence, our goal is to show that $\Delta'(\sigma)=\Delta(\sigma)$ for every $\sigma=\{(i,x_i):i\in I\} \in \m{I}$. 
	Since, as noticed before, $\Delta(\sigma)\subseteq\Delta'(\sigma)$ for every task,  it suffices to show $\Delta'(\sigma)\subseteq\Delta(\sigma)$ for binary consensus.
	
	Let $\tau=\{(i,y_i):i\in I\} \subseteq V(\Delta(\sigma))$ be such that $\tau\in \Delta'(\sigma)$, and let us show that $\tau\in \Delta(\sigma)$. 
	Assume first that $\sigma=\{(i,x):i\in I\}$ for some $x\in\{0,1\}$. In this case, $\Delta(\sigma)=\sigma$, and thus, since $\tau\subseteq V(\Delta(\sigma))$, we have $\tau=\sigma$, and thus $\Delta'(\sigma)=\Delta(\sigma)$. 
	Assume now that $\sigma$ contains two vertices with distinct input values~$0$ and~$1$. 
	Our goal is to show that $\tau$ cannot include two vertices $(i,y_i)$ and $(j,y_j)$ with $y_i\neq y_j$. To this end, let $i,j\in I$ be two different indices. Since $\tau\in\Delta'(\sigma)$, the local task $\Pi_{\tau,\sigma}=(\tau,\Delta(\sigma),\D_{\tau,\sigma})$ is solvable in a single round. So, let $f:\m{P}^{(1)}(\tau)\to \Delta(\sigma)$ be a simplicial map that agrees with~$\D_{\tau,\sigma}$. 
	Consider the path
	\[
	(i,\{(i,y_i)\}) \; \rule[2pt]{20pt}{1pt} \; (j,\{(i,y_i),(j,y_j)\}) \; \rule[2pt]{20pt}{1pt} \; (i,\{(i,y_i),(j,y_j)\}) \; \rule[2pt]{20pt}{1pt} \;  (j,\{(j,y_j)\})
	\]
	connecting the vertices $(i,\{(i,y_i)\})$ and $(j,\{(j,y_j)\})$ in $\m{P}^{(1)}(\tau)$. This path is composed of three distinct edges, i.e., three distinct 1-dimensional simplices, $e_1,e_2$, and $e_3$. Since $f$ is simplicial, the image of this path by $f$ is a path connecting the vertices $f(i,\{(i,y_i)\})$ and $f(j,\{(j,y_j)\})$.  Since $f$ agrees with~$\D_{\tau,\sigma}$, it holds that $f(i,\{(i,y_i)\})=(i,y_i)$ and $f(j,\{(j,y_j)\})=(j,y_j)$. Moreover, each of the (not necessarily distinct) edges $f(e_k)$, $k\in\{1,2,3\}$, must belong to $\D_{\tau,\sigma}(\{(i,y_i),(j,y_j)\})=\proj_{\{i,j\}}(\Delta(\sigma))$. We have $\proj_{\{i,j\}}(\Delta(\sigma))=\{\proj_{\{i,j\}}(\tau_0),\proj_{\{i,j\}}(\tau_1)\}$.  It follows that either all edges $f(e_1)$, $f(e_2)$, and $f(e_3)$ are equal to $\proj_{\{i,j\}}(\tau_0)$, or all of them are equal to  $\proj_{\{i,j\}}(\tau_1)$. As a consequence, $y_i=y_j$. Therefore, all the output values in~$\tau$ are identical, and thus $\tau\in\Delta(\sigma)$, as desired. 
\end{proof}

\section{Extensions of the Speedup Theorem}
\label{sec:extSpeedup}

In order to design lower bounds or impossibility results to cases where processes have access to an object solving some specific tasks (e.g., test-and-set or binary consensus), we need to extend the speedup theorem to models that use not only registers but also other communication objects, which we informally call \emph{black box} objects. 

\subsection{Augmented Models}
%
\begin{wrapfigure}[12]{R}{0.35\textwidth}
	\center
	\resizebox{1\totalheight}{!}{
		\begin{minipage}{0.38\textwidth}
			\vspace*{-4ex}
			\small
			\begin{algorithm}[H]
				\SetAlgoLined
				\SetKwFor{loop}{\!}{for $r\gets 1$ to $t$}{end}
				$V_i \gets x_i$\\
				\loop{}{
					$\mathbf{write}\;(i,V_i)$ to $M_r[i]$\\
					$a_i\gets \alpha(i,V_i,r)$\\
					$b_i\gets \mathbf{B}_r(a_i)$\\
					$C_i\gets \mathbf{collect} \; M_r[1..n]$\\
					$V_i\gets (b_i,C_i)$
				}
				$y_i\gets f(i,V_i)$\\
				\textbf{output} $y_i$.
				\caption{\sl Code for process $i\in [n]$ with input~$x_i$}
				\label{alg:generic-extended}
			\end{algorithm}
		\end{minipage}
	}
\end{wrapfigure}
We consider generic round-based algorithms in a model augmented with a black box object $\mathbf{B}$ as displayed in Algorithm~\ref{alg:generic-extended}. 
This algorithm is similar to Algorithm~\ref{alg:generic-model} except that a call to a black box~$\mathbf{B}$ is inserted at every round between the write and collect instructions. 
The calls performed at two different rounds are independent, i.e., there are $t$ copies $\mathbf{B}_1,\dots,\mathbf{B}_t$ of $\mathbf{B}$, where $\mathbf{B}_r$ is the black box used by all processes at round~$r$. 
Some of the black box  objects we use have input and output values,
hence before invoking~$\mathbf{B}$ at a given round~$r$, each process~$i\in[n]$ computes an input value $a_i$ for $\mathbf{B}_r$, which is chosen by a function $\alpha$ that takes into account the process $i$, its view $V_i$, and the round number $r$. 
When process~$i$ invokes $\mathbf{B}_r$ with input~$a_i$, it gets a value~$b_i$ in return. To complete the round, process~$i$ collects the view of the other processes, and forms a new view which is the pair formed by the value obtained from $\mathbf{B}_r$, and the collection of the views of the other processes.

\paragraph{Remark.} In this paper, we assume that the black box object $\mathbf{B}$ is \emph{consistent} in the sense that, for the same inputs to the processes, and for the same interleaving of the reads and writes by the processes, $\mathbf{B}$~returns the same outputs to the processes. For instance, for the binary consensus box where some processes provide the box with input~0, and some others provide the box with input~1, the box will systematically produce the same outputs, either all~0s, or all~1s, for the same interleaving. We may assume consistency since we are only interested in lower bounds.  

\subsection{Extended Speedup Theorem}
We say that an iterated model $M$ augmented with a black box \emph{allows solo executions},  if in every round $t$ of Algorithm~\ref{alg:generic-extended}, for each process $i$,
there is an execution where the operations of process $i$ in round $t$ take place before the
operations of all other processes in this round.

\begin{theorem}\label{thm:speedup-extended}
	Let $M$ be an iterated model augmented with a black box allowing solo executions.  For every $t\geq 1$, if  a task $\Pi$ is solvable in $t$ rounds in~$M$, then the closure of~$\Pi$ with respect to~$M$ is solvable in $t-1$ rounds in~$M$.
\end{theorem}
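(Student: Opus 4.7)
My plan is to adapt the proof of Theorem~\ref{thm:speedup} to the augmented setting. Given a $t$-round algorithm $f$ for $\Pi$ in the black-box-augmented iterated model $M$, I aim to construct a $(t-1)$-round algorithm $f'$ for $\cl_M(\Pi)$. The definition of $f'$ on a vertex $(i,V_i)\in\m{P}^{(t-1)}$ should encode the full solo execution of process $i$ at round~$t$: process~$i$ writes $V_i$ to $M_t[i]$, invokes $\mathbf{B}_t$ with input $\alpha(i,V_i,t)$, obtains some well-defined solo response $b_i^{\text{solo}}$, and collects only its own write. The existence of $b_i^{\text{solo}}$ relies on the fact that $M$ allows solo executions and $\mathbf{B}$ is consistent. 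Setting $f'(i,V_i) = f(i,(b_i^{\text{solo}},\{(i,V_i)\}))$ completes the definition, and $f'(i,V_i)$ is a vertex of $\m{O}'$ because $V(\m{O}')=V(\m{O})$.

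The core step is then to show that, for every $\sigma\in\m{I}$ and every simplex $\rho=\{(i,V_i):i\in I\}\in\m{P}^{(t-1)}(\sigma)$, the set $\tau=f'(\rho)$ lies in $\Delta'(\sigma)$, i.e., the local task $\Pi_{\tau,\sigma}$ is solvable in one round of~$M$. The one-round algorithm I propose is a faithful simulation of the $t$-th round of the original $t$-round algorithm, with $\tau$, $\sigma$, $\rho$, $\alpha$, and $f$ all hard-coded. Each process~$i$ with input $y_i\in\tau$ writes $(i,y_i)$ to $M_1[i]$, invokes $\mathbf{B}_1$ with input $\alpha(i,V_i,t)$ (using the hard-coded $V_i$ from $\rho$ rather than $y_i$), obtains $b_i$, collects to obtain a set of indices $J_i$, and outputs $f(i,(b_i,\{(j,V_j):j\in J_i\}))$.

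The main obstacle will be extending the canonical isomorphism from Eq.~\eqref{eq:canonicaliso} to the augmented setting, because a vertex of $\m{P}^{(1)}(\tau)$ now bundles together a collect-view and a black box response. The essential observation is that, in the simulated round, the inputs presented to $\mathbf{B}_1$ are exactly the inputs $\alpha(i,V_i,t)$ that would be presented to $\mathbf{B}_t$ in round~$t$ of the original algorithm starting from~$\rho$. By the consistency of~$\mathbf{B}$, identical inputs under identical interleavings produce identical outputs, so the map $\chi:\m{P}^{(1)}(\tau)\to\m{P}^{(1)}(\rho)$ sending $(i,(b_i,\{(j,y_j):j\in J_i\}))$ to $(i,(b_i,\{(j,V_j):j\in J_i\}))$ is a well-defined chromatic isomorphism. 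Once $\chi$ is in hand, I set $g=f\circ\chi$ and verify that $g$ agrees with $\D_{\tau,\sigma}$: Condition~$1$ of Definition~\ref{def:localtask} follows from the definition of $f'$ applied to the solo vertex $(i,\{(i,y_i)\})$, which $\chi$ sends to $(i,(b_i^{\text{solo}},\{(i,V_i)\}))$, whose image under $f$ is $f'(i,V_i)=(i,y_i)$; Condition~$2$ follows because, for any $\kappa'\in\m{P}^{(1)}(\tau')$, the image $\rho'=\chi(\kappa')$ lies in $\m{P}^{(1)}(\rho)\subseteq\m{P}^{(t)}(\sigma)$, hence $f(\rho')\in\Delta(\sigma)$, and since $\ID(f(\rho'))=\ID(\rho')=\ID(\tau')$, we conclude $g(\kappa')\in\proj_{\id(\tau')}(\Delta(\sigma))=\D_{\tau,\sigma}(\tau')$, as required.
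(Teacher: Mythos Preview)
Your proposal is correct and follows essentially the same approach as the paper's proof: define $f'$ via the solo execution at round~$t$ (including the black box call), then solve the local task $\Pi_{\tau,\sigma}$ by simulating round~$t$ with the hard-coded $\rho$, invoking $\mathbf{B}_1$ with input $\alpha(i,V_i,t)$, and composing $f$ with the canonical isomorphism~$\chi$; the consistency of~$\mathbf{B}$ is exactly what makes $\chi$ well-defined. One small notational slip: the solo vertex in $\m{P}^{(1)}(\tau)$ in the augmented model is $(i,(b_i^{\text{solo}},\{(i,y_i)\}))$, not $(i,\{(i,y_i)\})$, but your argument for Condition~1 goes through unchanged once this is corrected.
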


\begin{proof}
	The proof is almost identical to the proof of Theorem~\ref{thm:speedup}, so we just underline the changes. Given a simplicial map 
	$
	f: \m{P}^{(t)} \to \m{O}
	$
	that agrees with~$\Delta$, we define a simplicial map 
	$
	f': \m{P}^{(t-1)} \to \m{O}',
	$
	that agrees with~$\Delta'$. Let 
	$
	(i,V_i)\in \m{P}^{(t-1)}
	$
	be a vertex of $\m{P}^{(t-1)}$. We define
	$
	f'(i,V_i)=f\big(i,(b_i,\{(i,V_i)\})\big), 
	$
	where $b_i=\mathbf{B}_t(a_i)$ is the output of the black box~$\mathbf{B}$ invoked at round~$t$ with $a_i=\alpha(i,V_i,t)$
	as input, whenever process~$i$ is running solo in~$\mathbf{B}$.
	Let us show that $f'$ is simplicial and agrees with~$\Delta'$. Let $\sigma=\{(i,x_i):i\in I\}\in\m{I}$ and  $\rho=\{(i,V_i):i\in I\}\in  \m{P}^{(t-1)}(\sigma)$, and let $\tau=f'(\rho)=\{(i,y_i):i\in I\}$. To show that $\tau\in \Delta'(\sigma)$, let us consider the local task $\Pi_{\tau,\sigma}=(\tau,\Delta(\sigma),\D_{\tau,\sigma})$, and let us check that it is indeed solvable in one round. As in the proof of Theorem~\ref{thm:speedup}, we define 
	$
	g:\m{P}^{(1)}(\tau)\to \Delta(\sigma)
	$
	as $g=f\circ\chi$ where $\chi:\m{P}^{(1)}(\tau)\to \m{P}^{(1)}(\rho)$ is the canonical isomorphism.  More specifically, every process $i\in I$ writes $y_i$ in memory, calls the black box with input $a_i=\alpha(i,V_i,t)$ to get a value~${b_i=\mathbf{B}_1(a_i)=\mathbf{B}_t(a_i)}$ and collects a set $z_i=\{(j,y_j):j\in J_i\}$ for some set~$J_i$ with $i\in J_i\subseteq I$. Process~$i$ outputs 
	$
	g\big(i,(b_i,z_i)\big)=f\big(i,(b_i,W_i)\big)
	$
	where $W_i=\{(j,V_j):j\in J_i\}$. 
	The fact that $g$ is simplicial and agrees with~$\D_{\tau,\sigma}$ follows exactly from the same arguments as in the proof of Theorem~\ref{thm:speedup}, using the fact that the box is consistent
	--- which guarantees that, for every $\kappa\in\m{P}^{(1)}(\tau)$,  the outputs of the box for $\kappa$ and for $\chi(\kappa)\in \m{P}^{(1)}(\rho)$ are identical. 
\end{proof}

\subsection{Impossibility of Consensus in the Wait-Free IIS Model  with \TS}
\label{app:impossibility-of-consensus-with-TS}

Recall that  
\TS{}  is an object that requires no inputs, and every process invoking \TS\/ gets a value~0 or~1, with the guarantee that only the first process to invoke it gets a~1, all the other getting~0.
It is known that \TS\/ has consensus number~2~\cite{Herlihy91}, i.e., it can be used to solve binary consensus among two processes, but not among more processes. 
Multi-value (i.e., not only binary) consensus among two processes can actually be solved in a single round with \TS:
A process receiving the value~1 from \TS\/ outputs its own input value, and a process receiving the value~0 from \TS\/ outputs the input value of the other process. 
Note that a process~$i$ receiving the value~0 from \TS\/ has the guarantee that the input value of the other process was written in memory before process~$i$ performs a snapshot, as if it was not running \TS\/ solo (if the process would have run \TS\/ solo, it would have obtained the value~1 from \TS). 

Figure~\ref{fig:2-proc-consensus-solvable} illustrates why 2-process binary consensus is solvable with access to a \TS\/ object. After one round, every process $i\in\{1,2\}$ which sees only itself when reading the shared memory must win in \TS, and thus gets a view $(1,\{(i,x_i)\})$, where $x_i\in\{0,1\}$ is the input of process~$i$. Every process~$i$ which saw the other process~$j$ gets a view $(b_i,\{(i,x_i),(j,x_j)\})$ where $x_i$ and $x_j$ are the input values of processes~$i$ and~$j$, respectively, and $b_i\in\{0,1\}$ is the value produced by \TS\/ for process~$i$. The mapping $f:\m{P}^{(1)}\to\m{O}$ displayed on Figure~\ref{fig:2-proc-consensus-solvable} is simplicial, and its composition with the map 
$\Xi:\m{I}\to \m{P}^{(1)}$
agrees with the specification of consensus.

\begin{figure}[tb]
	\centering
	\includegraphics[width=12cm]{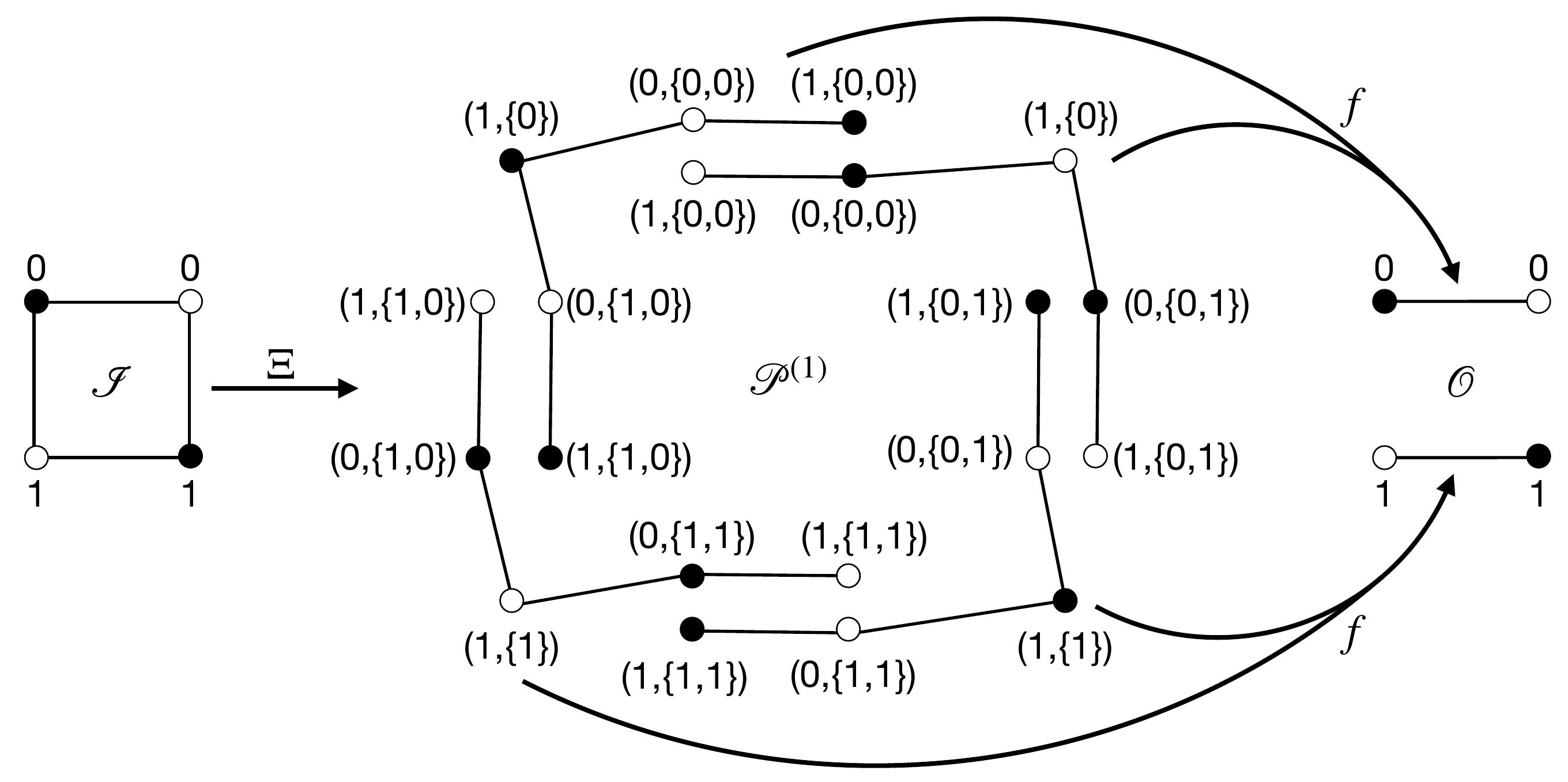}
	\caption{\sl 2-process binary consensus is solvable using \TS}
	\label{fig:2-proc-consensus-solvable}
\end{figure}

In contrast, we show that for more than two processes, binary consensus is not solvable even using \TS\/, by applying Theorem~\ref{thm:speedup-extended}. Figure~\ref{fig:prot-compl-test-and-set} describes the protocol complex $\m{P}^{(1)}(\sigma)$ for a 2-dimensional simplex~$\sigma$ for the immediate snapshot model augmented with \TS. For each $i\in\{1,2,3\}$, instead of four vertices with same ID~$i$ as in the 12-vertex chromatic subdivision of the triangle resulting from immediate snapshot, Figure~\ref{fig:prot-compl-test-and-set} displays seven vertices with the same ID~$i$, each vertex of the chromatic subdivision being duplicated depending on the outcome~0 or~1 of \TS, except the vertex corresponding to a solo execution for which the outcome of \TS\/ is necessarily~1. 

\begin{figure}[tb]
	\centering
	\includegraphics[width=10cm]{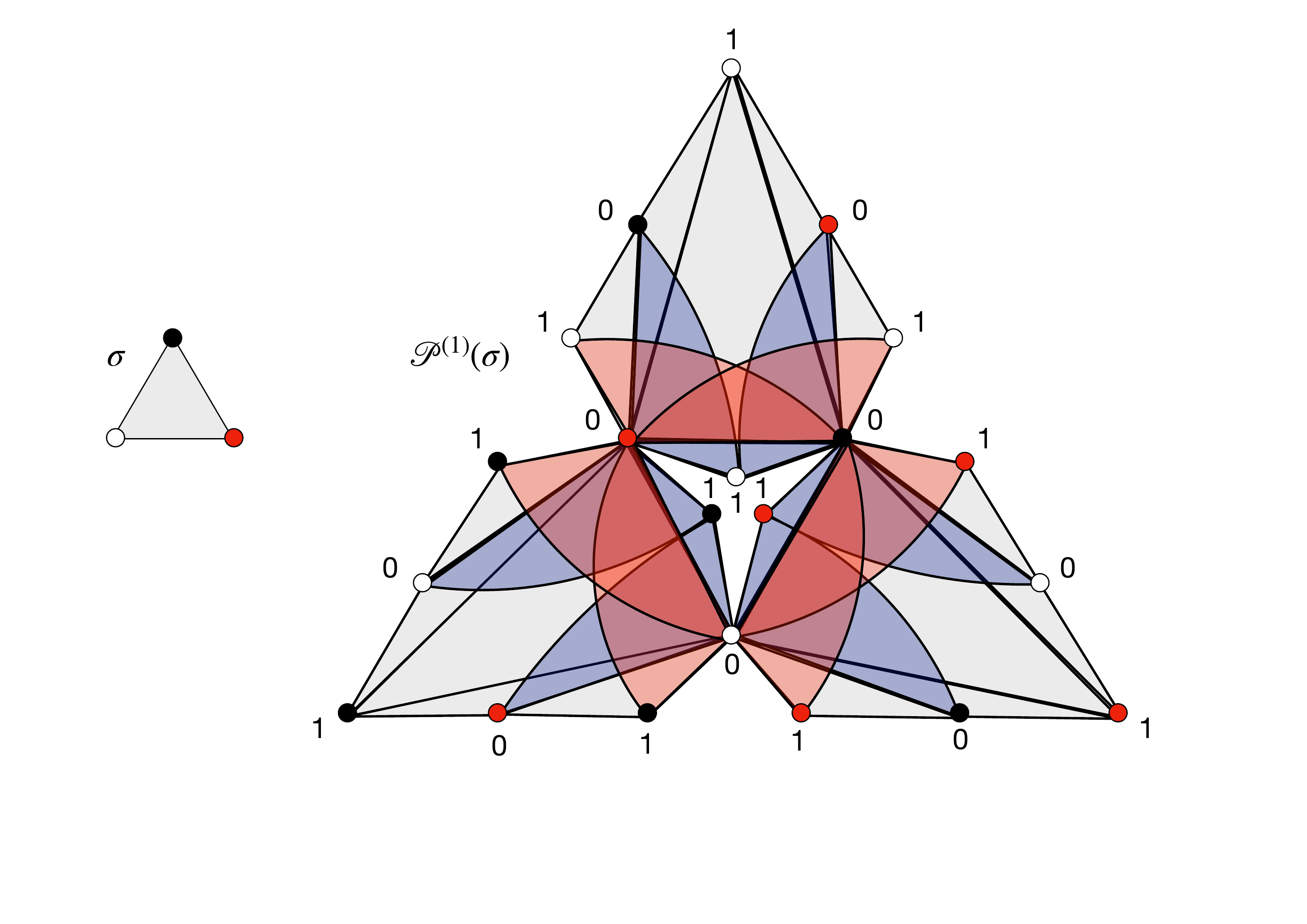}
	\caption{\sl 1-round protocol complex for three processes using \TS}
	\label{fig:prot-compl-test-and-set}
\end{figure}

\begin{corollary}
	\label{cor: consensus impossibility with ts}
	For every $n>2$, binary consensus among $n$ processes is impossible to solve  in the wait-free IIS model augmented with \TS.
\end{corollary}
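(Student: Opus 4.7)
The plan is to follow the pattern of Corollary~\ref{cor:impossibility-consensus} using Theorem~\ref{thm:speedup-extended}, with an extra wrinkle: binary consensus is \emph{not} a literal fixed point of $\cl_M$ in this setting, because two-process consensus is solvable in one round with \TS\/ (cf.~Figure~\ref{fig:2-proc-consensus-solvable}), so $\Delta'(\sigma)\supsetneq\Delta(\sigma)$ on bivalent inputs with $|\ID(\sigma)|=2$ and Lemma~\ref{lem:fixed-point} does not apply directly. Writing $\Delta^{(k)}$ for the input-output map of $\cl_M^k(\Pi)$, I would instead establish by induction on $k\geq 0$ the weaker invariant that $\Delta^{(k)}(\sigma)=\Delta(\sigma)$ for every $\sigma\in\m{I}$ with $|\ID(\sigma)|\geq 3$. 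This invariant suffices: iterating Theorem~\ref{thm:speedup-extended} $t$ times shows that a $t$-round algorithm for $\Pi$ in IIS+\TS\/ would yield a zero-round algorithm $(i,x_i)\mapsto f(i,x_i)$ for $\cl_M^t(\Pi)$; but on the monochromatic $n$-process input we still have $\Delta^{(t)}(\sigma)=\{\sigma\}$, forcing $f(i,0)=0$ and $f(i,1)=1$, which contradicts the monochromaticity required by $\Delta^{(t)}=\Delta$ on a bivalent $n$-process input.

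The inductive step reduces to the following key claim: for any bivalent $\sigma$ with $|\ID(\sigma)|\geq 3$ and any chromatic $\tau\subseteq V(\Delta(\sigma))$ with $\ID(\tau)=\ID(\sigma)$ that is \emph{not} a simplex of~$\Delta(\sigma)$, the local task $\Pi_{\tau,\sigma}$ is not one-round solvable in IIS+\TS. To prove this, fix any $i,j\in\ID(\tau)$, and, using $|\ID(\tau)|\geq 3$, pick some $k\in\ID(\tau)\setminus\{i,j\}$. Letting $V_{\mathrm{full}}=\{(\ell,y_\ell):\ell\in\ID(\sigma)\}$, I construct a three-edge path in $\m{P}^{(1)}(\tau)$ with successive vertices $v_1=(i,(1,\{(i,y_i)\}))$, $v_2=(j,(0,V_{\mathrm{full}}))$, $v_3=(i,(0,V_{\mathrm{full}}))$, and $v_4=(j,(1,\{(j,y_j)\}))$, each edge $\{v_1,v_2\},\{v_2,v_3\},\{v_3,v_4\}$ carrying the ID set $\{i,j\}$. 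The three edges are realized respectively by the executions: $(a)$ $i$ runs solo first, then the remaining processes run in some order with $j$ last; $(b)$ $k$ runs solo first, then the processes in $\ID(\sigma)\setminus\{i,j,k\}$ run in some order, and finally $\{i,j\}$ run concurrently; and $(c)$ $j$ runs solo first, then the remaining processes run in some order with $i$ last. In each execution the first-acting process wins \TS\/ and every other process loses it, which matches the \TS\/ values and views dictated by the path. As in Corollary~\ref{cor:impossibility-consensus}, any simplicial map $g:\m{P}^{(1)}(\tau)\to\Delta(\sigma)$ agreeing with $\D_{\tau,\sigma}$ must map $v_1$ to $(i,y_i)$, $v_4$ to $(j,y_j)$, and each of the three $\{i,j\}$-edges into $\proj_{\{i,j\}}(\Delta(\sigma))=\{\proj_{\{i,j\}}(\tau_0),\proj_{\{i,j\}}(\tau_1)\}$; since both members of this set are monochromatic edges, propagating binary values along the path forces $y_i=y_j$, and $\tau$ is monochromatic.

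The main obstacle is precisely this path construction, which is where the assumption $n\geq 3$ gets used in an essential way: in the pure IIS proof the two-process path through the chromatic subdivision of~$\tau$ suffices, but once \TS\/ is added the projection of $\m{P}^{(1)}(\tau)$ onto views involving only $\{i,j\}$ disconnects the solo-$i$ vertex from the solo-$j$ vertex (this is the topological reflection of \TS\/ solving two-process consensus). Routing each intermediate edge through an execution in which a third process wins \TS\/ first neutralizes \TS's agreement power between $i$ and $j$, so the monochromatic-edge argument of Corollary~\ref{cor:impossibility-consensus} carries over unchanged. The induction itself is immediate, since Definition~\ref{def:closure} determines $\Delta^{(k)}(\sigma)$ solely from $\Delta^{(k-1)}(\sigma)$ (and not from closures at subfaces), and the zero-round contradiction at the end is the standard valency-style argument on the $n$-process input simplex.
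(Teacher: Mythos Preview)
Your proof is correct and reaches the same conclusion, but it diverges from the paper's argument in two respects.

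First, to circumvent the failure of the literal fixed-point property, the paper replaces consensus by the weaker task $\Pi$ in which agreement is required only when at least three processes participate; this weaker task \emph{is} a fixed point of $\cl_M$, so Lemma~\ref{lem:fixed-point} applies directly. You instead keep the original consensus task and run an induction on iterated closures, using the (correct) observation that $\Delta'(\sigma)$ in Definition~\ref{def:closure} depends only on $\Delta(\sigma)$. Both workarounds are valid; the paper's is slightly cleaner in that it avoids the inductive bookkeeping, while yours has the minor advantage of not introducing an auxiliary task.

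Second, the core combinatorial step is organized differently. The paper argues via two $2$-simplices $\rho_{i,j,k}$ and $\rho_{j,i,k}$ of $\m{P}^{(1)}(\tau)$ that share the vertex $\big(k,(0,\{(i,y_i),(j,y_j),(k,y_k)\})\big)$: each triangle maps into the monochromatic part of $\proj_{\{i,j,k\}}(\Delta(\sigma))$, so the common $k$-vertex pins a value $\gamma$ and forces $y_i=\gamma=y_j$. You instead mimic the edge-path argument of Corollary~\ref{cor:impossibility-consensus}, routing the middle edge through full-view vertices of $i$ and $j$ in an execution where a third process $k$ wins \TS\/; this restores the connectivity between the solo-$i$ and solo-$j$ vertices that \TS\/ would otherwise break. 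Both arguments use the third process in an essential way, and both pin the images in monochromatic sub-simplices of $\Delta(\sigma)$; the paper's triangle version is a touch more direct, while your path version keeps closer continuity with the \TS-free proof.
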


\begin{proof}
	Let $M$ be the model of iterated immediate snapshot with a \TS{} object. 
	We cannot proceed exactly as in the proof of Corollary~\ref{cor:impossibility-consensus}, as the binary consensus task $\Pi_{\text{\tiny CON}}=(\m{I},\m{O},\Delta)$ for $n>2$ processes is not formally a fixed point for~$M$. 
	The reason for this is that, in $\Pi_{\text{\tiny CON}}$, even if only two processes~$i$ and~$j$ participate, they must agree on the same value, while in the closure task $\cl_M(\Pi_{\text{\tiny CON}})=(\m{I},\m{O}',\Delta')$ this is not the case. 
	Formally, since $2$-process consensus is solvable using \TS{} in one round, for $\sigma=\{(i,0),(j,1)\}$ we have  $\Delta'(\sigma)=\big\{\{(i,y_i),(j,y_j)\}:(y_i,y_j)\in\{0,1\}^2\big\}$. 
	
	Let us instead consider the simpler task $\Pi=(\m{I},\m{O},\Delta)$, where the input and output values and complexes are as in consensus, the validity condition must still be fulfilled (every output value must be an input value),
	but, only if at least three processes participate, they must output the same value.
	Every consensus algorithm also solves this problem, so it is enough to prove the impossibility of $\Pi$.
	We now prove that $\Pi$ is a fixed point of $M$.
	
	Let $\cl_M(\Pi)=(\m{I},\m{O}',\Delta')$ be the closure task of $\Pi$. Let us 
	fix simplices $\sigma=\{(i,x_i)\mid i\in I\}\in\m{I}$ and $\tau=\{(i,y_i),\mid i\in I\}\in\Delta'(\sigma)$
	for a non-empty set $I\subseteq[n]$.
	If $|I|\le 2$,  then
	any subset of $ \tau'\subseteq V(\Delta(\sigma))$ is already a simplex of $\Delta(\sigma)$, and we are done.
	Let us now consider the case of $|I|\geq 3$. 
	Let $i,j,k\in I$, and let $y_i,y_j,y_k$ be the values associated with the three processes~$i,j,k$ in~$\tau$.
	Our goal is to show that $y_i=y_j=y_k$, which will imply that $\tau\in \Delta(\sigma)$.
	
	By the definition of a closure of a task,
	we know that the local task $\Pi_{\tau,\sigma}=(\tau,\Delta(\sigma),\D_{\tau,\sigma})$ is solvable in one round.
	Let $f:\m{P}^{(1)}\to \m{O}$
	be a simplicial map solving $\Pi_{\tau,\sigma}$.
	For every $v=(\ell,y_\ell)\in\tau$, $\ell\in I$, we have $\D_{\tau,\sigma}(v)=\{v\}$, 
	hence $f\big(\ell,(t_\ell,\{\ell,y_\ell)\})\big)= (\ell,y_\ell)$, where~$t_\ell$ is the output of \TS, which is actually equal to~$1$ since the vertex $\big(\ell,(t_\ell,\{(\ell,y_\ell)\})\big)$ represents a solo step by process~$\ell$.
	
	In $\m{P}^{(1)}(\sigma)$, let us consider the two simplices (see Figure~\ref{fig:n-proc-consensus-not-solvable})
	\[
	\rho_{i,j,k}=\Big\{
	\big(i,(1,\{(i,y_i)\})\big),  \big(j,(0,\{(i,y_i),(j,y_j)\})\big),     
	\big(k,(0,\{(i,y_i),(j,y_j),(k,y_k)\})\big)
	\Big\},
	\]
	and 
	\[
	\rho_{j,i,k}=\Big\{
	\big(j,(1,\{(j,y_j)\})\big),  \big(i,(0,\{(i,y_i),(j,y_j)\})\big),     
	\big(k,(0,\{(i,y_i),(j,y_j),(k,y_k)\})\big)
	\Big\}.
	\]
	Let $\gamma=f\big(k,(0,\{(i,y_i),(j,y_j),(k,y_k)\})\big)$ be the output of $f$ on the vertex $\big(k,(0,\{(i,y_i),(j,y_j),(k,y_k)\})\big)$ of $\m{P}^{(1)}(\sigma)$.
	Since $f$ agrees with $\Delta_{\tau,\sigma}$, we have that $f(\rho_{i,j,k})\subseteq\proj_{\{i,j,k\}}(\Delta(\sigma))$, and the definition of $\Delta$ implies that, in $\Delta(\sigma)$, all processes output the same value $y_i=\gamma$.
	A similar argument on $f(\rho_{j,i,k})$ implies that $y_j=\gamma$, i.e., $y_i=y_j$.
	By applying this argument for every $\ell\in I$, we get that all processes decide the same value 
	$\gamma$ in $\tau$, that is, 
	$\tau=\{(\ell,\gamma)\mid \ell\in I\}$, which implies 
	that $\tau\in\Delta(\sigma)$, as claimed.
	
	The rest of the proof is similar to the proof of Lemma~\ref{lem:fixed-point}, but using 
	Theorem~\ref{thm:speedup-extended}. That is, 
	if $\Pi$ was solvable in $t$ rounds in $M$ among $n\geq 3$ processes, then $\Pi$ would have also been solvable in $t-1$ rounds, and, similarly, in $0$~rounds.
	Since this is not the case, $\Pi$ is not solvable in $M$ at all.
	As mentioned above, $\Pi$ is simpler than consensus, and so, for $n\geq 3$ processes,
	consensus is not solvable.
\end{proof}

\begin{figure}[tb]
	\centering
	\includegraphics[width=9cm]{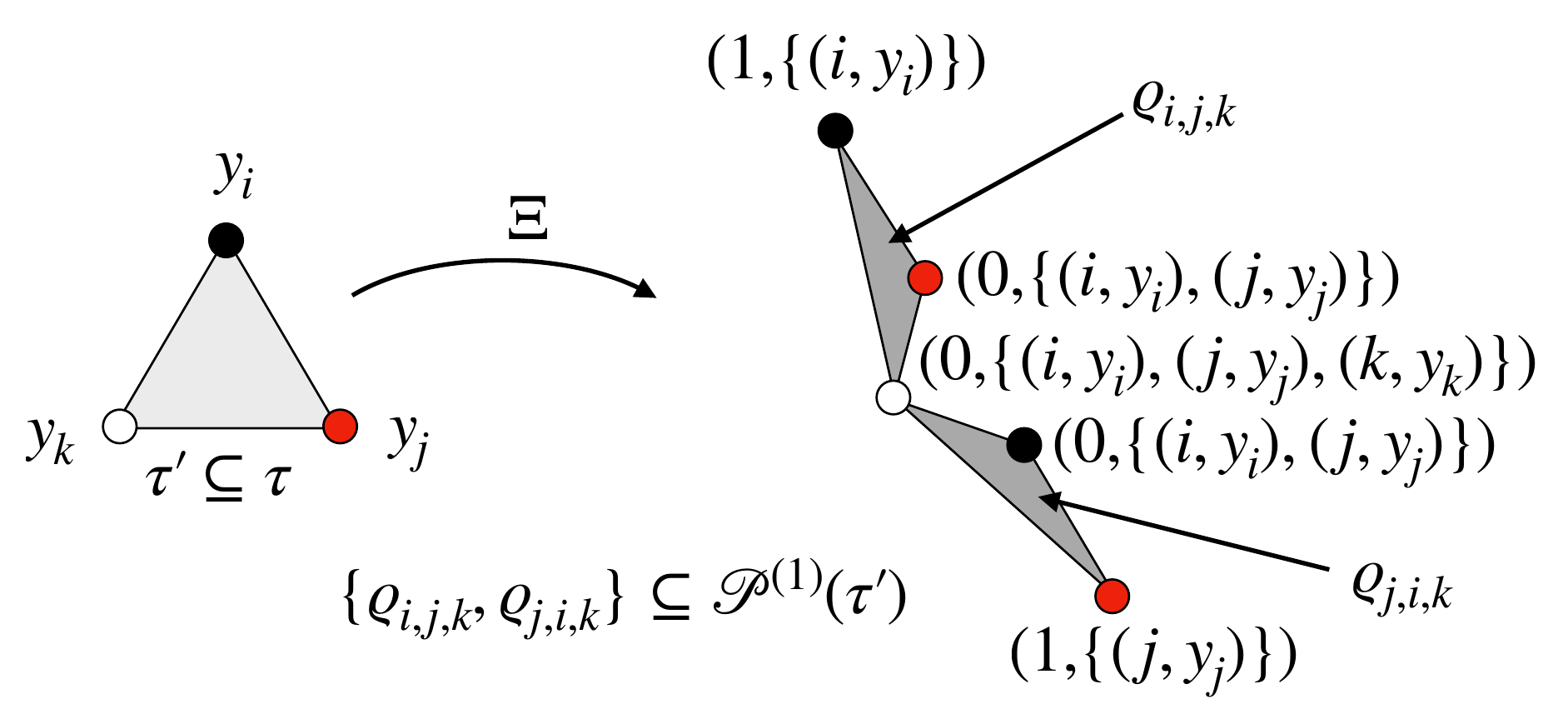}
	\caption{\sl Consensus is not solvable among $n>2$ processes, even using \TS}
	\label{fig:n-proc-consensus-not-solvable}
\end{figure}

\section{Lower Bounds for Approximate Agreement}
\label{sec:lowerApp}

In this section, we apply our speedup theorem, and its extension, to the \emph{approximate agreement} task.
This task is a relaxation of the consensus problems, where processes have to decide on values that in the interval defined by the input values, 
and are close to one another
(but not necessarily identical as in consensus).
Given a (small) real value $\epsilon>0$, in order to accurately define the $\epsilon$-approximate agreement task while using only finite topological objects, we assume that $\epsilon$ and all the input values are rational numbers in the interval $[0,1]$. Specifically, we discretize this interval as follows.
We choose a (large) integer $m\geq 1/\epsilon$ such that $\epsilon$ is an integral multiple of $1/m$ (i.e., $m\epsilon$ is an integer), and so are all the possible input values of the task.
We make sure that all the output values are also integral multiples of $1/m$. To this end, we avoid the usage of averaging algorithms, which are common in other works on approximate agreement.
We formally define $\epsilon$ approximate agreement as follows.

\begin{definition}\label{def:approx-agree}
	Given a parameter $\epsilon\in(0,1]$ which is an integral multiple of $1/m$ for some integer $m\geq 1$, 
	the \emph{$\epsilon$-approximate agreement} 
	for $n$ processes is the task $(\m{I},\m{O},\Delta)$, where
	\begin{align*}
		&\m{I}
		=\big \{\{(i,x_i):i\in I\}\mid 
		(\varnothing\neq I\subseteq [n]) \;\wedge\; (\forall i\in I, \; x_i\in\{0,1/m,2/m,\ldots,1\})\big\}\\
		&\m{O}
		=\big\{ \{(i,y_i):i\in I\}\mid
		(\varnothing\neq I\subseteq [n]),\;  (\forall i\in I, 
		y_i\in\{0,1/m,2/m,\ldots,1\})\; \wedge \; (\forall i,j\in I, |y_i-y_j|\leq\epsilon)\big\}\\
		&\Delta:\m{I}\to 2^{\m{O}}
		\text{ where } \Delta\big(\{(i,x_i):i\in I\}\big)= 
		\big\{ \{(i,y_i):i\in I\}\in\m{O}\mid \forall j\in I, \min\{x_i:i\in I\}\leq y_j\leq\max\{x_i:i\in I\} \big\}
	\end{align*}
\end{definition}

We are interested in solving approximate agreement when the processes have access to an object that can be used in order to solve consensus among two processes, and hence also $\epsilon$-approximate agreement for every $\epsilon$.
This creates a technical difficulty when applying our speed up technique, even when more than two processes are present. 
To overcome this, we define a relaxed version of $\epsilon$-approximate agreement,
where 
if exactly two processes participate, then they only need to output values in the range of inputs, but there is no bound on the distance between them.
\begin{definition}\label{def:liberal}
	The \emph{liberal version of $\epsilon$-approximate agreement} is
	identical to the $\epsilon$-approximate agreement task, except that 
	the output complex $\m{O}$ contains also all $1$-dimensional chromatic simplices, regardless of the distances between their values.
	The map $\Delta$ specifying the task is formally defined the same, which means that for $1$-dimensional input simplices it now contains more allowed output simplices.
\end{definition}
In fact, this is the type of relaxation we apply on the consensus task in the proof of Corollary~\ref{cor: consensus impossibility with ts}.
Note that any algorithm solving $\epsilon$-approximate agreement also solves the liberal version of $\epsilon$-approximate agreement, hence any lower bound for the latter implies a lower bound for the former.

\subsection{Approximate Agreement in the Wait-Free IIS Model}
\label{app:impossibility-approx-aggreement-IIS}

Using our tools, we can easily 
reprove the fact that $\epsilon$-approximate agreement takes $\Omega(\log 1/\epsilon)$ rounds in the standard IIS model~\cite{HoestS06}. For this, we show that the closure of $\epsilon$-approximate agreement  for two processes is $(3\epsilon)$-approximate agreement, and $(2\epsilon)$-approximate agreement for more processes. 
We start by proving some claims --- the first of them is almost a triviality.

\begin{claim}\label{claim:epprox-agree-zero-rounds}
	For $\epsilon<1$, 
	$\epsilon$-approximate agreement is not solvable in $0$ rounds, and so is the liberal version of $\epsilon$-approximate agreement for $n\geq 3$ processes.
\end{claim}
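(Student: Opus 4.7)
The plan is to exploit the solo-execution / validity constraint: a $0$-round algorithm assigns to each process~$i$ an output value that is a function only of the pair $(i,x_i)$, with no dependence on the other processes. Let me write $y_i = f(i,x_i)$ for this output. Applying the algorithm to the singleton input simplex $\sigma = \{(i,x_i)\}$, the only legal output simplex in $\Delta(\sigma)$ is $\{(i,x_i)\}$ itself, since $\min = \max = x_i$ forces $y_i = x_i$. Hence, for every process~$i$ and every admissible input~$x_i$, we must have $f(i,x_i)=x_i$. This pinpoint-constraint will then be used to produce an inadmissible output on a larger simplex.

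For the non-liberal task, I would instantiate a $2$-process input $\sigma = \{(1,0),(2,1)\}\in\m{I}$. The algorithm must output $\tau = \{(1,f(1,0)),(2,f(2,1))\} = \{(1,0),(2,1)\}$, which is \emph{not} a simplex of $\m{O}$ because $|0-1|=1>\epsilon$ (using the hypothesis $\epsilon<1$). Hence $\tau\notin\Delta(\sigma)$, a contradiction.

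For the liberal version with $n\geq 3$, the above argument does not work on $1$-dimensional simplices, since the liberal $\m{O}$ now contains every chromatic edge. So I would instead pick a $2$-dimensional input simplex, e.g.\ $\sigma = \{(1,0),(2,0),(3,1)\}\in \m{I}$. Since all three processes participate, the forced output $\tau = \{(1,0),(2,0),(3,1)\}$ is a $2$-dimensional chromatic set; but the definition of $\m{O}$ (which was only relaxed at dimension~$1$) still demands pairwise distance at most~$\epsilon$ on higher-dimensional simplices, and $|0-1|=1>\epsilon$ prevents $\tau$ from being in $\m{O}$, hence $\tau\notin\Delta(\sigma)$ — again a contradiction.

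There is essentially no obstacle; the only subtlety is being careful about which relaxation is in force for the liberal variant, which is why the claim needs $n\geq 3$ in that case (so that one can exhibit a forbidden $2$-dimensional output rather than merely a $1$-dimensional one). A single short paragraph combining the two cases via the common lemma ``$f(i,x_i)=x_i$ for $0$-round protocols'' suffices.
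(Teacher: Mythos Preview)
Your proposal is correct and follows essentially the same argument as the paper's proof: both derive $f(i,x_i)=x_i$ from the solo-execution constraint, then exhibit the forbidden output $\{(1,0),(2,1)\}$ for the standard task and a $2$-dimensional simplex (the paper uses $\{(1,0),(2,1),(3,0)\}$, you use $\{(1,0),(2,0),(3,1)\}$) for the liberal version with $n\geq 3$. The only cosmetic difference is the choice of the third process's input value.
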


\begin{proof}
	Let $\epsilon<1$,
	and assume for contradiction that there is a $0$-round $\epsilon$-approximate agreement algorithm,
	or a $0$-round algorithm solving the liberal version of $\epsilon$-approximate agreement among $n\geq 3$ processes.
	Consider a setting with inputs $(1,0)$ and $(2,1)$, i.e., where process $1$ has input $0$ and process $2$ has input $1$.
	In a solo-execution each process must output its own input, and 
	the setting where several processes run is indistinguishable to them from the one where each runs solo, so they will output $0$ and $1$ in all cases, contradicting the task definition.
	Formally, we have $\Delta(\{(1,0)\})=\{(1,0)\}$ and $\Delta(\{(2,1)\}))=\{(2,1)\}$.
	As $\Xi:\m{I}\to\m{P}^{(0)}$ is simply the identity map, in a $0$-round protocol any decision map $f$ goes directly from $\m{I}$ to $\m{O}$.
	Hence, any decision map $f:\m{I}\to \m{O}$ must satisfy $f(1,0)=(1,0)$ and $f(2,1)=(2,1)$.
	
	For the original $\epsilon$-approximate agreement task,
	this implies that the input simplex $\sigma=\{(1,0),(2,1)\}$, is mapped to  $f(\sigma)=\{(1,0),(2,1)\}\notin\Delta(\sigma)$
	(in fact, we even have $f(\sigma)\notin\m{O}$).
	For the liberal version of the task and $n\geq 3$, consider the same setting but with an extra ``dummy'' process $3$, with input $0$.
	As before, $f(3,0)=(3,0)$,
	and  $f(\{(1,0),(2,1),(3,0)\})=\{(1,0),(2,1),(3,0)\}\notin\Delta(\sigma)$,
	where here $\Delta$ is the specification of the liberal version of $\epsilon$-approximate agreement.
	Hence, in both cases no algorithm and decision map $f$ can solve the problem.
\end{proof}

The next claim is about the closure of approximate agreement in the specific case of two processes.

\begin{claim}\label{claim:epprox-agree-n=2-one-round}
	If $\Pi$ is the $\epsilon$-approximate agreement task for two processes,
	then $\cl_M(\Pi)$ w.r.t. the wait-free IIS model~$M$ is the $(3\epsilon)$-approximate agreement task for two processes.
\end{claim}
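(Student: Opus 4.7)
The plan is to identify $\cl_M(\Pi)$ with the $(3\epsilon)$-approximate agreement task by carefully analyzing the one-round IIS protocol complex on a $1$-simplex and checking when the local task $\Pi_{\tau,\sigma}$ is solvable in one such round. Throughout I let $\sigma=\{(1,x_1),(2,x_2)\}\in\m{I}$ and $\tau=\{(1,y_1),(2,y_2)\}\subseteq V(\Delta(\sigma))$, assume WLOG $y_1\le y_2$, and recall that $x_1,x_2,y_1,y_2,\epsilon$ are all integer multiples of~$1/m$. The degenerate case where $\sigma$ (and hence $\tau$) is a single vertex is trivial: $\Delta(\sigma)=\{\sigma\}$ forces $\tau=\sigma$, the local task is solvable in zero rounds, and this matches $(3\epsilon)$-approximate agreement.

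First I would describe the structure of $\m{P}^{(1)}(\tau)$ in the IIS model for two processes. There are exactly three executions (process~$1$ first, process~$2$ first, or truly concurrent), and the resulting complex is a path with three edges connecting the four vertices
\[
(1,\{(1,y_1)\}) \;\rule[2pt]{18pt}{1pt}\; (2,\{(1,y_1),(2,y_2)\}) \;\rule[2pt]{18pt}{1pt}\; (1,\{(1,y_1),(2,y_2)\}) \;\rule[2pt]{18pt}{1pt}\; (2,\{(2,y_2)\}).
\]
By Definition~\ref{def:closure}, $\tau\in\Delta'(\sigma)$ iff there exists a simplicial map $g:\m{P}^{(1)}(\tau)\to\Delta(\sigma)$ agreeing with $\D_{\tau,\sigma}$. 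Condition~$1$ of Definition~\ref{def:localtask} forces $g(i,\{(i,y_i)\})=(i,y_i)$ for $i\in\{1,2\}$; Condition~$2$, applied to $\tau'=\tau$, requires each of the three edges in the path to be mapped into $\Delta(\sigma)$, whose edges encode the $\epsilon$-closeness constraint.

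For the forward direction I would write $g(2,\{(1,y_1),(2,y_2)\})=(2,z_1)$ and $g(1,\{(1,y_1),(2,y_2)\})=(1,z_2)$, with $z_1,z_2$ necessarily in the input range $[\min\{x_1,x_2\},\max\{x_1,x_2\}]$ since they are output values in $\Delta(\sigma)$. The three edges of the path being in $\Delta(\sigma)$ yields $|y_1-z_1|\le\epsilon$, $|z_1-z_2|\le\epsilon$, and $|z_2-y_2|\le\epsilon$, and the triangle inequality gives $|y_1-y_2|\le 3\epsilon$. For the reverse direction, assuming $|y_1-y_2|\le 3\epsilon$, I would explicitly construct $g$ by setting $z_1=\min\{y_1+\epsilon,\,y_2\}$ and $z_2=\max\{y_1,\,y_2-\epsilon\}$. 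These values lie in $[y_1,y_2]$, are multiples of $1/m$, and a short case analysis on whether $y_2-y_1$ is in $[0,\epsilon]$, $[\epsilon,2\epsilon]$, or $[2\epsilon,3\epsilon]$ confirms that all three consecutive differences along the path are at most~$\epsilon$, so the resulting $g$ is simplicial and agrees with $\D_{\tau,\sigma}$.

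The main obstacle I foresee is a bookkeeping one rather than a conceptual one: checking that the constructed $z_1,z_2$ remain valid output values (integer multiples of $1/m$ that stay within the input range) in every subcase, and that no edge of $\m{P}^{(1)}(\tau)$ is overlooked. Once these are verified, combining the two directions yields $\Delta'(\sigma)=\{\{(1,y_1),(2,y_2)\}\subseteq V(\Delta(\sigma)):|y_1-y_2|\le 3\epsilon\}$, which is exactly the edge-set of $(3\epsilon)$-approximate agreement on the same input complex, proving the claim.
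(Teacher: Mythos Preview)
Your proposal is correct and follows essentially the same approach as the paper: both argue via the three-edge path structure of $\m{P}^{(1)}(\tau)$, obtain $|y_1-y_2|\le 3\epsilon$ by the triangle inequality for the forward inclusion, and exhibit an explicit one-round solution for the reverse inclusion. The only cosmetic difference is in the explicit choice of the two intermediate values: the paper takes $z=\min\{y_2,y_1+\epsilon\}$ and then $\min\{y_2,z+\epsilon\}$, whereas you use the symmetric pair $z_1=\min\{y_1+\epsilon,y_2\}$ and $z_2=\max\{y_1,y_2-\epsilon\}$; both constructions work for the same reason.
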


\begin{proof}
	Let $\Pi_\epsilon=(\m{I},\m{O}_\epsilon,\Delta_\epsilon)$ be the $\epsilon$-approximate agreement task for $n=2$ processes, and let $\cl_M(\Pi_\epsilon)=(\m{I},\m{O}'_\epsilon,\Delta'_\epsilon)$ be its closure in the considered model~$M$. Let us denote the $(3\epsilon)$-approximate agreement task by $\Pi_{3\epsilon}=(\m{I},\m{O}_{3\epsilon},\Delta_{3\epsilon})$. We want to show that, for every $\sigma\in\m{I}$, $\Delta'_\epsilon(\sigma)=\Delta_{3\epsilon}(\sigma)$. We consider separately the cases where $\sigma$ is a vertex, and $\sigma$ is an edge (recall that we are in the setting involving $n=2$ processes).

	Let $\sigma \in \m{I}$ be a vertex.
	Let $\tau\in\Delta'_\epsilon(\sigma)$, which immediately implies $\tau\in\Delta_\epsilon(\sigma)$. 
	As $\sigma$ is a vertex, we have  $\Delta_\epsilon(\sigma)=\{\sigma\}$, from which we derive 
	$\tau=\sigma$. 
	We also have $\Delta_{3\epsilon}(\sigma)=\{\sigma\}$, and hence
	$\tau\in \Delta_{3\epsilon}(\sigma)$. 
	Conversely, let $\tau\in \Delta_{3\epsilon}(\sigma)$, and again $\Delta_{3\epsilon}(\sigma)=\{\sigma\}$ implies $\tau=\sigma$. The local task $(\sigma,\Delta_\epsilon(\sigma),\Delta_{\sigma,\sigma})$ is trivially solvable in one round---it is even solvable in zero rounds using the mapping $f(\sigma)=\sigma$---and hence $\tau\in\Delta'_\epsilon(\sigma)$. 
	We conclude that for vertices, $\Delta'_\epsilon$ and $\Delta_{3\epsilon}$ are identical.

	Let $\sigma=\{(1,x_1),(2,x_2)\}\in\m{I}$ be an edge. We consider an edge $\tau=\{(1,y_1),(2,y_2)\}\in\Delta_\epsilon'(\sigma)$, and show
	$\tau\in\Delta_{3\epsilon}(\sigma)$.
	To this end, we show 
	(1)~$\min\{x_1,x_2\}\leq y_i\leq\max\{x_1,x_2\}$ 
	for all $i\in \{1,2\}$,
	and 
	(2)~$|y_1-y_2|\leq 3\epsilon$.
	Since $\tau\in\Delta_\epsilon'(\sigma)$,
	the local task $\Pi_{\tau,\sigma}=(\tau,\Delta_\epsilon(\sigma),\D_{\tau,\sigma})$ is solvable in one round, and let $f:\m{P}^{(1)}(\tau)\to \Delta_\epsilon(\sigma)$
	be a simplicial map solving it.
	Since $\tau\subseteq V(\Delta_\epsilon(\sigma))$, we know that every vertex $v=(i,y_i)\in\tau$, $i\in\{1,2\}$, satisfies 
	$\min\{x_1,x_2\}\leq y_i\leq\max\{x_1,x_2\}$, 
	and thus property~(1) holds.
	Let us denote by $\gamma_1,\gamma_2$ the two values satisfying
	\[
	\begin{array}{rcl}
		f(1,\{(1,y_1)\}) & = 		&(1,y_1),\\
		f(2,\{(1,y_1),(2,y_2)\}) & = 	&(2,\gamma_2),\\
		f(1,\{(1,y_1),(2,y_2)\}) & = 	&(1,\gamma_1),\\
		f(2,\{(2,y_2)\}) & = 		&(2,y_2).
	\end{array}
	\]
	Since $f$ solves the local task $\Pi_{\tau,\sigma}=(\tau,\Delta_\epsilon(\sigma),\D_{\tau,\sigma})$, we have 
	$|y_2-y_1|
	\leq |y_2-\gamma_1|+|\gamma_1-\gamma_2|+|\gamma_2-y_1|
	\leq 3\epsilon$. Property~(2) thus holds as well, implying $\tau\in\Delta_{3\epsilon}(\sigma)$. 
	
	Conversely, let us consider $\tau=\{(1,y_1),(2,y_2)\}\in\Delta_{3\epsilon}(\sigma)$,
	and let us assume, w.l.o.g., that $y_1\leq y_2$.  
	Let $f:\m{P}^{(1)}(\tau)\to \Delta_\epsilon(\sigma)$ defined as follows, by setting $z= \min\{y_2,y_1+\epsilon\}$, and 
	\begin{equation}
		\begin{array}{rcl}
			f(1,\{(1,y_1)\}) & = 		&(1,y_1),\\
			f(2,\{(1,y_1),(2,y_2)\}) & = 	& (2,z),\\
			f(1,\{(1,y_1),(2,y_2)\}) & = 	&(1,\min\{y_2,z+\epsilon\}),\\
			f(2,\{(2,y_2)\}) & = 		&(2,y_2).
		\end{array}
		\label{eq:algo-for-2-proc}
	\end{equation}
	The map $f$ solves $\epsilon$-approximate agreement whenever $y_2-y_1\leq 3\epsilon$, and therefore it solves the local task $\Pi_{\tau,\sigma}=(\tau,\Delta_\epsilon(\sigma),\D_{\tau,\sigma})$ in one round. It follows that $\tau\in\Delta'_\epsilon(\sigma)$. 
	We conclude that, for edges as well, $\Delta'_\epsilon$ and $\Delta_{3\epsilon}$ are identical. 
\end{proof}

The next claim is about the closure of the liberal version of approximate agreement in the case of three or more processes.

\begin{claim}\label{claim:epprox-agree-n>2-one-round}
	If $\Pi=(\m{I},\m{O},\Delta)$ is the liberal version of $\epsilon$-approximate agreement with $n\ge 3$ processes,
	then the closure $\cl_M(\Pi)$ of $\Pi$ with respect to the wait-free IIS model~$M$ is the liberal version of $(2\epsilon)$-approximate agreement.
\end{claim}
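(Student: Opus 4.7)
The plan is to mimic Claim~\ref{claim:epprox-agree-n=2-one-round} and verify the two containments $\Delta_{2\epsilon}(\sigma)\subseteq\Delta'(\sigma)$ and $\Delta'(\sigma)\subseteq\Delta_{2\epsilon}(\sigma)$ for each $\sigma\in\m{I}$, where $\Delta$ and $\Delta_{2\epsilon}$ denote the liberal $\epsilon$- and $(2\epsilon)$-approximate agreement specifications, and $\Delta'$ is the one associated with $\cl_M(\Pi)$. The boundary cases are immediate: for $|\ID(\sigma)|=1$ both maps equal $\{\sigma\}$, and for $|\ID(\sigma)|=2$ the liberal relaxation strips the agreement constraint on edges, so both maps return exactly the chromatic edges with values in $[\min\sigma,\max\sigma]$ (the corresponding local task being solvable in zero rounds by having each process output its own input). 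The substance of the proof is the case $|\ID(\sigma)|\geq 3$.

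For the crucial direction $\Delta'(\sigma)\subseteq\Delta_{2\epsilon}(\sigma)$, I fix $\tau=\{(i,y_i):i\in I\}\in\Delta'(\sigma)$, take a simplicial map $f:\m{P}^{(1)}(\tau)\to\Delta(\sigma)$ agreeing with~$\D_{\tau,\sigma}$, and consider arbitrary $i,j\in I$. Since $|I|\geq 3$, I pick a third process $k\in I\setminus\{i,j\}$, set $V=\{(i,y_i),(j,y_j),(k,y_k)\}$, and introduce the two IIS $2$-simplices
\[
\rho_1=\bigl\{(i,\{(i,y_i)\}),\,(j,V),\,(k,V)\bigr\}, \qquad
\rho_2=\bigl\{(j,\{(j,y_j)\}),\,(i,V),\,(k,V)\bigr\}
\]
of $\m{P}^{(1)}(\tau)$, which share the vertex $w=(k,V)$; their validity follows from the pairwise inclusion of the attached views. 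Since $|\ID(\rho_\ell)|=3$, each image $f(\rho_\ell)$ is a $2$-simplex in $\proj_{\{i,j,k\}}(\Delta(\sigma))$, and the liberal $\epsilon$-agreement specification requires such $2$-simplices to be pairwise $\epsilon$-close. The solo condition of $\D_{\tau,\sigma}$ forces $f(i,\{(i,y_i)\})=(i,y_i)$ and $f(j,\{(j,y_j)\})=(j,y_j)$, while $f(w)=(k,\gamma)$ is a single value common to both simplices. Hence $|y_i-\gamma|\leq\epsilon$ and $|y_j-\gamma|\leq\epsilon$, and the triangle inequality yields $|y_i-y_j|\leq 2\epsilon$, as needed.

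For the reverse direction $\Delta_{2\epsilon}(\sigma)\subseteq\Delta'(\sigma)$, given $\tau\in\Delta_{2\epsilon}(\sigma)$ I would display a one-round IIS algorithm for $\Pi_{\tau,\sigma}$: each process writes its value $y_i$, performs an immediate snapshot to get a view~$V_i$, and outputs $y_i$ if $|V_i|=1$ and otherwise the discretized midpoint $c_i:=\min V_i+\lfloor m(\max V_i-\min V_i)/2\rfloor/m$. All outputs are multiples of $1/m$ lying in $[\min\tau,\max\tau]\subseteq[\min\sigma,\max\sigma]$, so the range condition is met; and from the IIS property that the views of co-participating processes are linearly ordered by inclusion, combined with the bound $\max V_i-\min V_i\leq 2\epsilon$ inherited from $\tau\in\Delta_{2\epsilon}(\sigma)$, a short case analysis shows that any two outputs in the same execution differ by at most $\epsilon$, fulfilling the $\epsilon$-agreement that $\D_{\tau,\sigma}$ imposes on simplices of dimension at least~$2$ (for $|\tau'|=2$ only the range condition is required by liberality). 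I expect the main obstacle to be precisely this discretized-midpoint calculation: the naive arithmetic midpoint need not lie on the grid, and verifying the $\epsilon$-bound requires splitting on the parity of $m(\max V_i-\min V_i)$ and using that $\epsilon$ is itself a multiple of~$1/m$. The topological direction, by contrast, becomes routine once the two simplices $\rho_1,\rho_2$ sharing the vertex $w$ are identified.
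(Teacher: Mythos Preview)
Your argument is correct and follows essentially the same route as the paper's. For the direction $\Delta'_\epsilon(\sigma)\subseteq\Delta_{2\epsilon}(\sigma)$, both proofs pick a third process~$k$, use that a solo-run process must output its own value, and apply the triangle inequality through the common value~$\gamma$ at the vertex $(k,V)$; you package this via two $2$-simplices sharing $(k,V)$, whereas the paper uses the two edges $\{(i,\{(i,y_i)\}),(k,V)\}$ and $\{(j,\{(j,y_j)\}),(k,V)\}$ directly, but the content is identical.

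Two small remarks. First, your parenthetical ``for $|\tau'|=2$ only the range condition is required by liberality'' is not correct when $|\ID(\sigma)|\ge 3$: in that case $\Delta_\epsilon(\sigma)$ is the complex generated by its $(|\sigma|-1)$-dimensional facets, all of which are $\epsilon$-close, and hence so are all its edges; the liberal relaxation adds edges to $\m{O}$ but not to $\Delta_\epsilon(\sigma)$ for such~$\sigma$. This does not break your proof, since your midpoint rule does in fact produce $\epsilon$-close outputs in every IIS execution (your case analysis goes through), but the justification should be amended. Second, the paper sidesteps your anticipated discretization headache entirely by using the rule
\[
f\big(i,\{y_j:j\in J\}\big)=\Big(i,\min\big\{\max_{j\in J}y_j,\ \min_{j\in J}y_j+\epsilon\big\}\Big)
\]
instead of a midpoint: since $\epsilon$ and all $y_j$ are multiples of $1/m$, the output is automatically on the grid, and the $\epsilon$-bound follows from a two-case argument on whether $\max V'\le\min V+\epsilon$ for the smaller of two nested views $V'\subseteq V$. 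This collapses the reverse inclusion to a few lines.
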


\begin{proof}
	Let $\Pi_\epsilon=(\m{I},\m{O}_\epsilon,\Delta_\epsilon)$ be the liberal version of $\epsilon$-approximate agreement with $n\geq 3$ processes, and let $\cl_M(\Pi_\epsilon)=(\m{I},\m{O}'_\epsilon,\Delta'_\epsilon)$. 
	Let us denote the liberal version of $(2\epsilon)$-approximate agreement by $\Pi_{2\epsilon}=(\m{I},\m{O}_{2\epsilon},\Delta_{2\epsilon})$. 
	We want to show that, for every $\sigma\in\m{I}$, $\Delta'_\epsilon(\sigma)=\Delta_{2\epsilon}(\sigma)$. If $\sigma$ is a vertex, this equality holds exactly by the same reasoning as in the proof of 
	Claim~\ref{claim:epprox-agree-n=2-one-round}. 
	We hence focus on the case where $|\sigma|\geq 2$.
	
	Fix an input simplex $\sigma=\{(i,x_i)\mid i\in I\}\in\m{I}$, for some non-empty set $I\subseteq[n]$.
	If $|I|=1$, i.e., $I=\{i\}$, then $\sigma=\{(i,x_i)\}\in\m{I}$,
	and $\Delta'_\epsilon(\{(i,x_i)\})=
	\{(i,x_i)\}=
	\Delta_{2\epsilon}(\{(i,x_i)\})$.
	If $|I|=2$, then the fact that the liberal version of the tasks does not depend on $\epsilon$ makes the maps equal:
	$\Delta'_\epsilon(\sigma)=
	\{(j,y_j)\mid j\in I,\; \min\{x_i:i\in I\}\leq y_j\leq\max\{x_i:i\in I\}\}=	\Delta_{2\epsilon}(\sigma)$.
	For the case $|I|\geq 3$,
	we prove containment in both directions. 
	First, we consider a simplex $\tau=\{(i,y_i)\mid i\in I\}\in\Delta_\epsilon'(\sigma)$, and show
	$\tau\in\Delta_{2\epsilon}(\sigma)$, i.e., 
	(1)~$\min\{x_i:i\in I\}\leq y_{i'}\leq\max\{x_i:i\in I\}$ 
	for all $i'\in I$,
	and 
	(2)~$|y_i-y_{i'}|\leq 2\epsilon$ for all $i,i'\in I$.
	Since $\tau\in\Delta_\epsilon'(\sigma)$,
	the local task $\Pi_{\tau,\sigma}=(\tau,\Delta_\epsilon(\sigma),\D_{\tau,\sigma})$ is solvable in one round, and let $f:\m{P}^{(1)}(\tau)\to \Delta_\epsilon(\sigma)$
	be a simplicial map solving it.
	As before, for every $v=(i,y_i)\in\tau$, $i\in I$, we have $\D_{\tau,\sigma}(v)=\{v\}$ and $v\in \Delta_\epsilon(\sigma)$, 
	i.e., $f(i,\{(i,y_i)\})= (i,y_i)$, and thus property~(1) holds.
	To check property~(2), let us 
	consider three processes $i,j,k\in I$, and let $\gamma$ be such  that  
	$f(k,\{(i,y_i),(j,y_j),(k,y_k)\})=(k,\gamma)$.
	Then
	\[
	\begin{array}{rcl}
		f(i,\{(i,y_i)\}) & = &(i,y_i),\\
		f(j,\{(j,y_j)\}) & = &(j,y_j),\\
		f(k,\{(i,y_i),(j,y_j),(k,y_k)\}) & = &(k,\gamma).
	\end{array}
	\]
	Since $f$ solves the local task $\Pi_{\tau,\sigma}=(\tau,\Delta_\epsilon(\sigma),\D_{\tau,\sigma})$, we have 
	$|y_i-y_j|
	\leq |y_j-\gamma|+|\gamma-y_i|
	\leq 2\epsilon$.
	It follows that $\tau\in\Delta_{2\epsilon}(\sigma)$. 
	
	Conversely, let us consider $\tau=\{(i,y_i)\mid i\in I\}\in\Delta_{2\epsilon}(\sigma)$,
	and let us define
	$f:\m{P}^{(1)}(\tau)\to \Delta_\epsilon(\sigma)$ by setting, for every $i\in I$ and every set $J\subseteq I$ satisfying $i\in J$,
	\begin{equation}
		f(i,\{y_j : j\in J\})= \Big (i,\min\big\{
		\max\{y_j:j\in J\},
		\min\{y_j:j\in J\}+\epsilon\big\}\Big).
		\label{eq:algo-for-3-and-more-proc}
	\end{equation}
	To show that this map $f$ solves the local task $\Pi_{\tau,\sigma}=(\tau,\Delta_\epsilon(\sigma),\D_{\tau,\sigma})$ in one round, it is sufficient to show that  $f$ solves $\epsilon$-approximate agreement whenever the inputs are at distance at most $2\epsilon$ apart from each other. 
	So, let us 
	consider two processes $i,i'$, and let $J,J'$ be the sets of processes that these two processes see in their immediate snapshot, respectively.
	Assume, w.l.o.g., that $J'\subseteq J$ (this must be the case for outputs of immediate snapshot).
	Let
	$y_{\min}$ and $y_{\max}$ be the minimal and maximal values in $\{y_j\mid j\in J\}$, and let
	$y'_{\min}$ and $y'_{\max}$ be the minimal and maximal values in $\{y'_j\mid j\in J'\}$.
	Note that 
	$y_{\min}\leq y'_{\min}\leq y'_{\max}\leq y_{\max}\leq y_{\min}+2\epsilon$.
	If $y'_{\max}\leq y_{\min}+\epsilon$
	then the facts that
	$f(i,\{y_j : j\in J\})\leq y_{\min}+\epsilon$
	and 
	$f(i,\{y_j : j\in J'\})\leq y'_{\max}$
	imply that both processes output values in $[y_{\min},y_{\min}+\epsilon]$, and we are done.
	Otherwise, 
	$y_{\min}+\epsilon< 
	y'_{\max} \leq 
	y_{\max}$, 
	and so process 	
	$i$ outputs $y_{\min}+\epsilon$,
	while process $i'$ outputs a value in 
	$[y_{\min},y_{\min}+2\epsilon]$, and we are done too.
	
	We conclude that $\Delta'_\epsilon$ and $\Delta_{2\epsilon}$ are identical on simplices of all sizes. 
\end{proof}

We are now ready to prove a lower bound for approximate agreement.

\begin{corollary}\label{lem:approx-agree-without-BB}
	The number of rounds required for solving the $\epsilon$-approximate agreement task  in the wait-free IIS model with $n$ processes is at least 
	$\ceil{\log_3(1/\epsilon)}$ if $n=2$, and 
	$\ceil{\log_2(1/\epsilon)}$ if $n>2$.
\end{corollary}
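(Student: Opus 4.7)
The plan is to derive both lower bounds as straightforward consequences of the asynchronous speedup theorem (Theorem~\ref{thm:speedup}), combined with the three claims established above. The key observation is that Claims~\ref{claim:epprox-agree-n=2-one-round} and~\ref{claim:epprox-agree-n>2-one-round} exhibit a very clean behavior of the closure operator on approximate agreement: one round of speedup corresponds to multiplying $\epsilon$ by~$3$ (for $n=2$) or by~$2$ (for $n\geq 3$, using the liberal variant). Iterating the speedup theorem $t$~times must eventually reach a task that is solvable in $0$~rounds, and Claim~\ref{claim:epprox-agree-zero-rounds} tells us exactly when this is impossible.

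For $n=2$, I would suppose for contradiction that $\epsilon$-approximate agreement is solvable in $t$ rounds in the wait-free IIS model $M$. By Claim~\ref{claim:epprox-agree-n=2-one-round}, $\cl_M$ of the $\epsilon$-approximate agreement task is exactly the $(3\epsilon)$-approximate agreement task, so applying Theorem~\ref{thm:speedup} yields a $(t-1)$-round algorithm for $(3\epsilon)$-approximate agreement. Iterating this $t$ times gives a $0$-round algorithm for $(3^{t}\epsilon)$-approximate agreement. By Claim~\ref{claim:epprox-agree-zero-rounds}, this forces $3^{t}\epsilon\geq 1$, i.e., $t\geq\log_3(1/\epsilon)$, and since $t$ is an integer, $t\geq \ceil{\log_3(1/\epsilon)}$.

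For $n\geq 3$, the same iterative argument needs to be applied to the liberal variant, since Claim~\ref{claim:epprox-agree-n>2-one-round} only establishes the closure identity for that version. This is not a loss, because any algorithm for the standard $\epsilon$-approximate agreement task also solves its liberal version (the latter weakens only the specification on $1$-dimensional inputs), so a lower bound for the liberal task transfers immediately to the original. Concretely, assume the liberal $\epsilon$-approximate agreement is solvable in $t$ rounds; by Claim~\ref{claim:epprox-agree-n>2-one-round} and Theorem~\ref{thm:speedup}, liberal $(2\epsilon)$-approximate agreement is solvable in $t-1$ rounds, and iterating $t$ times yields a $0$-round algorithm for liberal $(2^{t}\epsilon)$-approximate agreement. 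Claim~\ref{claim:epprox-agree-zero-rounds} (the $n\geq 3$ liberal case) then forces $2^{t}\epsilon\geq 1$, giving $t\geq \ceil{\log_2(1/\epsilon)}$.

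There is essentially no serious obstacle here: the heavy lifting has already been done in the speedup theorem and the three claims. The only subtlety worth stating explicitly is why working with the liberal version for $n\geq 3$ still produces a valid lower bound for standard approximate agreement, and why Claim~\ref{claim:epprox-agree-zero-rounds} covers the zero-round step for both variants (the liberal relaxation was introduced precisely so that this base case remains impossible). Once these points are acknowledged, the proof is simply a disciplined iteration of Theorem~\ref{thm:speedup}.
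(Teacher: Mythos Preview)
Your proposal is correct and follows essentially the same approach as the paper's own proof: iterate Theorem~\ref{thm:speedup} using Claim~\ref{claim:epprox-agree-n=2-one-round} (for $n=2$) or Claim~\ref{claim:epprox-agree-n>2-one-round} on the liberal version (for $n\geq 3$), then invoke Claim~\ref{claim:epprox-agree-zero-rounds} at round~$0$. The paper's argument is identical, including the observation that a standard $\epsilon$-approximate agreement algorithm solves the liberal version, so the lower bound transfers.
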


\begin{proof}
	Fix $\epsilon$ and $n$, 
	and let $\Pi$ denote the $\epsilon$-approximate agreement task.
	Consider an algorithm solving $\epsilon$-approximate agreement in $t$ rounds.
	If $n=2$, then, by Theorem~\ref{thm:speedup} and Claim~\ref{claim:epprox-agree-n=2-one-round},
	there is an algorithm solving 
	$(3\epsilon)$-approximate agreement in $t-1$ rounds. By repeating this argument $t$ times, we get a $(3^t\epsilon)$-approximate agreement algorithm running in $0$~rounds.
	By Claim~\ref{claim:epprox-agree-zero-rounds},
	we have $3^t\epsilon\geq 1$,
	which immediately implies the lower bound.
	If $n\geq 3$, first note that the alleged algorithm 
	also solves the liberal version of $\epsilon$-approximate agreement in $t$ rounds.
	By the same proof as above, but using  Claim~\ref{claim:epprox-agree-n>2-one-round} instead of Claim~\ref{claim:epprox-agree-n=2-one-round},
	we get
	that there is a $0$-round algorithm solving the liberal version of $(2^t\epsilon)$-approximate agreement,
	implying 
	$2^t\epsilon\geq 1$,
	and the lower bound follows.
\end{proof}

\paragraph{Remark.}

The bounds in Corollary~\ref{lem:approx-agree-without-BB} are known to be tight~\cite{HoestS06}. 
In fact, the tightness of our lower bounds is also directly implied by our proofs, as specified by Equations~\ref{eq:algo-for-2-proc} and~\ref{eq:algo-for-3-and-more-proc}. The former is for two processes, and essentially divides the distances by~3 at each round. The latter is for three or more processes, and essentially halves the distances at each round.

\subsection{Approximate Agreement in the Wait-Free IIS Model Augmented with \TS}

We now turn our attention to the approximate agreement task in an augmented version of the previous model,
namely, we consider a model of iterated immediate snapshot augmented with \TS{}.
It is known that the consensus number of \TS{} is~$2$, 
and, as explained in the previous section, 
in the model of immediate snapshot augmented with \TS,
consensus among two processes can be solved in a single round.
This immediately implies that the simpler task of approximate agreement among two processes is also solvable in a single round in this model.
A more surprising result asserts that with three or more processes, the \TS{} object does not accelerate much the solution of approximate agreement. Indeed, 
using the round reduction theorem for augmented models (Theorem~\ref{thm:speedup-extended}), we show that it still takes at least $\lceil\log_2 1/\epsilon\rceil$ rounds to solve $\epsilon$-approximate agreement.

\begin{theorem}\label{theo:approx-agree-with-ts}
	The number of rounds required for solving the $\epsilon$-approximate agreement task   in the wait-free IIS model augmented with \TS{} among $n\geq 3$ processes is at least $\ceil{\log_2(1/\epsilon)}$.
\end{theorem}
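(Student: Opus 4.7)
The plan is to mirror the proof of Corollary~\ref{lem:approx-agree-without-BB} for the case $n\geq 3$, applied via the extended speedup theorem (Theorem~\ref{thm:speedup-extended}) in the TS-augmented model. As in Section~\ref{sec:lowerApp}, I would work with the liberal version of $\epsilon$-approximate agreement (Definition~\ref{def:liberal}), since $2$-process approximate agreement is solvable in a single round with \TS\/ and so the standard closure would collapse on $1$-dimensional input simplices. Any algorithm for $\epsilon$-approximate agreement solves its liberal version, so a lower bound on the liberal task yields the desired bound.

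The heart of the argument is a TS-analog of Claim~\ref{claim:epprox-agree-n>2-one-round}: if $\Pi$ is the liberal version of $\epsilon$-approximate agreement for $n\geq 3$ processes, then the closure $\cl_M(\Pi)$ with respect to the wait-free IIS model augmented with \TS\/ is the liberal version of $(2\epsilon)$-approximate agreement. For the essential direction, fix $\sigma=\{(i,x_i):i\in I\}\in\m{I}$ with $|I|\geq 3$ and a candidate $\tau=\{(i,y_i):i\in I\}\in\Delta'_\epsilon(\sigma)$, and let $f:\m{P}^{(1)}(\tau)\to\Delta_\epsilon(\sigma)$ be the simplicial map solving the local task. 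In the TS-augmented protocol complex, the solo vertex of process~$\ell$ has the form $\bigl(\ell,(1,\{(\ell,y_\ell)\})\bigr)$ since a solo invoker necessarily wins \TS, and $f$ must map it to $(\ell,y_\ell)$ by Definition~\ref{def:localtask}. For any three distinct $i,j,k\in I$, consider $v_k=\bigl(k,(0,\{(i,y_i),(j,y_j),(k,y_k)\})\bigr)$, arising in the execution where $k$ goes last, losing \TS\/ and seeing all three, and write $f(v_k)=(k,\gamma)$. The edge $\bigl\{\bigl(i,(1,\{(i,y_i)\})\bigr),v_k\bigr\}$ is a valid simplex of $\m{P}^{(1)}(\tau)$, witnessed by the execution in which $i$ runs first (winning \TS), then $j$ (losing \TS), then $k$ (losing \TS, seeing all); the symmetric edge with $j$ in place of $i$ is witnessed analogously. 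Since $f$ agrees with $\D_{\tau,\sigma}$ and sends both edges into $\Delta_\epsilon(\sigma)$, we obtain $|y_i-\gamma|\leq \epsilon$ and $|y_j-\gamma|\leq \epsilon$, so $|y_i-y_j|\leq 2\epsilon$ by the triangle inequality, showing $\tau$ is a legal output of liberal $(2\epsilon)$-approximate agreement.

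For the reverse containment, the averaging rule from Equation~\eqref{eq:algo-for-3-and-more-proc} depends only on the collected views and ignores the \TS\/ output, so it solves the local task in one round of the TS-augmented IIS model whenever input values lie within $2\epsilon$ of one another; hence every simplex of the liberal $(2\epsilon)$-task belongs to $\Delta'_\epsilon(\sigma)$. Iterating Theorem~\ref{thm:speedup-extended} then converts any $t$-round algorithm for liberal $\epsilon$-approximate agreement into a $0$-round algorithm for liberal $(2^t\epsilon)$-approximate agreement, which by Claim~\ref{claim:epprox-agree-zero-rounds} forces $2^t\epsilon\geq 1$, yielding $t\geq\lceil\log_2(1/\epsilon)\rceil$.

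The main obstacle will be rigorously justifying that the two edges required in the closure argument really lie in $\m{P}^{(1)}(\tau)$ for the TS-augmented model. This relies on (i)~the hypothesis that every process invokes every \TS\/ object so that \TS\/ outcomes are determined by the interleaving, (ii)~the consistency convention on black boxes from Section~\ref{sec:extSpeedup}, which guarantees that the \TS\/ outputs appearing in each execution are consistent under projection to the relevant pair of processes, and (iii)~the fact that only one process wins \TS\/ per round, which is not a conflict here because the two witnessing executions ($i$~first versus $j$~first) are considered independently and each edge only involves two processes with compatible \TS\/ outcomes. Crucially, I never need the two edges to span a common triangle, only to be present individually in $\m{P}^{(1)}(\tau)$.
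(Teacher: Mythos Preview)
Your proposal is correct and follows essentially the same approach as the paper's proof. The paper phrases the closure computation using the full triangles $\rho_{i,j,k}$ and $\rho_{j,i,k}$ from the consensus-with-\TS\/ argument (Figure~\ref{fig:n-proc-consensus-not-solvable}) rather than the two edges you describe, but your edges are precisely faces of those triangles, so the two arguments are the same up to presentation; the reverse containment and the final iteration via Theorem~\ref{thm:speedup-extended} and Claim~\ref{claim:epprox-agree-zero-rounds} match the paper verbatim.
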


\begin{proof}
	To establish the theorem,  we first prove an analogue of Claim~\ref{claim:epprox-agree-n>2-one-round},
	for the model of iterated immediate snapshot augmented with \TS. 
	
	\begin{claim}\label{claim:epprox-agree-n>2-one-round-ts}
		If $\Pi=(\m{I},\m{O},\Delta)$ is the liberal version of $\epsilon$-approximate agreement with $n\ge 3$ processes,
		then the closure of $\Pi$ with respect to the wait-free IIS model augmented with \TS\/ is the liberal version of $(2\epsilon)$-approximate agreement.
	\end{claim}
	
	\begin{proof}[Proof of claim]
		Let $\Pi_\epsilon=(\m{I},\m{O}_\epsilon,\Delta_\epsilon)$ be the liberal version of $\epsilon$-approximate agreement for $n\geq 3$ processes, and let $\cl_M(\Pi_\epsilon)=(\m{I},\m{O}'_\epsilon,\Delta'_\epsilon)$ be the closure of $\Pi_\epsilon$ in the considered model~$M$. 
		Let us denote the liberal version of $(2\epsilon)$-approximate agreement by $\Pi_{2\epsilon}=(\m{I},\m{O}_{2\epsilon},\Delta_{2\epsilon})$. 
		We want to show that, for every $\sigma\in\m{I}$, $\Delta'_\epsilon(\sigma)=\Delta_{2\epsilon}(\sigma)$. If $|\sigma|\leq 2$, this equality holds exactly by the same reasoning as in the previous proofs
		(Claim~\ref{claim:epprox-agree-n=2-one-round} for $|\sigma|= 1$ and 
		Claim~\ref{claim:epprox-agree-n>2-one-round} for $|\sigma|= 2$),
		so we focus on the case where $|\sigma|\geq 3$.
		
		Consider an input simplex $\sigma=\{(i,x_i)\mid i\in I\}\in\m{I}$, for some set $I\subseteq[n]$, $|I|\geq 3$.
		We consider a simplex $\tau=\{(i,y_i)\mid i\in I\}\in\Delta_\epsilon'(\sigma)$, and show
		$\tau\in\Delta_{2\epsilon}(\sigma)$, i.e., 
		(1)~$\min\{x_i:i\in I\}\leq y_i\leq\max\{x_i:i\in I\}$ 
		for all $i\in I$,
		and 
		(2)~$|y_i-y_{j}|\leq 2\epsilon$ for all $i,j\in I$.
		Since $\tau\in\Delta_\epsilon'(\sigma)$,
		the local task $\Pi_{\tau,\sigma}=(\tau,\Delta_\epsilon(\sigma),\D_{\tau,\sigma})$ is solvable in one round, and let $f:\m{P}^{(1)}(\tau)\to \Delta_\epsilon(\sigma)$
		be a simplicial map solving it.
		As before, 
		$\tau\subseteq V(\Delta_\epsilon(\sigma))$
		implies property~(1).
		To check property~(2),
		let us focus on the simplices of $\m{P}^{(1)}(\sigma)$ considered in the proof of Corollary~\ref{cor: consensus impossibility with ts} (see also Figure~\ref{fig:n-proc-consensus-not-solvable}):
		\begin{align*}
			\rho_{i,j,k}=\Big\{
			\big(i,(1,\{(i,y_i)\})\big),  \big(j,(0,\{(i,y_i),(j,y_j)\})\big),     
			\big(k,(0,\{(i,y_i),(j,y_j),(k,y_k)\})\big)
			\Big\},\\
			\rho_{j,i,k}=\Big\{
			\big(j,(1,\{(j,y_j)\})\big),  \big(i,(0,\{(i,y_i),(j,y_j)\})\big),     
			\big(k,(0,\{(i,y_i),(j,y_j),(k,y_k)\})\big)
			\Big\}.
		\end{align*}
		Let  $(k,\gamma)=f\big(k,(0,\{(i,y_i),(j,y_j),(k,y_k)\})\big)$ be the output at vertex $\big(k,(0,\{(i,y_i),(j,y_j),(k,y_k)\})\big)$.
		Since $f$ agrees with $\Delta_{\tau,\sigma}$, we have that $f(\rho_{i,j,k})\subseteq\proj_{\{i,j,k\}}(\Delta_\epsilon(\sigma))$, and the definition of $\Delta_\epsilon$ implies $|y_i-\gamma|\leq\epsilon$.
		A similar argument on $f(\rho_{j,i,k})$ implies $|y_j-\gamma|\leq\epsilon$,
		and the triangle inequality hence implies $|y_i-y_j|\leq 2\epsilon$, as desired.
		
		Conversely, let us consider $\tau=\{(i,y_i)\mid i\in I\}\in\Delta_{2\epsilon}(\sigma)$,
		and let us define
		$f:\m{P}^{(1)}(\tau)\to \Delta_\epsilon(\sigma)$
		in a very similar way to the one in the proof of Claim~\ref{claim:epprox-agree-n>2-one-round},
		ignoring the output $t_i$ of the \TS{} object.
		That is, for every $i\in I$ and every set $J\subseteq I$ satisfying $i\in J$, we set
		\[
		f\big(i,(t_i,\{(j,y_j)\mid j\in J\})\big)=\Big (i,\min\big\{
		\max\{y_j:j\in J\},
		\min\{y_j:j\in J\}+\epsilon\big\}\Big).
		\]
		As for the proof of Claim~\ref{claim:epprox-agree-n>2-one-round}, this map $f$ solves $\epsilon$-approximate agreement whenever $|y_i-y_{j}|\leq 2\epsilon$ for every $i,j\in I$, and therefore it solves the local task $\Pi_{\tau,\sigma}=(\tau,\Delta_\epsilon(\sigma),\D_{\tau,\sigma})$ in one round. It follows that $\tau\in\Delta'_\epsilon(\sigma)$. 
		
		The two maps $\Delta'_\epsilon$ and $\Delta_{2\epsilon}$ are therefore identical, which completes the proof of Claim~\ref{claim:epprox-agree-n>2-one-round-ts}. 
	\end{proof}
	
	We complete the proof of Theorem~\ref{theo:approx-agree-with-ts} by the same reasoning as in the proof of 
	Corollary~\ref{lem:approx-agree-without-BB},
	but using 
	Theorem~\ref{thm:speedup-extended} and Claim~\ref{claim:epprox-agree-n>2-one-round-ts}.
	Let us fix $\epsilon>0$ and $n\geq 3$.
	Consider an algorithm solving $\epsilon$-approximate agreement in $t$ rounds---%
	this algorithm also solves the liberal version of
	$\epsilon$-approximate agreement in $t$ rounds.
	By
	Theorem~\ref{thm:speedup-extended} and Claim~\ref{claim:epprox-agree-n>2-one-round-ts},
	there is an algorithm solving 
	the liberal version of $(2\epsilon)$-approximate agreement in $t-1$ rounds,
	and repeating this argument for $t$ times implies an algorithm solving the liberal version of  $(2^t\epsilon)$-approximate agreement algorithm in $0$~rounds. 
	Now, note that Claim~\ref{claim:epprox-agree-zero-rounds} still holds, since the \TS{} object is not used in a zero round algorithm, 
	thus we have $2^t\epsilon\geq 1$,
	which immediately implies the lower bound.
\end{proof}

\subsection{Approximate Agreement in the Wait-Free IIS Model  with Binary Consensus}

\begin{figure}[tb]
	\centering
	\includegraphics[width=8cm]{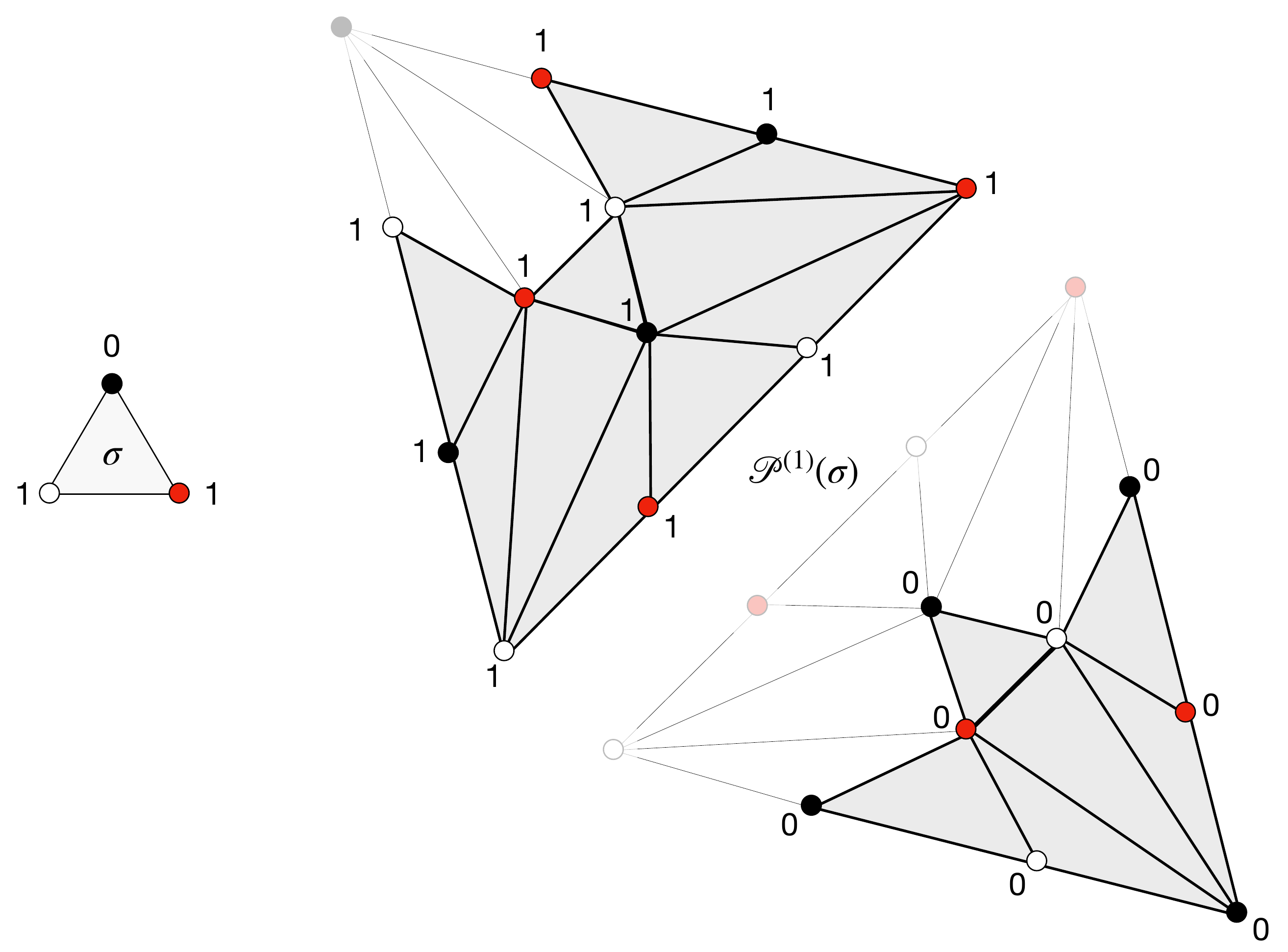}
	\caption{\sl The 1-round protocol complex using binary consensus when the black process calls consensus with input~0, and the other two processes call consensus with input~1.}
	\label{fig:prot-complex-with-bin-consensus}
\end{figure}

In this section, we consider again the approximate agreement task, but in the model of iterated immediate snapshot augmented with binary consensus.
Clearly, in this model, $\epsilon$-approximate agreement is solvable for every $\epsilon$ whenever $n=2$. Figure~\ref{fig:prot-complex-with-bin-consensus} displays the $1$-round protocol complex $\m{P}^{(1)}(\sigma)$ for a $2$-dimensional simplex~$\sigma$.
Here, the vertices of the input simplex~$\sigma$ are labeled with the values with which the processes call the binary consensus object,
and the vertices of the protocol complex $\m{P}^{(1)}(\sigma)$ are labeled with the output values of this object.
The protocol complex $\m{P}^{(1)}(\sigma)$ can thus be viewed as two copies of the chromatic subdivision of~$\sigma$, one for which consensus is~$0$, and the other for which consensus is~$1$. However, in each copy, some simplices are removed, depending of the input given to the consensus object. For instance, in the figure, the black process calls the object with input~$0$, so it must also output~$0$ in a solo execution,
and the vertex where it outputs $1$ in a solo execution is removed. 
Similarly, the white and red processes call the object with input~$1$, so all executions in which only these processes participate result in both processes outputting~$1$. 

We are aware of two techniques for solving approximate agreement using binary consensus, both consist in agreeing on a value one bit at a time.
One technique is solving (exact) consensus in $\lceil\log_2 n\rceil$ rounds, by agreeing on the ID of one of the participating processes, and then deciding on its input.
The other techniques takes $\lceil \log_2 1/\epsilon\rceil$ rounds, and goes as follows. 
At round $r$, each process writes its current value, then evokes the binary consensus object with the $r$-th bit of its current value (starting from the most significant bit), and, finally, updates its current value to one of the proposed values whose $r$-th bit is equal to the output of the consensus object.
Note that the call to the binary consensus object in the first type of algorithms depends only on the process ID, and not on its input or current value. Instead, in the second type of algorithms, the call to the binary consensus object depends solely on the value and not on the process~ID. 
In this section, we give a lower bound that  applies to algorithms of the first type, i.e., algorithms where the call of a process to the binary consensus object depends solely on its ID and round number.

\begin{theorem}\label{thm:approx-agree-with-bc}
	The number of rounds required by $n\geq 3$ processes
	for solving the $\epsilon$-approximate agreement task in the wait-free IIS model augmented with a binary consensus object called with inputs depending only on the process~IDs and round numbers
	is at least $\min\{\ceil{\log_2 1/\epsilon},\ceil{\log_2 n}-1\}$.
\end{theorem}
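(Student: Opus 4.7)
The plan is to iterate the extended speedup theorem (Theorem~\ref{thm:speedup-extended}), with each application simultaneously removing one round, doubling the approximation parameter, and at least halving the number of processes. Let $A$ be a $t$-round algorithm for $\epsilon$-approximate agreement on $n$ processes in the model $M$ under consideration. Since any algorithm in $M$ calls binary consensus with inputs depending only on ID and round, the round-$1$ consensus input $\alpha(i,1)$ is a fixed value for each $i$, and the process set $[n]$ partitions into two classes $S_0,S_1$ accordingly; pick $S\in\{S_0,S_1\}$ with $|S|\geq\ceil{n/2}$ and denote the common round-$1$ input of processes in $S$ by $b$.

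The central claim I would establish is: for input simplices $\sigma$ with $\ID(\sigma)\subseteq S$ and $|\sigma|\geq 3$, the closure $\cl_M(\Pi)$ of the liberal version of $\epsilon$-approximate agreement $\Pi$ is contained in the liberal $(2\epsilon)$-approximate agreement task. Fix such a $\sigma$, an output $\tau\in\Delta'(\sigma)$, and a one-round solver $g\colon\m{P}^{(1)}(\tau)\to\Delta(\sigma)$ for the local task $\Pi_{\tau,\sigma}$. For any three processes $i,j,k\in\ID(\tau)$, I would consider the binary-consensus analogues of $\rho_{i,j,k}$ and $\rho_{j,i,k}$ from the proof of Claim~\ref{claim:epprox-agree-n>2-one-round-ts}, in which $i$ and $j$ respectively run solo at the beginning of the round. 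Because the very first invocation of binary consensus in each such execution is made by a process in $S$, contributing $b$, the decision is locked to $b$; every subsequent caller---all in $\ID(\tau)\subseteq S$ and hence also contributing $b$---receives $b$. Therefore both simplices admit a realization with the same $k$-vertex $(k,(b,\{(i,y_i),(j,y_j),(k,y_k)\}))$; writing its image under $g$ as $(k,\gamma)$, the forced solo images $g(i,(b,\{(i,y_i)\}))=(i,y_i)$ and $g(j,(b,\{(j,y_j)\}))=(j,y_j)$ together with the $\epsilon$-approximate agreement constraint on $\Delta(\sigma)$ yield $|y_i-\gamma|\leq\epsilon$ and $|y_j-\gamma|\leq\epsilon$, hence $|y_i-y_j|\leq 2\epsilon$.

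Combining Theorem~\ref{thm:speedup-extended} with the central claim gives a $(t-1)$-round algorithm (in $M$) for liberal $(2\epsilon)$-approximate agreement on $|S|\geq\ceil{n/2}$ processes; since the speedup construction reuses $A$'s operations in rounds $1,\dots,t-1$, the restriction on $\alpha$ is preserved, so the reduction iterates. After $r$ iterations we obtain a $(t-r)$-round algorithm for liberal $(2^r\epsilon)$-approximate agreement on at least $\ceil{n/2^r}$ processes. Assuming for contradiction $t<\min\{\ceil{\log_2(1/\epsilon)},\ceil{\log_2 n}-1\}$, we get $2^t\epsilon<1$ and $\ceil{n/2^t}\geq 3$, so after $t$ iterations we have a $0$-round algorithm for a nontrivial liberal $(2^t\epsilon)$-approximate agreement task on at least three processes, contradicting Claim~\ref{claim:epprox-agree-zero-rounds} (which extends verbatim since a $0$-round algorithm invokes no consensus instance).

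The main obstacle is the common-image step in the central claim: unlike \TS, binary consensus is a single shared decision object, so one must show that the two scheduling analogues of $\rho_{i,j,k}$ and $\rho_{j,i,k}$ both produce a $k$-vertex with the same consensus tag $b$. The delicate ingredient is the interplay between the hypothesis that every algorithm in $M$---including the local-task solver $g$---has consensus inputs depending only on ID and round, and the construction of $S$ as the majority class of that ID-indexed first-round input function, which together force the first consensus invocation in any $S$-restricted execution to lock the decision to $b$.
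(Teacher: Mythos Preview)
Your overall shape is right, but there is a genuine gap at the heart of the central claim. You construct $S$ from $A$'s \emph{first-round} consensus-input function $\alpha(\cdot,1)$, and then, when analyzing an arbitrary one-round solver $g$ of the local task $\Pi_{\tau,\sigma}$, you assert that processes in $S$ ``contribute $b$'' to $g$'s consensus call. But the closure (Definition~\ref{def:closure}) only guarantees the \emph{existence} of some one-round $g$ in~$M$; that $g$ may call binary consensus with any ID-indexed function $\beta_g$, which has no a priori relation to $A$'s $\alpha(\cdot,1)$. So the step ``the first invocation\dots locks the decision to $b$'' is unjustified: it presumes $\beta_g$ agrees with $\alpha(\cdot,1)$ on~$S$. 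Worse, different $(\sigma,\tau)$ could witness membership in $\Delta'(\sigma)$ via different solvers $g$ with different $\beta_g$'s, so you cannot choose a single $S$ that works for all of them by looking at $g$'s input function either.

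The paper closes this gap by refining the closure: it defines $\cl_M(\Pi\mid\beta)$, where the local task must be solvable by a one-round algorithm whose consensus inputs are the \emph{fixed} function~$\beta$, and then proves a strengthened speedup statement (Claim~\ref{claim:speedup-extended-beta}): if $\Pi$ is solvable in $t$ rounds, then $\cl_M(\Pi\mid\beta)$ is solvable in $t-1$ rounds for $\beta(i)=\alpha(i,t)$ --- the \emph{last}-round input of~$A$, not the first, since the speedup peels off round~$t$ and the canonical $g$ built in the proof of Theorem~\ref{thm:speedup-extended} reuses $\alpha(\cdot,t)$. With $\beta$ fixed once and for all, one can legitimately take $S$ to be the majority class of~$\beta$, and then on inputs $\sigma$ with $\ID(\sigma)\subseteq S$ every process really does feed the same bit to consensus, so the $\rho_{i,j,k}/\rho_{j,i,k}$ argument goes through exactly as you sketch. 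Your iteration and endgame arithmetic are fine once this is in place; the missing piece is the $\beta$-restricted closure and the observation that the relevant $\beta$ comes from the \emph{last} round being removed.
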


\begin{proof}
	Let $n\geq 3$.
	Let  $M$ be the iterated immediate snapshot model augmented with a binary consensus object, 
	with the restriction that the input value used by every process~$i\in[n]$ to call the binary consensus object at each round~$r\geq 1$ may depend only on~$i$ and~$r$. Under these assumptions, we can strengthen the statement of Theorem~\ref{thm:speedup-extended} as follows. Let $\beta:[n]\to\{0,1\}$ be a function. We define the closure of a task~$\Pi=(\m{I},\m{O},\Delta)$ with respect to~$\beta$, denoted by $\cl_M(\Pi|\beta)$, as the task $\Pi=(\m{I},\m{O}',\Delta')$ where, for every simplex $\sigma\in\m{I}$, and for  every chromatic set $\tau\subseteq V(\Delta(\sigma))$, we set 
	$\tau\in \Delta'(\sigma)$ if and only if the local task $\Pi_{\tau,\sigma}=(\tau,\Delta(\sigma),\Delta_{\tau,\sigma})$ is solvable in one round by an algorithm in which, for every $i\in [n]$, process~$i$ calls the binary consensus object with input~$\beta(i)$. The following claim is an analog of Theorem~\ref{thm:speedup-extended} for the closure task with respect to functions~$\beta$. 
	
	\begin{claim}\label{claim:speedup-extended-beta}
		For every $t\geq 1$, if  a task $\Pi$ is solvable in $t$ rounds in the model~$M$, then there exists a function $\beta:[n]\to\{0,1\}$ such that $\cl_M(\Pi|\beta)$ is solvable  in $t-1$ rounds in~$M$.
	\end{claim}
	
	\begin{proof}[Proof of claim]
		The proof is identical to the proof of Theorem~\ref{thm:speedup-extended}, by noticing that, during its $t$-th round, every process~$i$ running the algorithm calls the binary consensus object with input $\alpha(i,t)$, which is independent of the view and depends only on the process~ID and the round. 
		The desired function $\beta$ is therefore merely defined as $\beta(i)=\alpha(i,t)$ for every $i\in[n]$. 
	\end{proof}
	
	By fixing the function $\beta$ used by the processes for calling the binary consensus object, we will be able to show that if the original task is $\epsilon$-approximate agreement, then the closure $\cl_M(\Pi|\beta)$ with respect to~$\beta$ is actually $(2\epsilon)$-approximate agreement whenever only some (large) group of processes participate. More specifically, let $S$ be the largest of the two sets $\beta^{-1}(0)$ and $\beta^{-1}(1)$, with $S=\beta^{-1}(0)$ if they are of equal sizes,
	and note that $|S|\geq n/2$. 
	Intuitively, when only the processes in~$S$ participate, they take no benefit of the consensus object, to which they all input the same value~$a$, and from which they all get the same output~$b=a$. 
	Hence, when only processes in~$S$ participate the closure of $\epsilon$-approximate agreement with respect to~$\beta$ is $(2\epsilon)$-approximate agreement, as is the case without binary consensus object. 
	We then repeat the same argument on $S$, by identifying a set $S'\subseteq S$ that may result from another function, $\beta'$, of size at least $n/4$, such that, whenever only the processes in~$S'$ participate, the closure with respect to~$\beta'$ of the closure of $\epsilon$-approximate agreement with respect to~$\beta$ is $(4\epsilon)$-approximate agreement. 
	At each iteration of this reasoning, we halve the number of processes, and double the precision parameter of the approximate agreement task, until 
	either the number of processes becomes very small
	or the precision parameter becomes very large,
	in which cases the task becomes easier.
	Therefore, the number of iterations of this reasoning is roughly  $\min\{\log_2 1/\epsilon,\log_2n\}$. 
	The exact figures are slightly more complicate because, for two processes, $\epsilon$-approximate agreement does not require at least $\lceil\log_21/\epsilon\rceil$ rounds but $\lceil\log_31/\epsilon\rceil$ rounds in absence of a binary consensus object, and just a single round with such an object.
	
	The following claim is at the core of the proof. 
	
	\begin{claim}\label{claim:epprox-agree-n>4-one-round-bc}
		Let $\Pi=(\m{I},\m{O},\Delta)$ be the  liberal version of $\epsilon$-approximate agreement task on a set
		$S\subseteq [n]$ of $|S|\geq 5$ processes,
		and let $\beta:S\to\{0,1\}$ be any function. 
		Then there is a subset $S'\subseteq S$ of processes, with $|S'|\geq|S|/2$,
		such that  $\cl_{M}(\Pi|\beta)$ is the liberal version of $(2\epsilon)$-approximate agreement task whenever only the processes of $S'$ participate.
	\end{claim}
	
	\begin{proof}[Proof of claim]
		Let $S'\subseteq S$ be the largest of the sets $\beta^{-1}(0)$ and $\beta^{-1}(1)$ (and $\beta^{-1}(0)$ in case of equal sizes). 
		We henceforth assume, without lost of generality, that $S'=\beta^{-1}(0)$. Note that $|S'|\geq 3$, as $|S|\geq 5$.
		Let $\Pi_\epsilon=(\m{I},\m{O}_\epsilon,\Delta_\epsilon)$ be the liberal version of $\epsilon$-approximate agreement task on $S$, and let $\cl_M(\Pi_\epsilon|\beta)=(\m{I},\m{O}'_\epsilon,\Delta'_\epsilon)$. 
		Let us denote the liberal version of $(2\epsilon)$-approximate agreement on $S$ by $\Pi_{2\epsilon}=(\m{I},\m{O}_{2\epsilon},\Delta_{2\epsilon})$. 
		We want to show that, restricted to the processes in~$S'$, the two tasks $\Pi_{2\epsilon}$ and $\cl_M(\Pi_\epsilon|\beta)$ are  identical. Formally, we want to show that, 
		for every $\sigma\in \m{I}$ with $\ID(\sigma)\subseteq S'$, $\Delta'_\epsilon(\sigma)=\Delta_{2\epsilon}(\sigma)$.
		If $\sigma=\{v\}$ is a single vertex, then $\Delta_{2\epsilon}(\{v\})=\{v\}$, and $\Delta'_\epsilon(\{v\})=\{v\}$, and thus $\Delta'_\epsilon(\sigma)=\Delta_{2\epsilon}(\sigma)$ as desired. 
		If $|\sigma|=2$
		then the fact that the liberal version of the tasks does not depend on $\epsilon$ makes the maps equal:
		$\Delta'_\epsilon(\sigma)=
		\{(j,y_j)\mid j\in I,\; \min\{x_i:i\in I\}\leq y_j\leq\max\{x_i:i\in I\}\}=	\Delta_{2\epsilon}(\sigma)$.
		Let us now consider an input simplex $\sigma=\{(i,x_i)\mid i\in I\}\in\m{I}$, for some set $I\subseteq S'$, with $|I|\geq 3$.
		
		First, we show that $\Delta_\epsilon'(\sigma)\subseteq \Delta_{2\epsilon}(\sigma)$. Let $\tau=\{(i,y_i)\mid i\in I\}\in\Delta_\epsilon'(\sigma)$. We want to show that 
		$\tau\in\Delta_{2\epsilon}(\sigma)$, i.e., 
		(1)~$\min\{x_i:i\in I\}\leq y_i\leq\max\{x_i:i\in I\}$ 
		for all $i\in I$,
		and 
		(2)~$|y_i-y_{j}|\leq 2\epsilon$ for all $i,j\in I$.
		Since $\tau\in\Delta_\epsilon'(\sigma)$,
		the local task $\Pi_{\tau,\sigma}=(\tau,\Delta_\epsilon(\sigma),\D_{\tau,\sigma})$ is solvable in one round, using an algorithm in which each process $i\in I$ calls the binary consensus object with input $\beta(i)=0$.  
		Let $f:\m{P}^{(1)}(\tau)\to \Delta_\epsilon(\sigma)$
		be a simplicial map solving the local task $\Pi_{\tau,\sigma}$ where $f:\m{P}^{(1)}(\tau)$ is the 1-round protocol complex in model~$M$ starting from~$\tau$ with input~0 to the binary consensus object.
		Since $\tau\subseteq V(\Delta_\epsilon(\sigma))$, property~(1) holds.
		To check property~(2), note that since all processes in $I\subseteq S'$ call the binary consensus object with the same input value~0, they necessarily all get the same value~0 as output.
		This allows us to follow arguments similar to the ones used in the proof of Claim~\ref{claim:epprox-agree-n>2-one-round}.
		Let us consider three processes $i,j,k\in I$, and let us define $\gamma$ by   
		$
		f(k,(0,\{(i,y_i),(j,y_j),(k,y_k)\}))=(k,\gamma),
		$
		where the~0 in this definition is the output of binary consensus at process~$k$.
		Since $f$ solves the local task~$\Pi_{\tau,\sigma}$, we have 
		$
		f(i,(0,\{(i,y_i)\})) = (i,y_i) \; \mbox{and}\;  f(j,(0,\{(j,y_j)\})) = (j,y_j),
		$
		where, again, the~$0$ values in these equalities are the outputs of binary consensus at processes~$i$ and~$j$, respectively. 
		Since $f$ solves the local task $\Pi_{\tau,\sigma}=(\tau,\Delta_\epsilon(\sigma),\D_{\tau,\sigma})$, we have 
		\[
		\big\{ f(i,(0,\{(i,y_i)\})), \; f(k,(0,\{(i,y_i),(j,y_j),(k,y_k)\})) \big\}\in \Delta_\epsilon(\sigma)
		\]
		and 
		\[
		\big\{ f(j,(0,\{(j,y_j)\})), \; f(k,(0,\{(i,y_i),(j,y_j),(k,y_k)\})) \big\}\in \Delta_\epsilon(\sigma).
		\]
		It follows that 	$|y_i-y_j|
		\leq |y_j-\gamma|+|\gamma-y_i|
		\leq 2\epsilon$, and therefore  $\tau\in\Delta_{2\epsilon}(\sigma)$. 
		
		Conversely, we show that $\Delta_{2\epsilon}(\sigma)\subseteq \Delta_\epsilon'(\sigma)$. Let $\tau=\{(i,y_i)\mid i\in I\}\in\Delta_{2\epsilon}(\sigma)$. We show that the local task $\Pi_{\tau,\sigma}=(\tau,\Delta_\epsilon(\sigma),\D_{\tau,\sigma})$ is solvable in one round, using an algorithm in which each process $i\in I$ calls the binary consensus object with input $\beta(i)=0$.  Since $I\subseteq S'$, the output of the object is necessarily~$0$ at every process~$i\in I$. 
		Let
		$f:\m{P}^{(1)}(\tau)\to \Delta_\epsilon(\sigma)$ be as follows. For every $i\in I$, and every set $J\subseteq I$ satisfying $i\in J$, we set 
		\[
		f(i,(0,\{y_j : j\in J\}))= \Big (i,\min\big\{
		\max\{y_j:j\in J\},
		\min\{y_j:j\in J\}+\epsilon\big\}\Big).
		\]
		This map $f$ solves $\epsilon$-approximate agreement whenever $|y_i-y_j|\leq 2\epsilon$ for any $i,j\in I$, and therefore it solves the local task $\Pi_{\tau,\sigma}=(\tau,\Delta_\epsilon(\sigma),\D_{\tau,\sigma})$ in one round. It follows that $\tau\in\Delta'_\epsilon(\sigma)$.  This completes the proof of Claim~\ref{claim:epprox-agree-n>4-one-round-bc}. 
	\end{proof}
	
	We now have all the ingredients for establishing the lower bound. For this purpose, let $t=\min\{\lceil\log_2n\rceil-1, \lceil\log_2 1/\epsilon\rceil\}$, and let us assume, for the purpose of contradiction, that there exists an algorithm solving the liberal version of $\epsilon$-approximate agreement in $t-1$ rounds. By Claims~\ref{claim:speedup-extended-beta} and~\ref{claim:epprox-agree-n>4-one-round-bc}, this implies the existence of a $(t-2)$-round algorithm solving the liberal version of  $(2\epsilon)$-approximate agreement among a set $S_1\subseteq [n]$ of processes, with $|S_1|\geq n/2$.  By iterating the application of Claims~\ref{claim:speedup-extended-beta} and~\ref{claim:epprox-agree-n>4-one-round-bc}, we eventually get a 0-round algorithm solving the liberal version of  $(2^{t-1}\epsilon)$-approximate agreement among a set $S_{t-1}\subseteq [n]$ of processes, with $|S_{t-1}|\geq n/2^{t-1}$.  Since $t\leq \lceil\log_2n\rceil-1$, using the fact that $\lceil\log_2n\rceil<\log_2n+1$, we get $|S_{t-1}|> \frac{n}{2^{\log_2n-1}}$, and thus $|S_{t-1}|\geq 3$.
	Also, since $t\leq \lceil\log_2 1/\epsilon\rceil$, using similarly the fact 
	that,
	$\lceil\log_21/\epsilon\rceil<\log_21/\epsilon+1$, we get that $2^{t-1}\epsilon<1$. Therefore, we get a 0-round algorithm solving the liberal version of  $\epsilon'$-approximate agreement among a set of at least three processes, where $\epsilon'<1$. This is a contradiction with Claim~\ref{claim:epprox-agree-zero-rounds}, which still holds in the context of the theorem since, in $0$~rounds, the binary consensus object is not used. Therefore, the liberal version of  $\epsilon$-approximate agreement requires at least $t$~rounds to be solved, and so does the (standard version of) $\epsilon$-approximate agreement. 
\end{proof}

\section{Conclusion}

Our results open many interesting questions. 
Our asynchronous speedup theorem is inspired by an analogous speedup technique~\cite{Balliu0HORS19,Brandt19} for the \textsf{LOCAL} model, and we believe that the two forms of speedup theorems
shed light on the differences between asynchronous computation and computation in the \textsf{LOCAL} model. 
In particular, it seems that while in the \textsf{LOCAL} model it is possible to state a generic ``if and only if''
form of a speedup theorem, this is not the case for asynchronous computation in the wait-free model.

We have established our speedup theorem in iterated models.
It would be interesting to extend it to non-iterated models, in which a register can be used multiple times.
This seems especially interesting, since while the iterated and the non-iterated models are known to be equivalent with respect to task solvability~\cite{BorowskyG97,GafniR10,BouzidGK14,ImbsRV15} (even when enriched with some more powerful objects), 
they are not known to be equivalent in terms of time complexity. 
Moreover, our speedup theorem assumes models allowing processes to run solo, as in standard wait-free models, 
and it would be interesting to study extensions to affine models~\cite{KRH18} without this property, such as $t$-resilient models for $t<n-1$. 

We focused on some objects of consensus-number~$2$, such as
\TS\/, and showed that they are useless for solving approximate agreement faster.
We would be interested in extending this result to every consensus-number-2 object.
However, we showed that even objects with
consensus number $\infty$, such as binary consensus, are not useful for solving  $\epsilon$-approximate agreement (at least when $n\geq 1/\epsilon$) faster,
so perhaps the answer is independent of the consensus hierarchy.
And finally, it would be interesting to use the  speedup theorem for problems other than consensus and approximate agreement.

\bibliographystyle{ACM-Reference-Format}
\bibliography{speedup}


\newpage
\appendix
\centerline{\Large \bf A P P E N D I X}

\section{Model}
\label{app:model}


\subsection{Element of Algebraic Topology}

Recall that a \emph{complex} is a collection~$\m{K}$ of non-empty sets, closed under inclusion, i.e., if $\sigma\in\m{K}$ then, for every non-empty set $\sigma'\subseteq \sigma$, $\sigma'\in\m{K}$. Every set in $\m{K}$ is called a \emph{simplex}. A subset of a simplex is called a \emph{face}, and a \emph{facet} of~$\m{K}$ is a face that is maximal for inclusion in~$\m{K}$. The \emph{dimension} of a simplex $\sigma$ is $|\sigma|-1$, where $|\sigma|$ denotes the cardinality of~$\sigma$. The dimension of a complex is the maximal dimension of its facets. A complex in which all facets are of the same dimension is called \emph{pure}. The  \emph{vertices} of $\m{K}$ are all simplices with a single element (i.e., of dimension~0). The set of vertices of a complex~$\m{K}$ are denoted by~$V(\m{K})$. 

All complexes in this paper are \emph{chromatic}, i.e., every vertex is a pair $v=(i,x)$ where $i\in[n]=\{1,\dots,n\}$ for some $n\geq 1$ is the \emph{color} of~$v$, and $x$ is some value (e.g., an input value, an output value, or some value corresponding to a set of data acquired after some computation). Moreover, in a chromatic complex, a ``color''~$i$ must appear at most once in every simplex, that is, a simplex is of the form $\sigma=\{(i,v_i):i\in I\}$ for some non-empty set $I\subseteq [n]$.  
Appendix~\ref{app:3topologies} contains figures depicting some chromatic simplicial complexes; these specific simplices are discussed later.

A \emph{simplicial map} from a complex $\m{K}$ to a complex $\m{K}'$ is a map $f:V(\m{K})\to V(\m{K}')$ preserving simplices, that is, for every simplex  $\sigma$, the set $f(\sigma)$ is a simplex of~$\m{K}'$. All simplicial maps considered in this paper are chromatic, that is, they preserve the colors of the vertices, i.e., for every $(i,x)\in V(\m{K})$, $f(i,x)=(i,y)\in V(\m{K}')$. So, given a chromatic simplicial map from a chromatic complex $\m{K}$ to a chromatic complex $\m{K}'$, for every $\sigma=\{(i,x_i):i\in I\}\in\m{K}$, the set $f(\sigma)=\{f(i,x_i):i\in I\}$ is a chromatic simplex of $\m{K}$ on the same set of colors.  Since simplicial maps sends simplices to simplices, we often write $f:\m{K}\to \m{K}'$ even if~$f$ is actually defined on vertices. 

A \emph{carrier map} from a chromatic complex $\m{K}$ to a chromatic complex $\m{K}'$ is a map $\Delta:\m{K}\to 2^{\m{K}'}$ which maps every simplex $\sigma\in\m{K}$ to a pure sub-complex of~$\m{K}'$ with same dimension and same colors as~$\sigma$ such that, for every $\sigma'\subseteq\sigma$,  $\Delta(\sigma')\subseteq \Delta(\sigma)$, i.e., $\Delta(\sigma')$ is a subcomplex of $\Delta(\sigma)$. 

Any simplex $\sigma$ of a (chromatic) complex~$\m{K}$ can also be viewed as a complex $\bar{\sigma}$ whose simplices are all the faces of~$\sigma$. To avoid overloading the notations, we will omit the bar over $\sigma$ when it is clear from the context whether we consider $\sigma$ as a simplex or as a complex. The same holds for a collection of simplices of a (chromatic) complex~$\m{K}$.

\paragraph{Notations.}

Let $\sigma=\{(i,x_i):i\in I\}$ be a simplex. We denote by $\ID(\sigma)$ the set of colors in~$\sigma$, i.e., $\ID(\sigma)=I$. Indeed, in the following, the color of a vertex is actually the \emph{identity} of a process. For every non-empty set $J\subseteq I$, we denote by $\proj_J(\sigma)$ the simplex $\{(i,x_i):i\in J\}$, i.e., $\ID(\proj_J(\sigma))=J$. That is, $\proj_J(\sigma)$ is the \emph{projection} of $\sigma$ resulting from considering only vertices with colors in~$J$. 

\subsection{Tasks}

We consider distributed systems with  $n\geq 2$ processes, labeled by distinct integers from~1 to~$n$. Every process~$i$ initially know its identity~$i$ as well as the total number~$n$ of processes in the system. A \emph{task} for $n$ processes is a triple $(\m{I},\m{O},\Delta)$ where $\m{I}$ and $\m{O}$ are $(n-1)$-dimensional  complexes, respectively called \emph{input} and \emph{output} complexes, and $\Delta:\m{I}\to 2^{\m{O}}$ is an input-output specification. Every simplex $\sigma=\{(i,x_i):i\in I\}$ of $\m{I}$ defines a legal  input state corresponding to the scenario in which, for every $i\in I$, process~$i$ starts with input value~$x_i$. Similarly, every simplex $\tau={\{(i,y_i):i\in I\}}$ of $\m{O}$ defines a legal output state corresponding to the scenario in which, for every $i\in I$, process~$i$ outputs the value~$y_i$. The map~$\Delta$ is an input-output relation specifying, for every input state $\sigma\in\m{I}$, the set of output states~$\tau\in\m{O}$ with $\ID(\tau)=\ID(\sigma)$ that are legal with respect to~$\sigma$. That is, assuming that only the processes in~$\ID(\sigma)$ participate to the computation (the set of participating processes is not known a priori to the processes in~$\sigma$), there processes are allow to output any simplex $\tau\in\Delta(\sigma)$. It is usually assumed that~$\Delta$ is a carrier map, but in this paper we do not enforce this requirement into the definition. 

%

\subsection{Computational Model}

We consider asynchronous computing in the standard \emph{read-write shared memory} distributed computing model~\cite{AWbook}, focusing on iterated models~\cite{iterated2010}, 
following the topology approach to distributed computing~\cite{bookHerlihyKR2013}.

\subsubsection{Read-Write Shared Memory}

A \emph{process} is a deterministic (infinite) state machine. We consider a collection of $n\geq 2$ \emph{processes} that exchange information by accessing to a shared memory. Processes are labeled from~1 to~$n$, and process~$i$ is aware of its label, called \emph{identifier}~(ID), as well as of the size~$n$ of the system. The shared memory has $n$ distinct single-writer multiple-reader (SWMR) \emph{registers} $R[1],\dots,R[n]$. Each process~$i\in[n]$ can store data in the  shared memory by \emph{writing} in its register~$R[i]$, and no other processes can write in~$R[i]$. On the other hand, for every $i\in[n]$, $R[i]$ can be \emph{read} by any process~$j\in [n]$. 

Since we are focussing on lower bounds, we assume no limits on the computational power of the processes, and on the size of their private memories. Similarly, we assume no limits to the size of the registers in the shared memory. In particular, we assume that no information is ever lost once written in the shared memory.  Finally,  we assume no limits on the amount of information that can be transferred from and to the memory by the processes. In particular, when a process writes in its register, we assume that it  writes the whole content of its private memory. Similarly when it reads a register, we assume that it reads the whole content of that register. 
These assumptions are referred to as the \emph{full-information} model~\cite{bookHerlihyKR2013}. 

Also, since our objective is the design of lower bounds, we assume that each communication, i.e., writing to, or reading from a register, is \emph{atomic}. Similarly, each sequence of internal computation performed by a process between two successive accesses to the shared memory is supposed to be atomic. However, the overall system is supposed to be \emph{asynchronous}. That is, the time at which each atomic operation of a process is executed is arbitrary and unpredictable, apart from the fact that the atomic operations executed by a same process are executed in order.

\subsubsection{Iterated Models}

We adopt a standard approach for designing robust algorithms in shared-memory systems involving processes subject to crash failures, by focussing on algorithms with a generic form, using the  \emph{collect} instruction. 

\paragraph{Collect:}

For every~$i\in [n]$, the  \emph{collect} instruction performed by process~$i$ results in this process reading all registers~$R[j]$ sequentially, for $j=1,\dots,n$, in arbitrary order.

%

Such a generic form of algorithm is 
Algorithm~\ref{alg:generic-model},  for the \emph{iterated} model, in which the shared-memory is organized in arrays $M_r$, $r\geq 1$, of $n$ SWMR registers (one per process) and each round $r$ of the protocol is executed on a the array~$M_r$. At each round, each process~$i$ updates its so-called \emph{view}~$V_i$ by collecting the views of the other processes.  Initially, the view~$V_i$ of process~$i$ is reduced to its own input~$x_i$. Then, a sequence of $t$ rounds are performed, for some $t\geq 0$, where a round is defined as a sequence of two consecutive write and collect operations.  At the end of round~1, the view~$V_i$ of process~$i$ is a set of pairs $(j,x_j)$. It must be the case that  $(i,x_i)\in V_i$ as the collect instruction is performed after the write instruction. For $j\neq i$, it may or may not be the case that $(j,x_j)\in V_i$ depending on whether the write of process~$j$ in~$M_1[j]$ was executed before or after the read of process~$i$ in $M_1[j]$ performed during the collect by process~$i$ of all the registers' content at round~1.  After round~2, the view $V_i$ of process~$i$ is a set of sets of inputs, whose content depends on the input values to the processes, and on the interleaving of the reads and writes performed by the processes during rounds~1 and~2. And so on, until process~$i$ obtains its final view~$V_i$ after~$t$ rounds. At this point, process~$i$ applies some function~$f$ on this view, which results in the output~$y_i$ decided by process~$i$.  The function~$f$ is specific of the algorithm, and distinct $t$-round algorithms differ solely in the function~$f$ applied to the view after $t$~rounds of communication. 

Interestingly, the generic form of round-based algorithms assumes that all internal computations are postponed to the very end of the execution. This is without loss of generality because, as the model is full-information, any algorithm in which internal computations are inserted between write and read operations can be simulated by an algorithm satisfying the generic form. A generic algorithm can tolerate an arbitrary large number (up to $n-1$) of crash failures, as any correct process terminates after $t$ rounds, independently from the number of processes that crash.  

\subsubsection{Snapshot and Immediate Snapshot}

Note that while the read operation is supposed to be atomic, the collect operation is not atomic, for it is just a sequence of $n$ successive read operations, one in each register. So, in particular, other processes can write in between two reads performed by a process, and, as mentioned before, different interleavings of reads and writes result in different views. 
Two classical stronger variants of the collect instruction are considered in this paper, including immediate snapshot used since \cite{BorowskyG93,SaksZ93}, namely: 
\begin{description}
	\item[Snapshot:] the collect operation itself is atomic, that is, all registers  are  read simultaneously at once, in an atomic manner.  
	\item[Immediate Snapshot:] the write-snapshot sequence of operations is itself atomic,  that is, not only all registers are read simultaneously at once, but the snapshot occurs ``immediately'' after the write operation, in an atomic manner. 
\end{description}

It is  easier to reason about algorithms using (immediate) snapshots than with algorithms using sequential collect as the number of interleavings between writes and reads are more restricted when using snapshots than when using collects, as it should appear clear in the next subsection. 

Regarding linearizability, note that while the atomic  read, write, and snapshot instructions may be supposed to occur at different times, two immediate snapshots, say by processes~$i$ and~$j$, may be executed concurrently. In this case, processes~$i$ (resp.,~$j$)  is supposed to read~$R[j]$ (resp., $R[i]$) after process~$j$ (resp.,~$i$) has written in it. 

\subsubsection{Topological transformations} 

Given a simplex~$\sigma=\{(i,x_i):i\in I\}\in \m{I}$ for some $I\subseteq [n]$,  one round of communication performed by the processes in~$\ID(\sigma)$ as in the generic round-based  Algorithm~\ref{alg:generic-model}
results in various possible simplices, depending on the interleaving of the different write and read operations. Such a simplex is of the form $\tau=\{(i,V_i):i\in I\}$, where $V_i=\{(j,x_j):j\in J_i\}$ is the view of process~$i$ after one round. This view, or, equivalently, the set~ $J_i\subseteq I$, depends on the communication model~$\Xi$, i.e., collect, snapshot, or immediate snapshot. These simplices induces a complex, denoted by~$\Xi_1(\sigma)$. Again, this complex has a specific form, which differs according to the three  models considered on this paper. 
To describe $\Xi_1(\sigma)$ in the basic case of collect,  we use the matrix representation of an execution from~\cite{2013TopologyComplexView_K,BenavidesR18}.
Let us consider the set $\collect(I)$ of matrices 
\[
M=\left [ \begin{array}{cccc}
	P_0 & P_1 & \dots & P_r \\
	I_0 & I_1 & \dots & I_r 
\end{array}\right ]
\]
such that (1)~$0\leq r \leq |I|-1$, (2)~$P_s\subseteq I$ for all $s=0,\dots,r$, (3)~$P_0=I$, (4)~$I_0,\dots,I_r$ form a partition of~$I$, and (5) for every $s=0,\dots,r$, $\cup_{j=s}^{r}I_j\subseteq P_s$. The semantic of such a matrix is that every process~$i\in I_s$ for $s\in \{0,\dots,r\}$ has read all input values from the processes in~$P_s$, and therefore its view after one round is $V_i=\{(j,x_j):j \in P_s\}$. The simplex in $\Xi_1(\sigma)$ corresponding to the communication pattern described by~ $M$ can therefore be written as 
$
\tau_M=\{(i,V_i): i\in I\} 
$
where, for every $s\in \{0,\dots,r\}$ and every $i\in I_s$, $V_i=\{(j,x_j):j \in P_s\}$. Given an input simplex $\sigma\in\m{I}$, the complex $\Xi_1(\sigma)$ can now be described as follows, depending on the communication model. 

\begin{description}
	\item[Topological transformation for collect:] $\Xi_1(\sigma)$ is the complex whose facets are all the  simplices $\tau_M$ such that $M\in\collect(\ID(\sigma))$. 
	\item[Topological transformation for snapshot:] $\Xi_1(\sigma)$ is the complex whose facets are all the  simplices $\tau_M$ such that $M\in\collect(\ID(\sigma))$ and, for every $0\leq i< j\leq r$, $P_i\subseteq P_j$ or $P_j\subseteq P_i$.\footnote{This is because, for any two processes $p$ and~$q$, either $p$ performs its snapshot before $q$, or after~$q$. In other words, two views must be comparable according the subset relation, and thus the views form a chain according to the subset order relation.}
	\item[Topological transformation for immediate snapshot:]  $\Xi_1(\sigma)$ is the complex whose facets are all the  simplices $\tau_M$ such that $M\in\collect(\ID(\sigma))$ and, for every $0\leq i \leq r$, every $p\in I_i$, and every $q\in P_i$, if $q\in I_j$ for some $j\in \{0,\dots,r\}$ then $P_j\subseteq P_i$.\footnote{This is because, as a snapshot occur ``immediately after'' its corresponding write,  if process~$p_1$ has process~$p_2$ in its snapshot, and $p_2$ has process~$p_3$ in its snapshot, then $p_1$ has $p_3$ in its snapshots  too --- while this is not necessarily the case with (non immediate) snapshots. }
\end{description} 
Figure~\ref{fig:3topologies} in Appendix~\ref{app:3topologies} displays an example of these three different complexes, for a 2-dimensional simplex~$\sigma$ (i.e., a system with three processes). 

\paragraph{Protocol Complex.}

The operator $\Xi_1$ can be iterated, yielding the sequence $(\Xi_t)_{t\geq 0}$ where, for every simplex~${\sigma\in\m{I}}$, $\Xi_0(\sigma)=\sigma$, and, for every $t\geq 1$, $\Xi_t(\sigma)=\Xi_1(\Xi_{t-1}(\sigma))$.  For every input complex~$\m{I}$, and every $t\geq 0$, the complex $\Xi_t(\m{I})$ is called the \emph{protocol complex} after $t$ rounds. This complex is the union, for all simplices $\sigma\in\m{I}$ of the complex $\Xi_t(\sigma)$ resulting from $t$~rounds starting from input~$\sigma$.

\medskip

Note that for any two input simplices~$\sigma=\{(i,x_i):i\in I\}$ and~$\sigma'=\{(i,x'_i):i\in I\}$, the complexes $\Xi_1(\sigma)$ and $\Xi_1(\sigma')$ are isomorphic. We denote by $\chi:\Xi_1(\sigma)\to\Xi_1(\sigma')$ the canonical isomorphism that maps vertex $v=(i,\{(j,x_j):j\in J_i\})$ of  $\Xi_1(\sigma)$ to vertex $\chi(v)={(i,\{(j,x'_j):j\in J_i\})}$ of $\Xi_1(\sigma')$.

\subsection{Task Solvability}

Let $\Xi$ be a round-based full-information model, and let $\Pi=(\m{I},\m{O},\Delta)$ be a task. Recall that a task $\Pi$ is solvable in $t$~rounds if there exists a simplicial map 
$
f:\Xi_t(\m{I})\to \m{O}
$
from the $t$-round protocol complex to the output complex that agrees with $\Delta$, i.e., for every $\sigma\in\m{I}$, 
$
f(\Xi_t(\sigma))\subseteq \Delta(\sigma).
$
Indeed, the simplicial map $f$ is merely the function $f$ used in Algorithm~\ref{alg:generic-model}
for computing the output values of the processes. The  former agrees with the input-output specification~$\Delta$ of the task if and only if the set of outputs computed according to the latter is correct w.r.t. the given input.

\newpage 
\section{The 1-Round Protocol Complex}
\label{app:3topologies}

\begin{figure}[h]
	\centering
	\includegraphics[width=6cm]{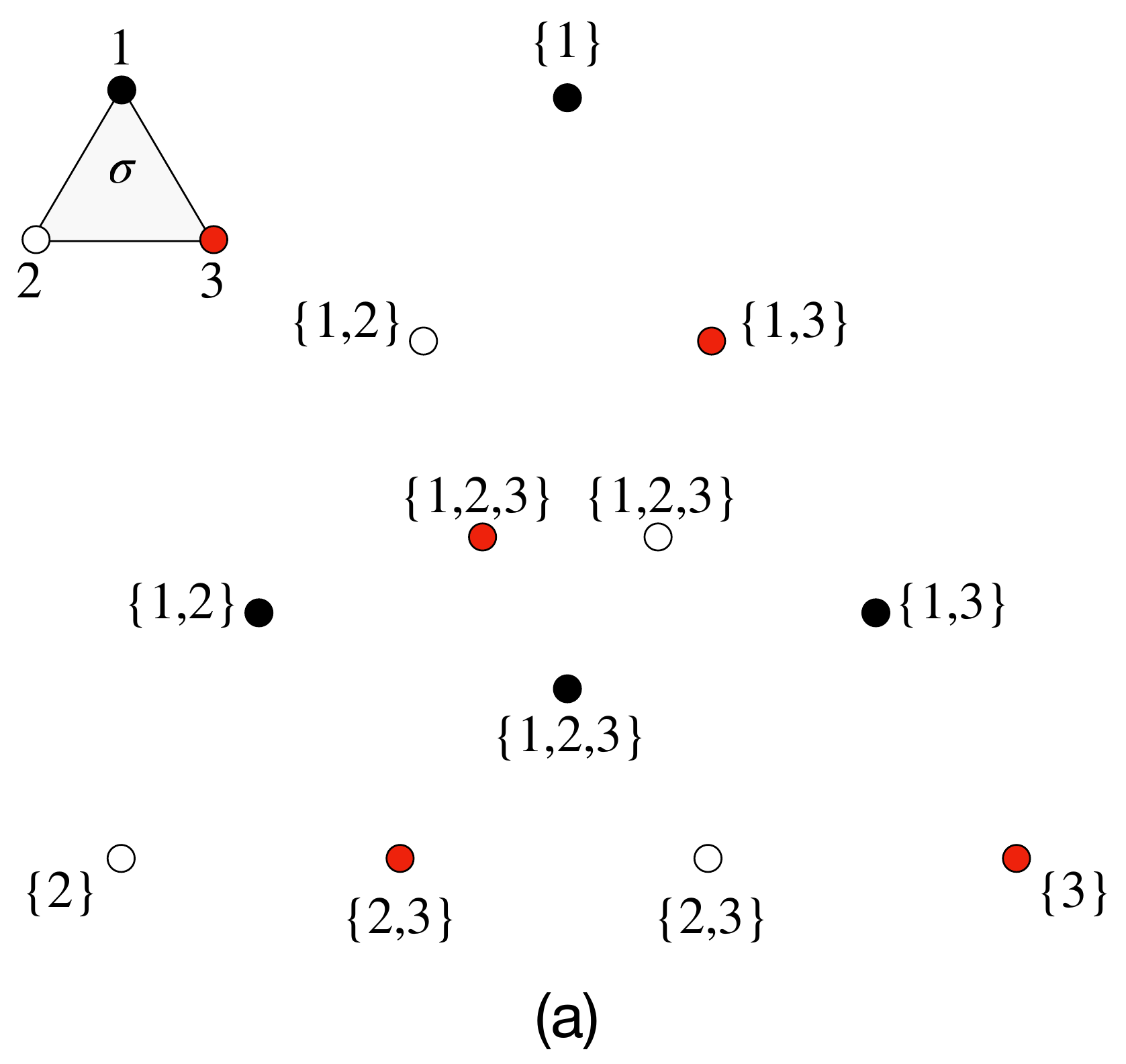}\hspace{1cm} \includegraphics[width=6cm]{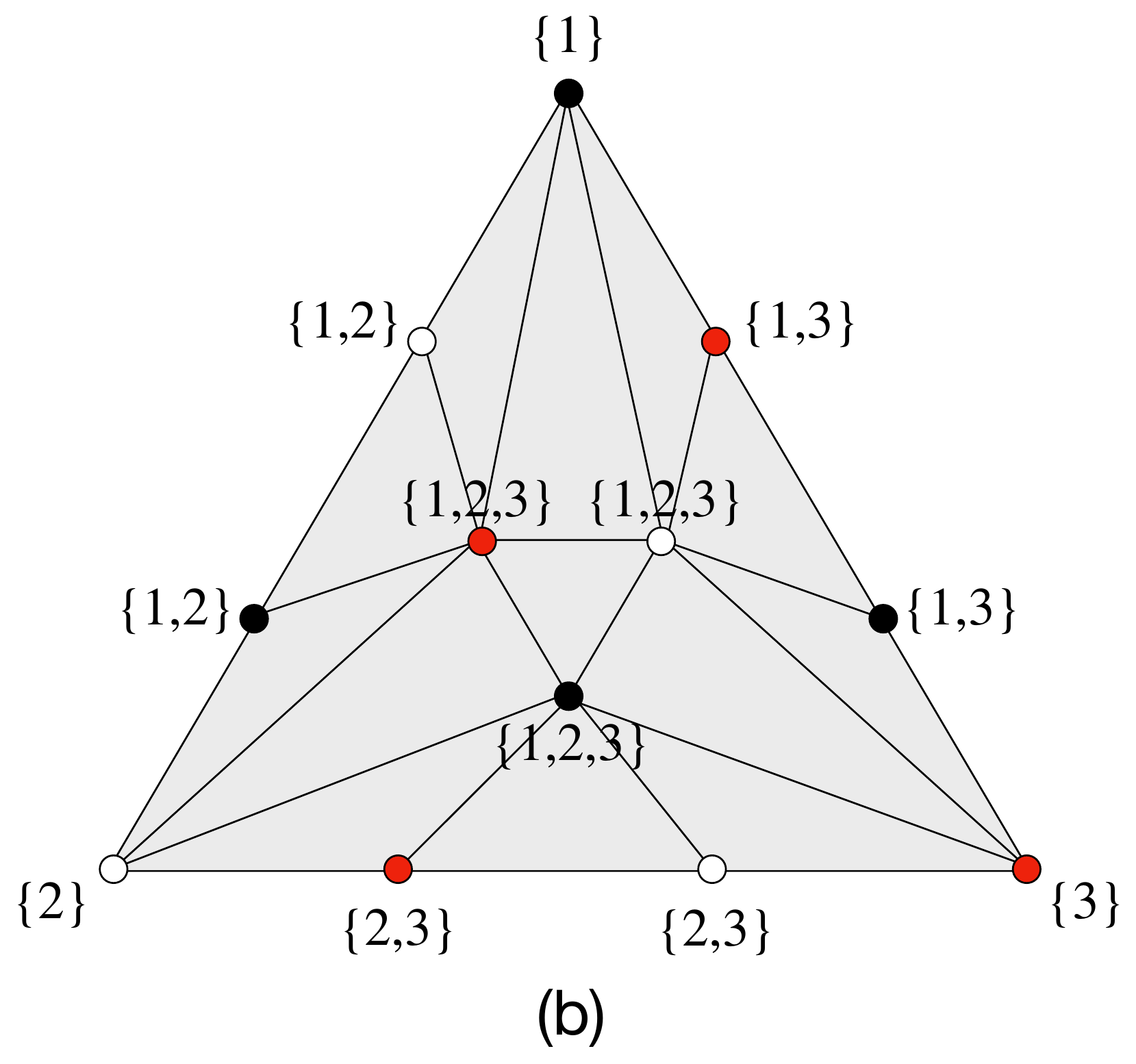}
	
	\vspace{.8cm}
	
	\includegraphics[width=6cm]{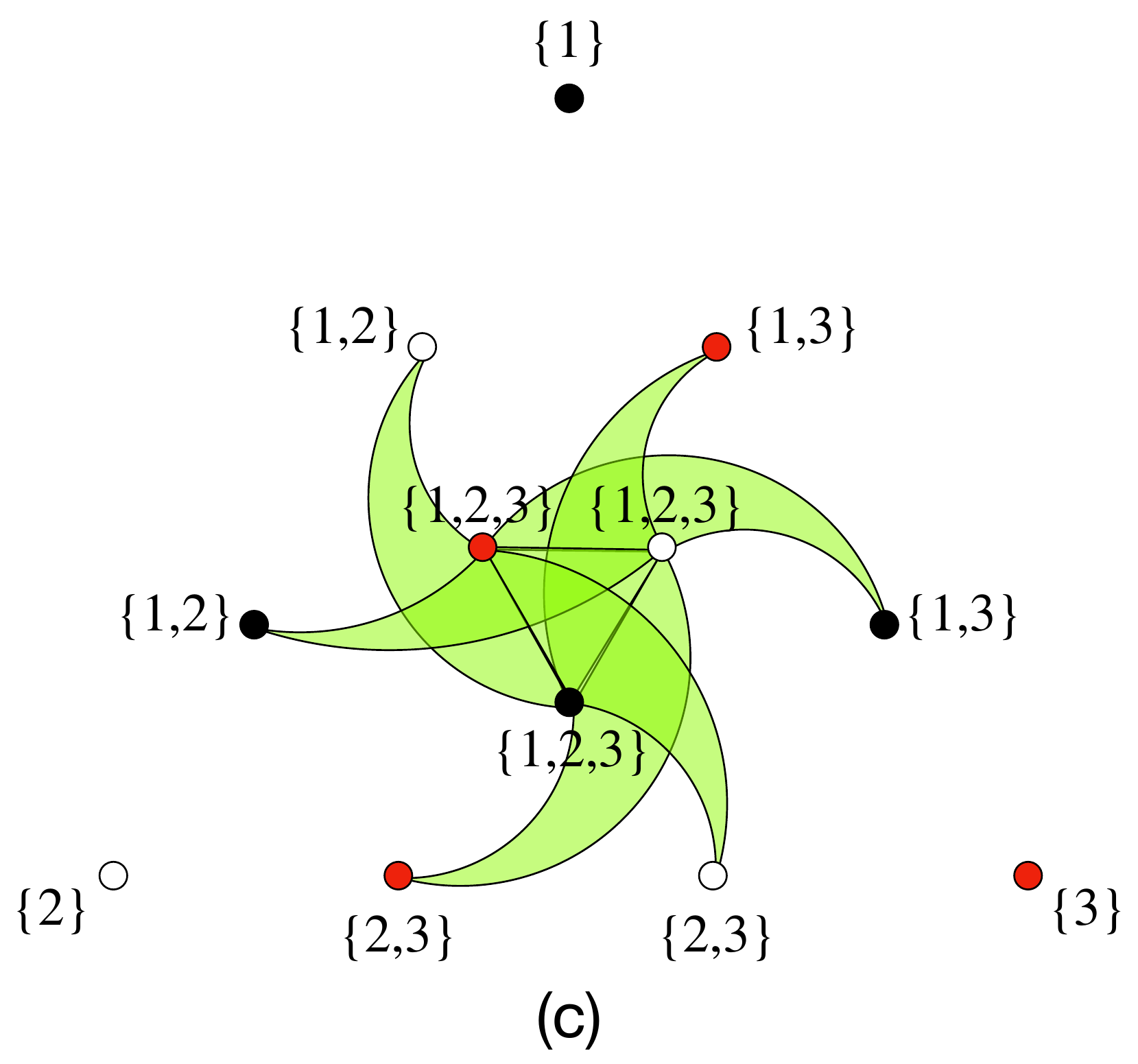} \hspace{1cm} \includegraphics[width=6cm]{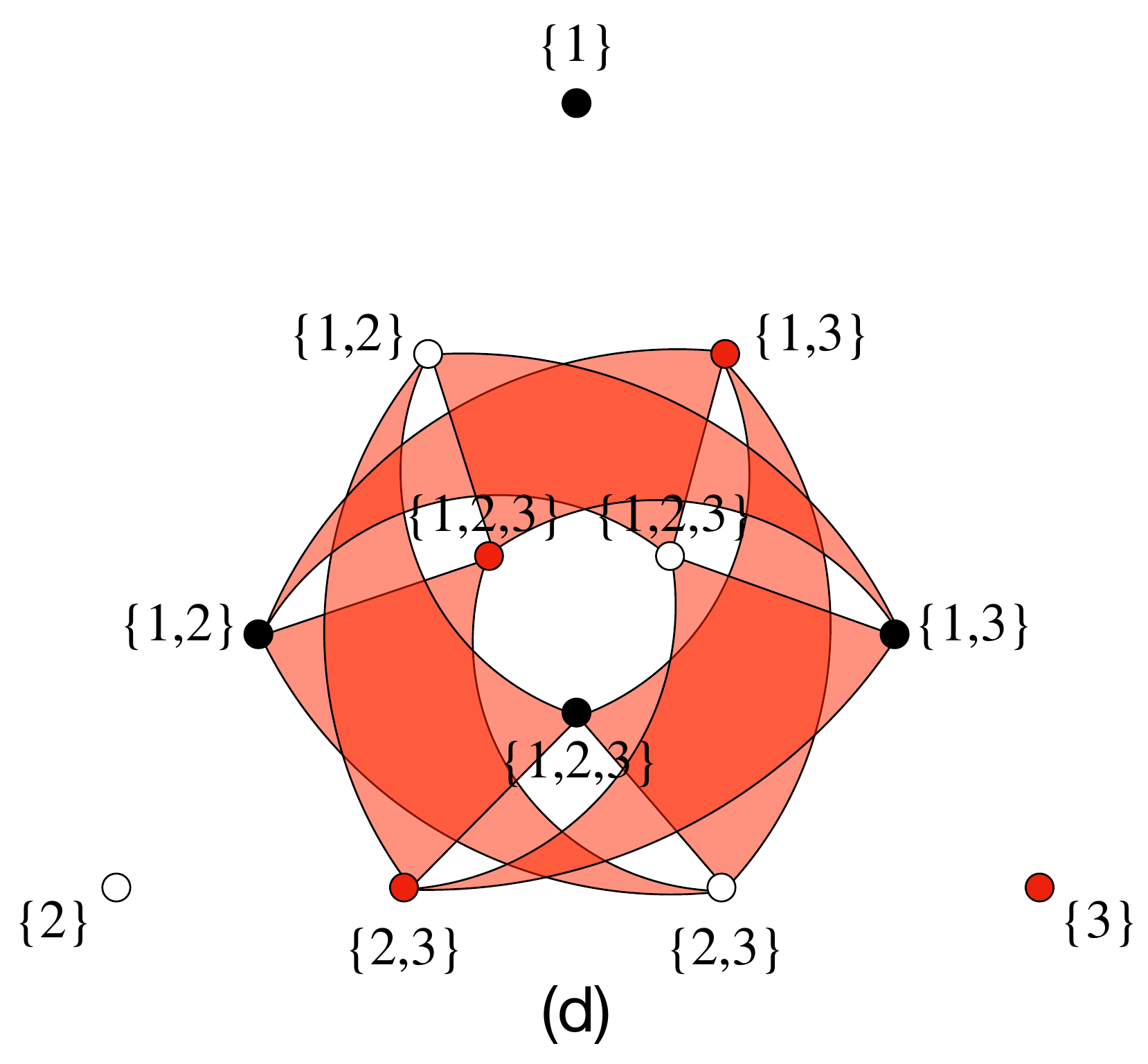}
	
	\caption{\sl 
		{\rm\sf (a)} All possible views after one round starting from the input simplex~$\sigma$ containing three processes (black, white, and red) with respective inputs~1, 2, and~3.  
		{\rm\sf (b)} All simplices in $\Xi_1(\sigma)$ resulting from immediate snapshot --- they result from a chromatic subdivision of~$\sigma$. 
		{\rm\sf (c)} All simplices in $\Xi_1(\sigma)$ resulting from snapshot but not from  immediate snapshot.
		{\rm\sf (d)} All simplices in $\Xi_1(\sigma)$ resulting from collect, but neither from snapshot nor immediate snapshot. 
	}
	\label{fig:3topologies}
\end{figure}

\vfill

\end{document}